\DeclareMathAlphabet{\mathpzc}{OT1}{pzc}{m}{it}
\begin{document}

\theoremstyle{plain}
\newtheorem{theorem}{Theorem}[section]
\newtheorem{lemma}[theorem]{Lemma}
\newtheorem{proposition}[theorem]{Proposition}
\newtheorem{claim}[theorem]{Claim}
\newtheorem{corollary}[theorem]{Corollary}
\newtheorem{axiom}{Axiom}

\theoremstyle{definition}
\newtheorem{remark}[theorem]{Remark}
\newtheorem{remarks}[theorem]{Remarks}
\newtheorem{note}{Note}[section]
\newtheorem{definition}[theorem]{Definition}
\newtheorem{example}[theorem]{Example}
\newtheorem{condition}{\bf Condition}
\newtheorem{assumption}{\bf Assumption}
\renewcommand*{\thecondition}{\Roman{condition}}
\renewcommand*{\theassumption}{\Roman{assumption}}
\renewcommand{\theequation}{\thesection.\arabic{equation}}
\numberwithin{equation}{section}

\let\SETMINUS\setminus
\renewcommand{\setminus}{\backslash}

\def\stackrelboth#1#2#3{\mathrel{\mathop{#2}\limits^{#1}_{#3}}}


\def\F{\ensuremath{\mathcal{F}}}
\def\bA{\mathbb{A}}
\def\bD{\mathbb{D}}
\def\bF{\mathbb{F}}
\def\bG{\mathbb{G}}
\def\bH{\mathbb{H}}
\def\R{\ensuremath{\mathbb{R}}}
\def\N{\ensuremath{\mathbb{N}}}
\def\Rp{\mathbb{R}_{\geqslant0}}
\def\Rm{\mathbb{R}_{\leqslant 0}}
\def\C{\ensuremath{\mathbb{C}}}
\def\U{\ensuremath{\mathcal{U}}}
\def\I{\mathcal{I}}

\def\P{\ensuremath{\mathrm{I\kern-.2em P}}}
\def\Q{\mathbb{Q}}
\def\E{\ensuremath{\mathrm{I\kern-.2em E}}}

\def\e{\mathrm{e}}
\def\ud{\ensuremath{\mathrm{d}}}
\def\dt{\ud t}
\def\ds{\ud s}
\def\dx{\ud x}
\def\dy{\ud y}
\def\du{\ud u}
\def\dv{\ud v}
\def\dsdx{\ensuremath{(\ud s, \ud x)}}
\def\dtdx{\ensuremath{(\ud t, \ud x)}}

\newcommand{\cA}{{\mathcal{A}}}
\newcommand{\cB}{{\mathcal{B}}}
\newcommand{\cD}{{\mathcal{D}}}
\newcommand{\cE}{{\mathcal{E}}}
\newcommand{\cF}{{\mathcal{F}}}
\newcommand{\cG}{{\mathcal{G}}}
\newcommand{\cH}{{\mathcal{H}}}
\newcommand{\cK}{{\mathcal{K}}}
\newcommand{\cM}{{\mathcal{M}}}
\newcommand{\cN}{{\mathcal{N}}}
\newcommand{\cO}{{\mathcal{O}}}
\newcommand{\cP}{{\mathcal{P}}}
\newcommand{\cT}{{\mathcal{T}}}
\newcommand{\ha}{{\mathbb{H}}}
\newcommand{\Mloc}{\mathcal{M}_{\text{loc}}}
\newcommand{\tMloc}{\mathcal{M}_{\text{\emph{loc}}}}
\newcommand{\indik}{{\mathbf{1}}}
\newcommand{\ifdefault}[1]{\ensuremath{\mathbf{1}_{\{#1  > \tau\}}}}
\newcommand{\ifnodefault}[1]{\ensuremath{\mathbf{1}_{\{#1 \leq \tau\}}}}
\newcommand{\be}{\begin{equation}}
\newcommand{\ee}{\end{equation}}
\newcommand{\lsi}{\left[\negthinspace\left[}
\newcommand{\rsi}{\right]\negthinspace\right]}
\newcommand{\lsir}{\left(\negthinspace\left(}
\newcommand{\rsir}{\right)\negthinspace\right)}
\newcommand{\mutilde}{\tilde{\mu}}
\newcommand{\Wtilde}{\widetilde{W}}
\newcommand{\What}{\widehat{Y}}
\newcommand{\Zhat}{\widehat{Z}}
\newcommand{\citeN}{\citet}

\title[Asset price models with a change point]{Information, no-arbitrage and completeness \\ for asset price models with a change point}

\author{Claudio Fontana}

\address{C. Fontana, M. Jeanblanc - Laboratoire Analyse et Probabilité, Universit\'e d'\'Evry Val d'Essonne, 23 boulevard de France, F-91037 \'Evry Cedex, France}
\email{claudio.fontana@univ-evry.fr, monique.jeanblanc@univ-evry.fr}

\author{Zorana Grbac}

\address{Z. Grbac - Laboratoire de Probabilit\'es et Mod\`eles Al\'eatoires, Universit\'e Paris Diderot, Case 7012, 
75205 Paris Cedex 13, France}
\email{grbac@math.univ-paris-diderot.fr}

\author{Monique Jeanblanc}

\author{Qinghua Li}

\address{Q. Li - Institute of Mathematics, Humboldt University Berlin, Unter den Linden 6, 10099 Berlin}
\email{ms.qinghuali@gmail.com}

\thanks{This research benefited from the support of the \emph{Chaire Risque de Crédit}, Fédération Bancaire Française. The research of C.F. was supported by a Marie Curie Intra European Fellowship within the 7th European Community Framework Programme. Z.G. gratefully acknowledges the financial support from the DFG Research Center MATHEON. Useful comments by two anonymous referees are gratefully acknowledged}

\begin{abstract}
We consider a general class of continuous asset price models where the drift and the volatility functions, as well as the driving Brownian motions, change at a random time $\tau$.
Under minimal assumptions on the random time and on the driving Brownian motions, we study the behavior of the model in all the filtrations which naturally arise in this setting, establishing martingale representation results and characterizing the validity of the NA1 and NFLVR no-arbitrage conditions.
\end{abstract}

\keywords{Enlargement of filtration, martingale representation, random time, change point, regime switching, arbitrage of the first kind, free lunch with vanishing risk.}
\subjclass[2010]{ 60G40, 60G44, 91B25, 91B70, 91G10.}
\date{\today}\maketitle\pagestyle{myheadings}\frenchspacing

\section{Introduction}

    The behavior of financial asset prices is often subject to certain random events that result in abrupt changes in their dynamics. The random time when such an event occurs is called a change point. For example, a sudden adjustment in the interest rates, a default of a major financial institution, or the release of some political news could all have an impact on the asset price. Although these events are not caused by the price evolution of the individual asset, their occurrence may change the asset price dynamics. In this case the change point is said to be \emph{exogenous} to the model.  On the other side, events linked to the assets themselves can also cause a change in their dynamics, e.g. an asset price crossing a certain threshold. Such a change point is said to be \emph{endogenous} to the model.

    In this paper we study a general class of models with a change point, which are able to capture both of the above-mentioned situations. The dynamics of the asset price is modeled as a stochastic exponential of a  continuous semimartingale, whose
    drift, volatility and driving Brownian motions change at the random time. Only minimal assumptions on the non-negative random variable
    mathematically representing the change point are imposed, in the sense that if such assumptions are violated, then the semimartingale property of the driving Brownian motions may be lost when changing filtrations and pathological forms of arbitrage are possible. The drift and the volatility are stochastic and depend on time and on the current state of the process and the two Brownian motions are not necessarily mutually independent. In particular, the random time is allowed to depend on the Brownian motions.

     We aim at understanding the structure and the behavior of this class of models in all the filtrations which naturally arise in this setting. These filtrations represent different levels of information, starting from the minimal knowledge of only the asset price process up to some time $t$ to the full knowledge of the driving Brownian motions up to time $t$ together with the knowledge of the random time already at time $t=0$. The different filtrations are obtained either as progressive or initial enlargements of a reference filtration with respect to the random time.
Using enlargement of filtration techniques, we study the no-arbitrage properties of the model in each filtration. In particular, we characterize the absence of arbitrages of the first kind, which is equivalent to a square-integrability property of the market price of risk process. In turn, the latter condition allows us to obtain martingale representation theorems. Combining these results we give for all filtrations a complete characterization of all equivalent local martingale measures and of market completeness.
Even though we are not specifically concerned with the detection of the change point, we find that in the
     case when the two volatility regimes are distinct, the random time is actually a stopping time with respect to the price process filtration. On the other side, if the two volatility regimes coincide everywhere, then the change point is not observable.

 Our study of this type of models was inspired by a recent paper by \citeN{CawstonVostrikova11}, where an exponential asset price model driven by two independent L\'evy processes and with an independent change point is developed and analyzed in the context of utility maximization. In comparison with that paper, we refrain from imposing any independence assumption, giving a complete description of the model in all possible filtrations in a Brownian setting. 
We refer to \citeN{CawstonVostrikova11} for a comprehensive literature review on change point models (which have been extensively studied by A. N. Shiryaev and co-authors, see e.g. \citeN{Shiryaev09}), as well as their applications. The focus in these papers is often on the problem of (quickest) detection of the change point.

  The model presented in this paper can also be seen as a regime switching model. By regime switching models we mean models in which the drift and the volatility of the price
  process are functions of a process taking finitely many values, which are interpreted for example as states of the economy. This underlying state process is usually assumed to be a Markov chain in order to ensure the analytical tractability of the model. A special case of our model, in which the two Brownian motions are assumed to be the same and the random time is the first jump time to an absorbing state 1 of a Markov process with two states $\{0, 1\}$, is interpreted as a regime switching model according to the above definition.
Regime switching models have been widely employed in statistics and financial modeling, see for instance the volume \citeN{ElliottMamon07}, the survey paper \citeN{Guo04} and, for econometric applications, Chapter 22 of \citeN{Hamilton94} and the references therein.

The paper is organized as follows. In Section 2 we introduce the notation and the general setting of the model. Section 3 proves the well-posedness of the main SDE defining the model and studies the properties of the model in two progressively enlarged filtrations. 
In Section 4 we analyze the model in its own filtration.  
In particular, we study two special cases where the volatility functions differ or coincide everywhere, respectively.  
Section 5 is dedicated to the study of the properties of the model in two initially enlarged filtrations. 
Finally, Section 6 concludes by pointing out possible generalizations and applications.

\section{General setting and preliminaries}	\label{sect-descr}

Let $(\Omega,\cA,P)$ be a given probability space, with $P$ denoting the physical/statistical probability measure, and let $T\in(0,\infty)$ be a fixed time horizon. We assume that all random variables and stochastic processes introduced in the following are measurable with respect to the $\sigma$-fields $\cA$ and $\cA\otimes\mathcal{B}([0,T])$, respectively. Let $\bF=(\cF_t)_{0\leq t \leq T}$ be a filtration on $(\Omega,\cA,P)$, assumed to satisfy the usual conditions of right-continuity and $P$-completeness. For a given stochastic process $Y=(Y_t)_{0\leq t \leq T}$ on $(\Omega,\cA,P)$ we denote by $\bF^Y=(\cF^Y_t)_{0\leq t \leq T}$ the right-continuous $P$-augmented natural filtration of $Y$.

The random change point is represented by a \emph{random time} $\tau$, i.e., an $\cA$-measurable random variable $\tau:\Omega\rightarrow\R_+$ which is not necessarily an $\bF$-stopping time. Furthermore, we let $W^1=(W^1_t)_{0\leq t\leq T}$ and $W^2 =(W^2_t)_{0\leq t\leq T}$ be two independent Brownian motions on $(\Omega,\cA,\bF,P)$ and, for $i=1,2$, we denote by $\bF^{W^i}$ the natural $P$-augmented filtration of $W^i$.

We consider a financial market with one risky asset and one riskless asset.
As usual in the literature, we take the riskless asset as the numéraire and directly pass to discounted quantities. We denote by $S=(S_t)_{0\leq t \leq T}$ the discounted price process of the risky asset and suppose that $S$ can be represented as follows, for some initial value $S_0\in(0,\infty)$:
\begin{align}	\label{S}
S = S_0\,\cE(X),
\end{align}
where $\cE(\cdot)$ denotes the stochastic exponential and $X$ is described by the SDE
\begin{align}
\nonumber \ud X_t & = \left( \ifnodefault{t} \mu^1(t, X_t) +  \ifdefault{t} \mu^2(t, X_t)\right)\dt \\
  & \quad + \ifnodefault{t} \sigma^1(t, X_t) \,\ud W^1_t
  + \ifdefault{t} \sigma^2(t, X_t) \left(\rho\,\ud W^1_t+\sqrt{1-\rho^2}\,\ud W^2_t\right)
  \label{model-2}\\
\nonumber X_0 & = 0.
\end{align}
where $\rho\in[-1,1]$ is a correlation parameter.
The well-posedness of the above SDE, as well as the existence and uniqueness of a solution, will be proved in Section \ref{bG} on a suitable filtered probability space (see Proposition \ref{solution}).
The functions $\mu^i:[0,T]\times\R\rightarrow\R$ and $\sigma^i:[0,T]\times\R\rightarrow(0,\infty)$, for $i=1,2$, are Borel-measurable and are assumed to satisfy the following condition.

\begin{condition}	\label{B}
The functions $\mu^i:[0,T]\times\R\rightarrow\R$ and $\sigma^i:[0,T]\times\R\rightarrow(0,\infty)$, for $i=1,2$, satisfy the following conditions:
\begin{itemize}
\item[(a)]
there exists a constant $K>0$ such that:
$$
|\mu^i(t,x)-\mu^i(t,y)| \leq K|x-y|,
\qquad \forall t\in[0,T], \,\forall x,y\in\R, \,\text{ for } i=1,2;
$$
$$
|\sigma^i(t,x)-\sigma^i(t,y)| \leq K|x-y|,
\qquad \forall t\in[0,T], \,\forall x,y\in\R, \,\text{ for } i=1,2;
$$
\item[(b)]
the function $(t,x)\mapsto\sigma^i(t,x)$ is jointly continuous in $(t,x)\in[0,T]\times\R$, for $i=1,2$.
\end{itemize}
\end{condition}

Part (a) of Condition \ref{B} consists of the usual global Lipschitz conditions on the functions $\mu^i$ and $\sigma^i$ appearing in the SDE \eqref{model-2}, while part (b) is needed for technical reasons.

\begin{remarks}	\label{growth}
\textbf{1)}
As can be easily verified, part (a) of Condition \ref{B} implies that there exists a constant $\bar{K}>0$ such that the usual growth conditions hold:
$$	\begin{gathered}
|\mu^i(t,x)|^2\leq\bar{K}\left(1+x^2\right),
\qquad \forall t\in[0,T], \forall x\in\R, \,\text{ for }i=1,2;	\\
\bigl(\sigma^i(t,x)\bigr)^2\leq\bar{K}\left(1+x^2\right),
\qquad \forall t\in[0,T], \forall x\in\R, \,\text{ for }i=1,2.
\end{gathered} $$

\textbf{2)}
Observe that $\bF^S$ coincides with the filtration $\bF^X$ generated by the process $X$, meaning that $\cF^S_t=\cF^X_t$ for all $t\in[0,T]$. Indeed, due to \eqref{S}, it is evident that $\cF^S_t\subseteq\cF^X_t$. On the other hand, we have $X_t=x+\int_0^t\!S_u^{-1}\ud S_u$ and, hence, we also have $\cF^X_t\subseteq\cF^S_t$ for all $t\in[0,T]$.
\end{remarks}

As mentioned in the introduction, we aim at studying the properties of the model \eqref{S}-\eqref{model-2} with respect to different levels of information, mathematically represented by different filtrations on $(\Omega,\cA,P)$. In view of part 2 of Remarks \ref{growth}, we can and do restrict our attention to the study of the behavior of the process $X$ in the following filtrations:
\begin{itemize}
\item[\textbf{(i)}]
the filtration $\bF^X$;
\item[\textbf{(ii)}]
the filtration $\bG^X$, obtained as the \emph{progressive enlargement} of $\bF^X$ with respect to $\tau$ and defined as $\cG^X_t:=\bigcap_{s>t}\bigl(\cF^X_s\vee\sigma(\tau\wedge s)\bigr)$ for all $t\in[0,T]$;
\item[\textbf{(iii)}]
the filtration $\bG$, obtained as the \emph{progressive enlargement} of $\bF$ with respect to $\tau$ and defined as $\cG_t:=\bigcap_{s>t}\bigl(\cF_s\vee\sigma(\tau\wedge s)\bigr)$ for all $t\in[0,T]$;
\item[\textbf{(iv)}]
the filtration $\bG^{X, (\tau)}$, obtained as the \emph{initial enlargement} of $\bF^X$ with respect to $\tau$ and defined as $\cG^{X, (\tau)}_t:=\bigcap_{s>t}\bigl(\cF^X_s\vee\sigma(\tau)\bigr)$ for all $t\in[0,T]$;
\item[\textbf{(v)}]
the filtration $\bG^{(\tau)}$, obtained as the \emph{initial enlargement} of $\bF$ with respect to $\tau$ and defined as $\cG_t^{(\tau)}:=\bigcap_{s>t}\bigl(\cF_s\vee\sigma(\tau)\bigr)$ for all $t\in[0,T]$.
\end{itemize}

The filtrations $\bG^X$ and $\bG$ are the smallest right-continuous filtrations which contain $\bF^X$ and $\bF$, respectively, and make $\tau$ a $\bG^X$-stopping time and a $\bG$-stopping time, respectively. For a detailed account of the theory of enlargement of filtrations we refer the reader to the monograph of \citeN{Jeulin80} and to Chapter VI of \citeN{Protter05}.
It is easy to see that:
$$
\bF^X\subseteq\bG^X\subseteq\bG\subseteq\bG^{(\tau)},
$$
meaning that $\cF^X_t\subseteq\cG^X_t\subseteq\cG_t\subseteq\cG^{(\tau)}_t$ for all $t\in[0,T]$. It is also evident that
$$
\bG^X\subseteq\bG^{X, (\tau)}\subseteq\bG^{(\tau)}.
$$

Intuitively, in the special case where $\bF=\bF^{W^1}\vee\bF^{W^2}$, the different filtrations introduced above correspond to market participants having access to different information sets:
\begin{itemize}
\item[\textbf{(i)}]
$\cF^X_t$: the knowledge of only the price process of the risky asset up to time $t$;
\item[\textbf{(ii)}]
$\cG^X_t$: the knowledge of the price process of the risky asset up to time $t$ plus the knowledge of the random time $\tau$ if the latter has occurred before time $t$;
\item[\textbf{(iii)}]
$\cG_t$: the knowledge of the two driving Brownian motions $W^1$ and $W^2$ up to time $t$ plus the knowledge of the random time $\tau$ if the latter has occurred before time $t$;
\item[\textbf{(iv)}]
$\cG^{X,(\tau)}_t$: the knowledge of the price process of the risky asset up to time $t$ plus the knowledge (already at time $t=0$) of the random time $\tau$;
\item[\textbf{(v)}]
$\cG^{(\tau)}_t$: the knowledge of the two driving Brownian motions $W^1$ and $W^2$ up to time $t$ plus the knowledge (already at time $t=0$) of the random time $\tau$.
\end{itemize}

We shall denote by $^pY$ the predictable projection of a process $Y=(Y_t)_{0\leq t \leq T}$ onto one of the filtrations introduced above (see e.g. \citeN{HeWangYan92}, Section V.1). The filtration onto which we will take projections changes throughout the paper, but this will be made clear in the text.

Let $\bA=(\cA_t)_{0\leq t \leq T}$ be a generic filtration on $(\Omega,\cA,P)$ with respect to which the process $S$ is a semimartingale and $L(S,\bA)$ be the set of all $S$-integrable $\bA$-predictable processes, in the sense of Definition 9.13 in \citeN{HeWangYan92}.
We denote by $\int\!h\,\ud S$ the stochastic integral process $\bigl(\int_0^t\!h_u\,\ud S_u\bigr)_{0\leq t \leq T}$, for $h=(h_t)_{0\leq t \leq T}\in L(S,\bA)$, and by $\Mloc(\bA)$ the family of all $\bA$-local martingales.
Note that, if $S=M+B$ denotes the canonical decomposition of $S$ into $M\in\Mloc(\bA)$ and a continuous $\bA$-predictable process of finite variation $B$, we have $\int\!h\,\ud S=\int\!h\,\ud M+\int\!h\,\ud B$ (see \citeN{HeWangYan92}, Theorem 9.16).

In order to study the no-arbitrage properties of the model \eqref{S}-\eqref{model-2}, we recall the characterizations of two important notions of arbitrage which have been considered in the literature.

\pagebreak

\begin{definition}	\label{def-arb}
\mbox{}
\begin{itemize}
\item[(i)]
We say that \emph{No Arbitrage of the First Kind (NA1)} holds in the filtration $\bA$ if there exists an \emph{$\bA$-local martingale deflator}, i.e., a process $L=(L_t)_{0\leq t \leq T}\in\Mloc(\bA)$ with $L_0=1$ and $L_T>0$ $P$-a.s. and such that $SL\in\Mloc(\bA)$;
\item[(ii)]
We say that \emph{No Free Lunch with Vanishing Risk (NFLVR)} holds in the filtration $\bA$ if there exists an \emph{$\bA$-martingale deflator}, i.e., a process $L=(L_t)_{0\leq t \leq T}\in\Mloc(\bA)$ with $L_0=1$, $L_T>0$ $P$-a.s. and $E[L_T]=1$ and such that $SL\in\Mloc(\bA)$.
\end{itemize}
\end{definition}

Part (i) of Definition \ref{def-arb} is due to \citeN{Kardaras10}, while part (ii) goes back to \citeN{DelbaenSchachermayer94}. In particular, the NA1 condition is weaker than NFLVR and, moreover, can be shown to be the minimal condition for market viability. Note also that martingale deflators correspond to density processes of Equivalent Local Martingale Measures (ELMMs).
We refer to \citeN{FontanaRunggaldier12} for a study of the two no-arbitrage conditions introduced above in the context of general diffusion-based models.

\section{The progressively enlarged filtrations $\bG$ and $\bG^X$}

In this section we study the progressively enlarged filtrations $\bG$ and $\bG^X$. We shall make no assumption on the random time $\tau$ apart from a very weak semimartingale-preservation hypothesis (Condition \ref{C}). We start our analysis with the progressively enlarged filtration $\bG$, which is easier to describe than the filtration $\bG^X$. Moreover, starting with the filtration $\bG$ allows us to prove the well-posedness of the SDE \eqref{model-2}.

\subsection{The progressively enlarged filtration $\bG$}	\label{bG}

The filtration $\bG$ is the smallest filtration satisfying the usual conditions which contains $\bF$ and makes $\tau$ a $\bG$-stopping time. 
However, the $\bF$-Brownian motions $W^1$ and $W^2$ may fail to be $\bG$-semimartingales. The following condition prevents this pathological behavior\footnote{We want to point out that, due to Proposition 4.16 of \citeN{Jeulin80} together with the Kunita-Watanabe inequality, Condition \ref{C} is always satisfied for $t\leq\tau$.}.

\begin{condition}	\label{C}
There exist two $\bG$-predictable processes $\theta^1=(\theta^1_t)_{0\leq t \leq T}$ and $\theta^2=(\theta^2_t)_{0\leq t \leq T}$ and two $\bG$-Brownian motions $\Wtilde^1=(\Wtilde^1_t)_{0\leq t  \leq T}$ and $\Wtilde^2=(\Wtilde^2_t)_{0\leq t \leq T}$ such that:
$$
W^i_t = \Wtilde^i_t+\int_0^t\!\theta^i_u\,\ud u,
\qquad\quad\text{for all }t\in[0,T]\text{ and for }i=1,2.
$$
\end{condition}

Condition \ref{C} can be regarded as a rather weak form of the \emph{($\mathcal{H}'$)-hypothesis} from the theory of enlargement of filtrations, which assumes that all $\bF$-semimartingales are also $\bG$-semimartingales (\citeN{Jeulin80}, Chapter II). Condition \ref{C} can be shown to hold for almost all random time models considered in financial and insurance mathematics (in particular, it is always trivially satisfied in the common case when $\tau$ is a doubly stochastic random time, see Section \ref{immersion}, as well as when the density hypothesis holds, see e.g. \citeN{ElKarouiJeanblancJiao10}).

\subsubsection{Existence and uniqueness of the solution to the SDE \eqref{model-2}}

As long as Condition \ref{C} holds, equation \eqref{model-2} makes sense as a semimartingale-driven SDE on  $(\Omega,\cA,\bG,P)$. This provides a good setting for establishing the existence and uniqueness of a solution. We say that a $\bG$-semimartingale $X=(X_t)_{0\leq t \leq T}$ is a \emph{solution} to the SDE \eqref{model-2} on $(\Omega,\cA,\bG,P)$ if $X_0=0$ and $X$ satisfies equation \eqref{model-2} with respect to the $\bG$-semimartingales $W^1$ and $W^2$. This corresponds to the notion of \emph{strong solution} of a semimartingale-driven SDE, as considered in Chapter V of \citeN{Protter05} (see also \citeN{Jacod79}, Chapter XIV).

\begin{proposition}	\label{solution}
Suppose that Conditions \ref{B} and \ref{C} hold. Then there exists a unique continuous $\bG$-semimartingale $X=(X_t)_{0\leq t \leq T}$ which is a solution to the SDE \eqref{model-2} on $(\Omega,\cA,\bG,P)$.
\end{proposition}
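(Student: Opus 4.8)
The plan is to recast the SDE \eqref{model-2} as a standard stochastic differential equation driven by continuous $\bG$-semimartingales with Lipschitz coefficients, and then to invoke the classical existence and uniqueness theory for such equations (Chapter V of \citeN{Protter05}; see also \citeN{Jacod79}, Chapter XIV). The first step is to observe that Condition \ref{C} is precisely what makes this reformulation legitimate: it guarantees that $W^1$ and $W^2$ are continuous $\bG$-semimartingales, with $\bG$-canonical decompositions $W^i=\Wtilde^i+\int_0^\cdot\theta^i_u\,\du$, so that in particular $\int_0^T|\theta^i_u|\,\du<\infty$ $P$-a.s. Together with the trivial semimartingale $t\mapsto t$, the processes $W^1$ and $W^2$ then form a family of continuous $\bG$-semimartingales against which \eqref{model-2} can be integrated, and one introduces the coefficient operators
\begin{align*}
\Phi_0(Y)_t &:= \ifnodefault{t}\mu^1(t,Y_t)+\ifdefault{t}\mu^2(t,Y_t),\\
\Phi_1(Y)_t &:= \ifnodefault{t}\sigma^1(t,Y_t)+\ifdefault{t}\rho\,\sigma^2(t,Y_t),\\
\Phi_2(Y)_t &:= \ifdefault{t}\sqrt{1-\rho^2}\,\sigma^2(t,Y_t),
\end{align*}
acting on c\`adl\`ag $\bG$-adapted processes $Y$, so that \eqref{model-2} reads $\ud X_t=\Phi_0(X)_t\dt+\Phi_1(X)_t\,\ud W^1_t+\Phi_2(X)_t\,\ud W^2_t$, $X_0=0$.

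The second step is to verify that $\Phi_0,\Phi_1,\Phi_2$ satisfy the hypotheses of the cited theorem, i.e. that they are process Lipschitz operators producing $\bG$-adapted (indeed, after passing to the left-continuous version, $\bG$-predictable) integrands. Non-anticipativity is immediate because $\Phi_k(Y)_t$ depends on $Y$ only through its current value $Y_t$; adaptedness follows because $\tau$ is a $\bG$-stopping time, so that the process $s\mapsto\ifnodefault{s}$ is left-continuous and $\bG$-adapted, hence $\bG$-predictable. The key Lipschitz bound
\[ |\Phi_k(Y)_t-\Phi_k(Y')_t|\leq K\,|Y_t-Y'_t|,\qquad k=0,1,2, \]
holds with the \emph{deterministic} constant $K$ of Condition \ref{B}(a): at each $(t,\omega)$ exactly one regime is active, so the switching indicators merely select one of the globally Lipschitz functions $\mu^i$, $\sigma^i$ and never inflate the Lipschitz constant, while $|\rho|\leq1$ and $\sqrt{1-\rho^2}\leq1$ absorb the correlation factors. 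Finally, the linear growth bounds of Remarks \ref{growth}, combined with $\int_0^T|\theta^i_u|\,\du<\infty$ $P$-a.s., ensure that $\Phi_1(Y)\in L(W^1,\bG)$, $\Phi_2(Y)\in L(W^2,\bG)$ and $\int_0^T|\Phi_0(Y)_u|\,\du<\infty$ for every c\`adl\`ag process $Y$, so that the equation is genuinely well posed.

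With these verifications in hand, the classical existence and uniqueness result for semimartingale-driven SDEs with Lipschitz coefficients produces a process $X$, unique up to indistinguishability among c\`adl\`ag $\bG$-adapted processes, which solves the equation and is therefore automatically a $\bG$-semimartingale. Continuity of $X$ then follows because each of the driving semimartingales $t$, $W^1=\Wtilde^1+\int_0^\cdot\theta^1_u\,\du$ and $W^2=\Wtilde^2+\int_0^\cdot\theta^2_u\,\du$ has continuous paths, so the stochastic integrals defining $X$ inherit continuity; in particular $X_-=X$, which makes the earlier distinction between the raw and the left-continuous versions of the coefficient processes immaterial a posteriori. I do not expect a genuinely hard step here: the substance lies in the reformulation and in the bookkeeping of the second paragraph, and the only places where the specific structure of the problem is really used are the passage to the enlarged filtration $\bG$ --- in which $\tau$ becomes a stopping time, yielding predictability of the switching indicators --- and the weak semimartingale-preservation Condition \ref{C}, without which the $\ud W^i$ integrals would not even make sense in $\bG$. (A more hands-on alternative would solve the equation regime by regime: first the classical SDE with coefficients $(\mu^1,\sigma^1)$ up to $\tau$, then the SDE with coefficients $(\mu^2,\sigma^2)$ after $\tau$ started from $X_\tau$, checking that the two pieces concatenate into a single continuous $\bG$-semimartingale; the unified approach above is shorter.)
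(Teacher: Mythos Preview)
Your proposal is correct and follows essentially the same route as the paper: both recast \eqref{model-2} as a semimartingale-driven SDE on $(\Omega,\cA,\bG,P)$, observe that the switching indicators are $\bG$-predictable because $\tau$ is a $\bG$-stopping time, verify that the resulting coefficients are (random/process) Lipschitz thanks to Condition \ref{B}, and then invoke Theorem V.6 of \citeN{Protter05}. The paper's version is terser---packaging the coefficients into two random functions $f,g$ rather than three operators $\Phi_k$ and omitting the explicit discussion of continuity and integrability---but the substance is identical.
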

\begin{proof}
Since $\tau$ is a $\bG$-stopping time, the processes $\indik_{\lsi0,\tau\rsi}$ and $\indik_{\lsir\tau,T\rsi}$ are $\bG$-predictable, being $\bG$-adapted and left-continuous, and admit limits from the right. Let us define the following random functions, for $\omega\in\Omega$, $t\in[0,T]$, $x\in\R$:
\be	\label{rnd-fcts}	\begin{aligned}
g(\omega,t,x) &:= \indik_{\{t\leq\tau(\omega)\}}\mu^1(t,x)+\indik_{\{t>\tau(\omega)\}}\mu^2(t,x);	\\
f(\omega,t,x) &:= \indik_{\{t\leq\tau(\omega)\}}\sigma^1(t,x)+\indik_{\{t>\tau(\omega)\}}\sigma^2(t,x).
\end{aligned} \ee
Condition \ref{B} implies that $f$ and $g$ are random Lipschitz, in the sense of \citeN{Protter05}, page 256. The existence of a unique solution $X=(X_t)_{0\leq t \leq T}$ to the SDE \eqref{model-2} on $(\Omega,\cA,\bG,P)$ then follows from Theorem V.6 of \citeN{Protter05}.
\end{proof}

\subsubsection{Canonical decomposition and no-arbitrage properties in $\bG$}

Let us now investigate the no-arbitrage properties of the financial market where the asset $S$ is traded with respect to the information contained in the progressively enlarged filtration $\bG$. As a preliminary, we write the canonical decomposition of the process $X=(X_t)_{0\leq t \leq T}$ in the filtration $\bG$:

\be	\label{candec-G}
X_t = \int_0^t\!\mutilde_u\,\ud u+\int_0^t\!V_u\,\ud\Wtilde_u,
\qquad \text{for all }t\in[0,T],
\ee
where the processes $\mutilde=(\mutilde_t)_{0\leq t \leq T}$, $V=(V_t)_{0\leq t \leq T}$ and $\Wtilde=(\Wtilde_t)_{0\leq t \leq T}$ are defined as\footnote{Note that Condition \ref{C} implicitly requires that $\int_0^T\!|\theta^i_u|\ud u<\infty$ $P$-a.s., for $i=1,2$. In turn, due to Condition \ref{B}-(b) together with the continuity of $X$, this implies that $\int_0^t\!\theta^i_u\,\sigma^i(u,X_u)\,\ud u$ is well-defined, for all $t\in[0,T]$ and $i=1,2$.}:
\be	\label{candec-G-1}
\mutilde_t :=
\indik_{\{t\leq\tau\}}\left(\mu^1(t,X_t)+\sigma^1(t,X_t)\theta^1_t\right)
+\indik_{\{t>\tau\}}\left(\mu^2(t,X_t)+\sigma^2(t,X_t)\bigl(\rho\theta^1_t+\sqrt{1-\rho^2}\theta^2_t\bigr)\right);
\ee
\be	\label{candec-G-2}
V_t := \indik_{\{t\leq\tau\}}\sigma^1(t,X_t)+\indik_{\{t>\tau\}}\sigma^2(t,X_t);
\ee
\begin{gather}
\Wtilde_t := 
\Wtilde^1_{t\wedge\tau}+\rho\,(\Wtilde^1_{t\vee\tau}-\Wtilde^1_{\tau})+\sqrt{1-\rho^2}\,(\Wtilde^2_{t\vee\tau}-\Wtilde^2_{\tau})
\label{candec-G-3}
\end{gather}

Since $[\Wtilde]_t=t$, for all $t\in[0,T]$, the continuous $\bG$-local martingale $\Wtilde$ is a $\bG$-Brownian motion. Equation \eqref{candec-G} gives the canonical decomposition of $X$ in $\bG$ and leads to the next proposition, which characterizes the no-arbitrage properties of the model \eqref{S}-\eqref{model-2} in the progressively enlarged filtration $\bG$. We define the $\bG$-predictable process $\bar{\theta}=(\bar{\theta}_t)_{0\leq t \leq T}$ as:
\be	\label{bartheta}
\bar{\theta} := \indik_{\lsi0,\tau\rsi}\theta^1
+\indik_{\lsir\tau,T\rsi}\bigl(\rho\theta^1+\sqrt{1-\rho^2}\theta^2\bigr).
\ee

\begin{proposition}	\label{arb-G}
Suppose that Conditions \ref{B} and \ref{C} hold. Then the following assertions hold for the model \eqref{S}-\eqref{model-2} considered with respect to the filtration $\bG$:
\begin{itemize}
\item[(a)]
NA1 holds if and only if $\int_0^T\!\bar{\theta}_t^2\,\ud t<\infty$ $P$-a.s., with the latter condition being equivalent to $\int_0^T\!(\mutilde_t/V_t)^2\,\ud t<\infty$ $P$-a.s.;
\item[(b)]
NFLVR holds if and only if NA1 holds and there exists $N=(N_t)_{0\leq t \leq T}\in\tMloc(\bG)$ with $N_0=0$, $\Delta N>-1$ $P$-a.s., $[N,\Wtilde]=0$ such that  $E\bigl[\mathcal{E}(-\int(\mutilde/V)\,\ud\Wtilde+N)_T\bigr]=1$.
\end{itemize}
\end{proposition}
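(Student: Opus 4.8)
The plan is to read off the canonical decomposition \eqref{candec-G} of $X$ in $\bG$ and then apply the known characterizations of NA1 and NFLVR for continuous semimartingales (as in \citeN{FontanaRunggaldier12}), being careful about the precise form of the drift relative to the martingale part.

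For part (a), I would start from the observation that, by \eqref{candec-G}, $X$ has the form $X_t=\int_0^t\mutilde_u\,\du+\int_0^t V_u\,\ud\Wtilde_u$ with $\Wtilde$ a $\bG$-Brownian motion and $V_t>0$ $P$-a.s.\ for all $t$ (since $\sigma^1,\sigma^2$ take values in $(0,\infty)$). Recalling $S=S_0\cE(X)$, the well-known criterion is that NA1 holds in $\bG$ if and only if the structure condition holds together with $\int_0^T(\mutilde_t/V_t)^2\,\dt<\infty$ $P$-a.s.; here the market price of risk process is $\mutilde_t/V_t$, and the structure condition (absolute continuity of the drift of $X$ with respect to $\ud\langle\Wtilde\rangle_t=\dt$) is automatic from \eqref{candec-G}. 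So the first equivalence in (a) is essentially the cited result. The remaining point is the algebraic identity $\int_0^T(\mutilde_t/V_t)^2\dt<\infty \iff \int_0^T\bar\theta_t^2\,\dt<\infty$ $P$-a.s. To see this, I would compute $\mutilde_t/V_t$ directly from \eqref{candec-G-1}, \eqref{candec-G-2}: on $\{t\leq\tau\}$ it equals $\mu^1(t,X_t)/\sigma^1(t,X_t)+\theta^1_t=\mu^1(t,X_t)/\sigma^1(t,X_t)+\bar\theta_t$, and on $\{t>\tau\}$ it equals $\mu^2(t,X_t)/\sigma^2(t,X_t)+\rho\theta^1_t+\sqrt{1-\rho^2}\,\theta^2_t=\mu^2(t,X_t)/\sigma^2(t,X_t)+\bar\theta_t$, i.e. $\mutilde_t/V_t=\bar\theta_t+\mu^i(t,X_t)/\sigma^i(t,X_t)$ in the respective regimes. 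Since $(t,x)\mapsto\mu^i(t,x)/\sigma^i(t,x)$ is continuous (by Condition \ref{B}-(b) and $\sigma^i>0$) and $X$ has continuous paths, the process $t\mapsto\mu^i(t,X_t)/\sigma^i(t,X_t)$ is locally bounded, hence $\int_0^T(\mu^i(t,X_t)/\sigma^i(t,X_t))^2\dt<\infty$ $P$-a.s.; therefore $\int_0^T(\mutilde_t/V_t)^2\dt$ and $\int_0^T\bar\theta_t^2\,\dt$ are finite simultaneously, by the elementary inequality $(a+b)^2\leq 2a^2+2b^2$ applied in both directions.

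For part (b), I would invoke the classical Delbaen--Schachermayer characterization: NFLVR holds in $\bG$ if and only if there exists an ELMM for $S$, equivalently a $\bG$-martingale deflator $L$ with $L_0=1$, $L_T>0$, $E[L_T]=1$, $SL\in\Mloc(\bG)$. Assuming NA1 (so that $\int_0^T(\mutilde/V)^2\dt<\infty$ and the minimal local martingale deflator $\widehat L:=\cE(-\int(\mutilde/V)\,\ud\Wtilde)$ is well defined), every local martingale deflator $L$ can be written via the Galtchouk--Kunita--Watanabe-type decomposition in $\bG$ relative to $\Wtilde$: since $SL$ and $L$ are $\bG$-local martingales and $X$ (hence $S$) is driven only by $\Wtilde$, one gets $L=\widehat L\,\cE(N)$ for some $N\in\Mloc(\bG)$ with $N_0=0$, $\Delta N>-1$ (from $L>0$) and $[N,\Wtilde]=0$ (this orthogonality is exactly what makes $SL$ a local martingale; it is the condition ensuring $L$ does not reintroduce a drift in $S$). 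Writing $L=\cE\big(-\int(\mutilde/V)\,\ud\Wtilde+N\big)$ using $\cE(U)\cE(V)=\cE(U+V+[U,V])$ and $[-\int(\mutilde/V)\ud\Wtilde,N]=0$, the requirement $E[L_T]=1$ is precisely $E\big[\cE(-\int(\mutilde/V)\,\ud\Wtilde+N)_T\big]=1$. Conversely, any such $N$ produces via this formula a genuine martingale deflator, hence NFLVR. The remaining technical point is to justify the decomposition $L=\widehat L\,\cE(N)$ with the stated properties of $N$; I would do this by setting $N:=\int \widehat L^{-1}/L_-\,\ud(\text{the part of }L\text{ orthogonal to }\Wtilde)$ more carefully, i.e.\ by first noting $SL\in\Mloc(\bG)$ forces the $\ud\Wtilde$-component of $L$ to be $-\int (\mutilde/V) L_-\,\ud\Wtilde$, then defining $N$ through $L=\widehat L\,\cE(N)$ and checking $[N,\Wtilde]=0$ and $\Delta N>-1$.

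The main obstacle is part (b): carefully establishing the multiplicative decomposition $L=\cE(-\int(\mutilde/V)\,\ud\Wtilde)\,\cE(N)$ with $[N,\Wtilde]=0$ for an \emph{arbitrary} (not necessarily locally square-integrable a priori) $\bG$-local martingale deflator, and verifying that the orthogonality $[N,\Wtilde]=0$ is both necessary and sufficient for $SL\in\Mloc(\bG)$. This is where one must be precise about stochastic-integral representations in the enlarged filtration $\bG$, whose martingales need not all be stochastic integrals with respect to $\Wtilde$ — which is exactly why a general orthogonal term $N$ must be allowed, rather than the completeness-type conclusion one would get in $\bF^{W^1}\vee\bF^{W^2}$.
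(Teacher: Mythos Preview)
Your approach is essentially the paper's: invoke the square-integrability criterion for NA1 (the paper cites Theorem~4 of \citeN{Kardaras10}), show that $\int_0^T(\mutilde_t/V_t)^2\dt$ and $\int_0^T\bar\theta_t^2\dt$ differ by a $P$-a.s.\ finite term, and for part~(b) appeal to the structure of local martingale deflators for continuous semimartingales (the paper simply cites Lemma~4.3.15 of \citeN{FontanaRunggaldier12}, which is exactly the decomposition you sketch). One small correction in part~(a): your claim that $(t,x)\mapsto\mu^i(t,x)/\sigma^i(t,x)$ is jointly continuous is not justified, since Condition~\ref{B}-(b) asserts joint continuity only for $\sigma^i$, not for $\mu^i$; the paper instead uses the growth bound $|\mu^i(t,x)|^2\leq\bar K(1+x^2)$ from Remarks~\ref{growth}-1 together with the pathwise lower bound $\xi:=\min_{t\in[0,T]}\{\sigma^1(t,X_t)\wedge\sigma^2(t,X_t)\}>0$ (which does follow from continuity of $\sigma^i$ and of $X$) to obtain $\int_0^T(\mu^i(t,X_t)/\sigma^i(t,X_t))^2\dt\leq\bar K T\xi^{-2}(1+\max_{t\in[0,T]}X_t^2)<\infty$ $P$-a.s.
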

\begin{proof}
Note first that, due to Conditions \ref{B}-\ref{C}, the process $\mutilde/V$ is well-defined. In view of part (i) of Definition \ref{def-arb} together with Theorem 4 of \citeN{Kardaras10}, NA1 holds if and only if the following condition holds:
$$
\int_0^T\!\left(\frac{\mutilde_t}{V_t}\right)^2\!\ud t
= \int_0^{T\wedge\tau}\!\left(\frac{\mu^1(t,X_t)}{\sigma^1(t,X_t)}+\theta^1_t\right)^2\!\ud t\,
+\int_{\tau}^{T\vee\tau}\!\left(\frac{\mu^2(t,X_t)}{\sigma^2(t,X_t)}+\rho\,\theta^1_t+\sqrt{1-\rho^2}\,\theta^2_t\right)^2\!\ud t
<\infty \text{ $P$-a.s.}
$$
Due to Condition \ref{B}, the continuity of the function $\sigma^i:[0,T]\times\R\rightarrow (0, \infty)$, for $i=1,2$, together with the continuity of $X$, implies that $\xi:=\min_{t\in[0,T]}\bigl\{\sigma^1(t,X_t)\wedge\sigma^2(t,X_t)\bigr\}$ is well-defined and $P$-a.s. strictly positive.
Hence, by using the elementary inequality $(a+b)^2\leq 2a^2+2b^2$ together with Remarks \ref{growth}-1, we can write:
$$	\begin{aligned}
\int_0^T\!\left(\frac{\mutilde_t}{V_t}\right)^2\!\ud t
&\leq 2\left(\int_0^{T\wedge\tau}\!\left(\frac{\mu^1(t,X_t)}{\sigma^1(t,X_t)}\right)^2\!\ud t
+\int_{\tau}^{T\vee\tau}\!\left(\frac{\mu^2(t,X_t)}{\sigma^2(t,X_t)}\right)^2\!\ud t\right)
+2\int_0^T\!\bar{\theta}^2_t\,\ud t	\\
&\leq \frac{2\,\bar{K}}{\xi^2}\int_0^T(1+X^2_t)\,\ud t
+2\int_0^T\!\bar{\theta}^2_t\,\ud t	
\leq \frac{2\,\bar{K}\,T}{\xi^2}\Bigl(1+\underset{t\in[0,T]}{\max}X^2_t\Bigr)+2\int_0^T\!\bar{\theta}^2_t\,\ud t.
\end{aligned} $$
Analogously, using the elementary inequality $(a+b)^2\geq b^2/2-a^2$:
$$	\begin{aligned}
\int_0^T\!\left(\frac{\mutilde_t}{V_t}\right)^2\!\ud t
&\geq \frac{1}{2}\int_0^T\!\bar{\theta}^2_t\,\ud t
-\int_0^{T\wedge\tau}\!\left(\frac{\mu^1(t,X_t)}{\sigma^1(t,X_t)}\right)^2\!\ud t
-\int_{\tau}^{T\vee\tau}\!\left(\frac{\mu^2(t,X_t)}{\sigma^2(t,X_t)}\right)^2\!\ud t	\\	
&\geq \frac{1}{2}\int_0^T\!\bar{\theta}^2_t\,\ud t
-\frac{\bar{K}\,T}{\xi^2}\Bigl(1+\underset{t\in[0,T]}{\max}X^2_t\Bigr).
\end{aligned}	$$
Since $X$ is continuous, the above inequalities show that $\int_0^T\!(\mutilde_t/V_t)^2\,\ud t<\infty$ $P$-a.s. if and only if $
\int_0^T\!\!\bar{\theta}^2_t\,\ud t<\infty$ $P$-a.s., thus proving part (a).
Part (b) can then be easily proved by relying on part (ii) of Definition \ref{def-arb} together with Lemma 4.3.15 of \citeN{FontanaRunggaldier12}.
\end{proof}

\begin{remark}
Note that, if the Brownian motions $W^1$ and $W^2$ are $\bG$-semimartingales but their finite variation parts are not absolutely continuous with respect to $\ud t$ (i.e., Condition \ref{C} is violated), one can then obtain the most egregious form of arbitrage (i.e., an \emph{increasing profit}) in the filtration $\bG$, see e.g. Section 4.3 of \citeN{FontanaRunggaldier12}. In this sense, Condition \ref{C} is minimal for the study of the no-arbitrage properties of the model \eqref{S}-\eqref{model-2}.
\end{remark}

Proposition \ref{arb-G} shows that the process  $\bar{\theta}$ defined in \eqref{bartheta} plays a crucial role in determining the no-arbitrage properties of the model \eqref{S}-\eqref{model-2} in $\bG$. In turn, this implies that the existence of arbitrages in $\bG$ crucially depends on the properties of $\tau$. For instance, the condition $\int_0^T\!\bar{\theta}^2_t\,\ud t<\infty$ $P$-a.s. may fail in the cases considered in \citeN{Imkeller02}.

\begin{remark}[\emph{The $\bG$-martingale representation property}]
Let us denote by $A^{\bG}$ the $\bG$-predictable compensator of the increasing process $(\indik_{\{\tau\leq t\}})_{0\leq t \leq T}$ (see \citeN{Protter05}, Section III.5) and let $M^{\bG}:=\indik_{\{\tau\leq\cdot\}}-A^{\bG}$ be the corresponding compensated $\bG$-martingale.
Suppose that every $\bG$-local martingale $L=(L_t)_{0\leq t \leq T}$ admits the representation:
\be	\label{mrp-1}
L_t = L_0 + \int_0^t\!\varphi^1_u\,\ud\Wtilde^1_u + \int_0^t\!\varphi^2_u\,\ud\Wtilde^2_u+ \int_0^t\!\psi_u\,\ud M^{\bG}_u,
\qquad \text{for all }t\in[0,T],
\ee
where $\varphi^i=(\varphi^i_t)_{0\leq t \leq T}$ is a $\bG$-predictable process with $\int_0^T(\varphi^i_t)^2\ud t<\infty$ $P$-a.s., for $i=1,2$, and $\psi=(\psi_t)_{0\leq t \leq T}$ is a $\bG$-predictable process with $\int_0^T\!|\psi_t||\ud A^{\bG}_t|<\infty$ $P$-a.s. and where $\Wtilde^i$, for $i=1,2$, is the $\bG$-Brownian motion introduced in Condition \ref{C}. In this case, we can obtain a more precise description of the $\bG$-local martingale $N=(N_t)_{0\leq t \leq T}$ appearing in part (b) of Proposition \ref{arb-G}. Indeed, using \eqref{candec-G}-\eqref{candec-G-3} together with \eqref{mrp-1}, we get:
\be	\label{N}
N_t = \int_0^{t\wedge\tau}\!\!\varphi^2_u\,\ud\Wtilde^2_u
+\indik_{\{\rho=0\}}\!\!\int_{\tau}^{t\vee\tau}\!\!\varphi^1_u\,\ud\Wtilde^1_u
+\indik_{\{\rho\neq0\}}\!\Biggl(\int_{\tau}^{t\vee\tau}\!\!\varphi^3_u\,\ud\Wtilde^2_u
-\frac{\sqrt{1-\rho^2}}{\rho}\int_{\tau}^{t\vee\tau}\!\!\varphi^3_u\,\ud\Wtilde^1_u\Biggr)
+\int_0^t\!\!\psi_u\,\ud M^{\bG}_u,
\ee
where $\varphi^1$, $\varphi^2$, $\varphi^3$ and $\psi$ are $\bG$-predictable integrable processes. As long as NFLVR and the representation property \eqref{mrp-1} hold in $\bG$, part (b) of Proposition \ref{arb-G} and \eqref{N} give a complete characterization of all $\bG$-martingale deflators (or, equivalently, of all ELMMs in $\bG$).
If $\bF=\bF^{W^1}\vee\bF^{W^2}$, necessary and sufficient conditions for the validity of the martingale representation property \eqref{mrp-1} have been recently established by \citeN{JeanblancSong12}. 
\end{remark}

\subsection{The progressively enlarged price process filtration $\bG^X$}	\label{bG^S}

This section studies the properties of the model \eqref{S}-\eqref{model-2} when considered with respect to the filtration $\bG^X$.

\subsubsection{Canonical decomposition and no-arbitrage properties in $\bG^X$}

The next lemma gives the canonical decomposition of $X$ with respect to the filtration $\bG^X$. The idea of the proof consists of projecting the canonical decomposition \eqref{candec-G} obtained with respect to $\bG$ onto the smaller filtration $\bG^X$. In order to take care of integrability issues, a localization procedure is needed.

\begin{lemma}	\label{candec-G^X}
Suppose that Conditions \ref{B} and \ref{C} hold. Then the process $X$ admits the following canonical decomposition with respect to the filtration $\bG^X$:
\be	\label{candec-G^X-1}
X_t = \int_0^t\!\mu_u\,\ud u+\int_0^t\!V_u\,\ud B_u,
\qquad \text{for all }t\in[0,T],
\ee
where the $\bG^X$-predictable process $\mu=(\mu_t)_{0\leq t \leq T}$ is defined as:
\be	\label{candec-G^X-2}
\mu_t := \indik_{\{t\leq\tau\}}\!\left(\mu^1(t,X_t)+\sigma^1(t,X_t)\,^p\theta^1_t\right)
+\indik_{\{t>\tau\}}\!\left(\mu^2(t,X_t)
+\sigma^2(t,X_t)\bigl(\rho\,^p\theta^1_t+\sqrt{1-\rho^2}\,^p\theta^2_t\bigr)\right),
\ee
with $^p\theta^i$ denoting the $\bG^X$-predictable projection of $\theta^i$, for $i=1,2$, and where the process $B=(B_t)_{0\leq t \leq T}$ is a $\bG^X$-Brownian motion and the $\bG^X$-predictable process $V=(V_t)_{0\leq t \leq T}$ is defined as in \eqref{candec-G-2}.
\end{lemma}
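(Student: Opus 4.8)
The strategy is to project the $\bG$-canonical decomposition \eqref{candec-G} onto the smaller filtration $\bG^X$. Since $X$ is $\bG^X$-adapted (indeed $\bF^X\subseteq\bG^X$) and is a $\bG$-semimartingale with decomposition $X_t=\int_0^t\mutilde_u\,\ud u+\int_0^t V_u\,\ud\Wtilde_u$, and since $\bG^X\subseteq\bG$, the process $X$ is also a $\bG^X$-semimartingale. To identify its canonical decomposition in $\bG^X$ I would argue as follows. The finite-variation part $\int_0^\cdot\mutilde_u\,\ud u$ is absolutely continuous; its $\bG^X$-dual predictable projection is $\int_0^\cdot\,{}^p\mutilde_u\,\ud u$, where ${}^p\mutilde$ is the $\bG^X$-predictable projection of $\mutilde$, and by linearity of the predictable projection together with the fact that $\indik_{\{t\le\tau\}}$ and $\indik_{\{t>\tau\}}$, $\mu^i(t,X_t)$, $\sigma^i(t,X_t)$ are all $\bG^X$-adapted, one gets exactly the process $\mu$ in \eqref{candec-G^X-2}, with ${}^p\theta^i$ the $\bG^X$-predictable projection of $\theta^i$. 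Then $X_t-\int_0^t\mu_u\,\ud u$ must be a continuous $\bG^X$-local martingale, and since its quadratic variation is $\int_0^t V_u^2\,\ud u$ (quadratic variation being filtration-independent and $V$ being $\bG^X$-predictable by \eqref{candec-G-2}), the process $B_t:=\int_0^t V_u^{-1}\,\ud\bigl(X-\int_0^\cdot\mu_u\,\ud u\bigr)_u$ is a continuous $\bG^X$-local martingale with $[B]_t=t$, hence a $\bG^X$-Brownian motion by L\'evy's theorem, and $\int_0^t V_u\,\ud B_u=X_t-\int_0^t\mu_u\,\ud u$. Note $V_t>0$ everywhere since $\sigma^i$ takes values in $(0,\infty)$, so the division is legitimate.

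The technical point requiring care is the integrability needed to make ``take the $\bG^X$-predictable projection of the drift'' rigorous: a priori we only know $\int_0^T|\mutilde_u|\,\ud u<\infty$ $P$-a.s., not in expectation, so the predictable projection is defined only after localization. I would introduce a sequence of $\bG$-stopping times $(\sigma_n)_{n\in\N}$ increasing to $T$ — e.g. $\sigma_n:=\inf\{t:\int_0^t|\mutilde_u|\,\ud u\ge n\text{ or }|X_t|\ge n\}\wedge T$, or a reducing sequence for the local martingale $\int V\,\ud\Wtilde$ — but since these need to be $\bG^X$-stopping times for the projection argument to land in $\bG^X$, the cleaner route is to stop along $\bG^X$-stopping times built from the continuous $\bG^X$-adapted process $X$ and from $\int_0^\cdot V_u^2\,\ud u$ (which is $\bG^X$-adapted), for instance $\sigma_n:=\inf\{t:|X_t|\ge n\}\wedge T$; the growth bound in Remarks \ref{growth}-1 then controls $\mu^i(t,X_t)$ and $\sigma^i(t,X_t)$ on $[0,\sigma_n]$, and the assumed finiteness $\int_0^T|\theta^i_u|\,\ud u<\infty$ (recorded in the footnote after \eqref{candec-G-3}) handles the $\theta^i$ terms after a further localization along $\bG$-stopping times, which only affects the identification of the projection on stochastic intervals and can be patched together. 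On $[0,\sigma_n]$ one applies the classical result (He--Wang--Yan, Section V.1, or Jeulin, Chapter I) that projecting a semimartingale onto a subfiltration to which it is adapted gives a semimartingale whose bounded-variation part is the dual predictable projection of the original one; the continuity and uniqueness of the canonical decomposition then let us pass $n\to\infty$.

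The main obstacle I anticipate is purely this localization bookkeeping — matching up $\bG$-localizing sequences (needed to define the projections of $\theta^i$) with $\bG^X$-localizing sequences (needed for the projected decomposition to be $\bG^X$-predictable) — rather than anything conceptual; once integrability is arranged, linearity of the predictable projection, L\'evy's characterization, and uniqueness of the canonical decomposition of a continuous special semimartingale do the rest. I would also remark that the process $B$ so constructed depends only on $X$ and the (fixed) $\bG^X$-predictable projections $\,{}^p\theta^i$, and that the formula \eqref{candec-G^X-2} for $\mu$ is well-defined by the same reasoning as in the footnote to \eqref{candec-G-1}: continuity of $\sigma^i$ and of $X$ together with $\int_0^T|{}^p\theta^i_u|\,\ud u\le\int_0^T|\theta^i_u|\,\ud u<\infty$ $P$-a.s.\ (Jensen applied to the projection) make $\int_0^t\sigma^i(u,X_u)\,{}^p\theta^i_u\,\ud u$ finite for all $t$.
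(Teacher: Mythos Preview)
Your proposal is correct and follows essentially the same route as the paper: project the $\bG$-decomposition onto $\bG^X$ via the dual predictable projection of the drift, localize along $\sigma_n=\inf\{t:|X_t|\ge n\}\wedge T$, and recover $B$ by L\'evy's characterization. The paper resolves your ``localization bookkeeping'' worry more cleanly than you anticipate: since $[M]=[\int V\,\ud\Wtilde]=\int V^2\,\ud u$ and the growth bound gives $\int_0^{\sigma_n}V_u^2\,\ud u\le \bar K(1+n^2)T$, the \emph{same} $\bG^X$-stopping times $\sigma_n$ reduce both the $\bG^X$-local martingale $M$ and the $\bG$-local martingale $\int V\,\ud\Wtilde$ to true (square-integrable) martingales, so Jacod's projection result (his Proposition~9.24) applies directly without any separate $\bG$-localization for the $\theta^i$ terms. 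One minor slip: your final inequality $\int_0^T|{}^p\theta^i_u|\,\ud u\le\int_0^T|\theta^i_u|\,\ud u$ does not hold pathwise (Jensen gives $|{}^p\theta^i|\le{}^p|\theta^i|$ pointwise, but the time-integral of the projection need not be dominated $\omega$-by-$\omega$); the well-definedness of $\mu$ follows instead from the fact that $A=\int\mu\,\ud u$ is already known to be a finite-variation process.
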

\begin{proof}
The unique solution $X=(X_t)_{0\leq t \leq T}$ to the SDE \eqref{model-2} on $(\Omega,\cA,\bG,P)$ is a continuous $\bG$-semimartingale. Hence, the result of \citeN{Stricker77} implies that it is also a $\bG^X$-semimartingale, since $\bG^X\subset\bG$. Let $X=A+M$ be the canonical decomposition of $X$ in $\bG^X$, where $A$ is a continuous $\bG^X$-predictable process of finite variation with $A_0=0$ and $M\in\Mloc(\bG^X)$ with $M_0=0$. For every $n\in\N$, define the $\bG^X$-stopping time $\tau_n:=\inf\{t\in[0,T]:|X_t|\geq n\}\wedge T$. Clearly, we have $\tau_n\nearrow T$ $P$-a.s. as $n\rightarrow\infty$. Due to Remark \ref{growth}-1, there exists a constant $\bar{K}>0$ such that, for every $n\in\N$:
$$
\int_0^{\tau_n\wedge T}\!\!V^2_t\,\ud t
\leq \bar{K}\!\!\int_0^{\tau_n\wedge T}\!\!\left(1+X^2_t\right)\ud t
\leq \bar{K}(1+n^2)\,T
\qquad\text{$P$-a.s.}
$$
Since $[M]_t=\int_0^tV^2_u\,\ud u$, for all $t\in[0,T]$, this shows that $\{\tau_n\}_{n\in\N}$ is a $\bG^X$-localizing sequence for the $\bG^X$-local martingale $M$ as well as for the $\bG$-local martingale $\int\!V\ud\Wtilde$. Proposition 9.24 of \citeN{Jacod79} together with \eqref{candec-G} then implies that $A$ is given by the dual $\bG^X$-predictable projection of the process $\int_0^{\cdot}\mutilde_t\,\ud t$, i.e. $A=\left(\int_0^{\cdot}\mutilde_t\,\ud t\right)^p$. Furthermore, as is shown in 1.40 of \citeN{Jacod79}, we have $\left(\int_0^{\cdot}\mutilde_t\,\ud t\right)^p=\int_0^{\cdot}\!\,^p\mutilde_t\,\ud t$, where $^p\mutilde$ denotes the $\bG^X$-predictable projection of $\mutilde$. Equation \eqref{candec-G^X-2} then follows by noting that, since $\tau$ is a $\bG^X$-stopping time, the processes $\indik_{\lsi0,\tau\rsi}$ and $\indik_{\lsir\tau,T\rsi}$ are $\bG^X$-adapted and left-continuous and, hence, $\bG^X$-predictable.
To finish the proof, the process $V$ defined in \eqref{candec-G-2} never hits zero and is $\bG^X$-predictable, $\tau$ being a $\bG^X$-stopping time. This implies that the stochastic integral $B:=\int\!V^{-1}\,\ud M$ is well-defined as a continuous $\bG^X$-local martingale with $B_0=0$. Since $[B]_t=\int_0^t\!V^{-2}_u\,\ud[M]_u=t$, for all $t\in[0,T]$, Lévy's characterization theorem allows to conclude that $B$ is a $\bG^X$-Brownian motion.
\end{proof}

We can now answer the question of whether the model \eqref{S}-\eqref{model-2}, considered with respect to the filtration $\bG^X$, allows for arbitrage profits. We denote by $^p\bar{\theta}$ the $\bG^X$-predictable projection of the process $\bar{\theta}=\indik_{\lsi0,\tau\rsi}\theta^1+\indik_{\lsir\tau,T\rsi}\bigl(\rho\,\theta^1+\sqrt{1-\rho^2}\,\theta^2\bigr)$ introduced in \eqref{bartheta}.
The proof of the next proposition is similar to the proof of Proposition \ref{arb-G} and, hence, omitted.

\begin{proposition}	\label{arb-G^X}
Suppose that Conditions \ref{B} and \ref{C} hold. Then the following assertions hold for the model \eqref{S}-\eqref{model-2} considered with respect to the filtration $\bG^X$:
\begin{itemize}
\item[(a)]
NA1 holds if and only if $\int_0^T(^p\bar{\theta}_t)^2\,\ud t<\infty$ $P$-a.s., with the latter condition being equivalent to $\int_0^T\!(\mu_t/V_t)^2\,\ud t<\infty$ $P$-a.s.
\item[(b)]
NFLVR holds if and only if NA1 holds and there exists $N=(N_t)_{0\leq t \leq T}\in\tMloc(\bG^X)$ with $N_0=0$, $\Delta N>-1$ $P$-a.s., $[N,B]=0$ such that $E\left[\cE\left(-\int(\mu/V)\ud B+N\right)_T\right]=1$.
\end{itemize}
\end{proposition}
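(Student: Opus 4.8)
The plan is to follow the proof of Proposition \ref{arb-G} line by line, using the $\bG^X$-canonical decomposition furnished by Lemma \ref{candec-G^X} in place of the $\bG$-canonical decomposition \eqref{candec-G}. By Lemma \ref{candec-G^X} the processes $\mu$ and $V$ are $\bG^X$-predictable, $V$ is given by \eqref{candec-G-2} and hence never vanishes, so the process $\mu/V$ is well-defined; moreover $S=S_0\,\cE(X)$ and \eqref{candec-G^X-1} gives the $\bG^X$-canonical decomposition $X_t=\int_0^t\mu_u\,\ud u+\int_0^t V_u\,\ud B_u$ with $B$ a $\bG^X$-Brownian motion. Consequently, by part (i) of Definition \ref{def-arb} together with Theorem 4 of \citeN{Kardaras10}, NA1 holds in $\bG^X$ if and only if $\int_0^T(\mu_t/V_t)^2\,\ud t<\infty$ $P$-a.s.

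It remains to prove that this is equivalent to $\int_0^T({}^p\bar\theta_t)^2\,\ud t<\infty$ $P$-a.s. Since $\tau$ is a $\bG^X$-stopping time, $\indik_{\lsi0,\tau\rsi}$ and $\indik_{\lsir\tau,T\rsi}$ are $\bG^X$-predictable and therefore factor out of the $\bG^X$-predictable projection, so from \eqref{bartheta} we get ${}^p\bar\theta=\indik_{\lsi0,\tau\rsi}\,{}^p\theta^1+\indik_{\lsir\tau,T\rsi}\bigl(\rho\,{}^p\theta^1+\sqrt{1-\rho^2}\,{}^p\theta^2\bigr)$; comparing \eqref{candec-G^X-2} with \eqref{candec-G-2} then yields, for $\ud t$-a.e.\ $t$,
\[
\frac{\mu_t}{V_t}={}^p\bar\theta_t+\indik_{\{t\leq\tau\}}\frac{\mu^1(t,X_t)}{\sigma^1(t,X_t)}+\indik_{\{t>\tau\}}\frac{\mu^2(t,X_t)}{\sigma^2(t,X_t)}.
\]
Exactly as in the proof of Proposition \ref{arb-G}, Condition \ref{B}-(b) and the continuity of $X$ imply that $\xi:=\min_{t\in[0,T]}\bigl\{\sigma^1(t,X_t)\wedge\sigma^2(t,X_t)\bigr\}$ is $P$-a.s.\ strictly positive, whence, by Remarks \ref{growth}-1,
\[
\int_0^T\!\Bigl(\indik_{\{t\leq\tau\}}\tfrac{\mu^1(t,X_t)}{\sigma^1(t,X_t)}+\indik_{\{t>\tau\}}\tfrac{\mu^2(t,X_t)}{\sigma^2(t,X_t)}\Bigr)^2\ud t\leq\frac{\bar K\,T}{\xi^2}\Bigl(1+\max_{t\in[0,T]}X_t^2\Bigr)<\infty\quad P\text{-a.s.}
\]
Applying the elementary inequalities $(a+b)^2\leq2a^2+2b^2$ and $(a+b)^2\geq b^2/2-a^2$ to the displayed identity then shows, using this bound and the continuity of $X$, that $\int_0^T(\mu_t/V_t)^2\,\ud t<\infty$ $P$-a.s.\ if and only if $\int_0^T({}^p\bar\theta_t)^2\,\ud t<\infty$ $P$-a.s., which proves part (a). Part (b) then follows word for word from the proof of Proposition \ref{arb-G}-(b), i.e.\ from part (ii) of Definition \ref{def-arb} together with Lemma 4.3.15 of \citeN{FontanaRunggaldier12}, with the $\bG$-Brownian motion $\Wtilde$ replaced by the $\bG^X$-Brownian motion $B$.

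Since everything above is the direct transcription of the proof of Proposition \ref{arb-G}, I do not expect a genuine obstacle. The only point deserving a word of care is the replacement of $\theta^1,\theta^2$ by their $\bG^X$-predictable projections ${}^p\theta^1,{}^p\theta^2$ appearing in \eqref{candec-G^X-2}: one must know that these projections are well-defined, $\ud t$-a.e.\ finite $\bG^X$-predictable processes and that $\bG^X$-predictable processes commute with the projection. Both facts are already used in the statement and proof of Lemma \ref{candec-G^X} (where the drift $\mu$ is obtained from ${}^p\theta^1,{}^p\theta^2$ through the dual $\bG^X$-predictable projection, cf.\ 1.40 of \citeN{Jacod79}), so no new difficulty arises.
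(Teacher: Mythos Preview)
Your proposal is correct and is exactly the approach the paper intends: the paper omits the proof entirely, stating only that it ``is similar to the proof of Proposition \ref{arb-G} and, hence, omitted.'' Your line-by-line transcription of that proof, with the $\bG$-decomposition \eqref{candec-G} replaced by the $\bG^X$-decomposition of Lemma \ref{candec-G^X} and $\bar\theta$ replaced by its $\bG^X$-predictable projection ${}^p\bar\theta$, is precisely what is meant.
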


\subsubsection{Martingale representation property in $\bG^X$}	\label{MRP-G^X}

We now study in more detail the structure of the filtration $\bG^X$. In particular, we aim at proving a martingale representation result (see Proposition \ref{MRP}). In turn, this will lead to an explicit characterization of all $\bG^X$-martingale deflators (see Corollary \ref{NFLVR-G^X}).

As a preliminary, observe that the process $1/V$ is well-defined, $\bG^X$-predictable and locally bounded, being left-continuous by part (b) of Condition \ref{B}. Hence, we can define the $\bG^X$-adapted continuous process $\What=(\What_t)_{0\leq t \leq T}$ as follows, for all $t\in[0,T]$:
\be	\label{def-Yhat}
\What_t := \int_0^tV_u^{-1}\,\ud X_u=\int_0^t\frac{\mu_u}{V_u}\,\ud u+B_t,
\ee
where the processes $\mu$ and $B$ are as in Lemma \ref{candec-G^X}. Let us denote by $\bF^{\What}=(\cF^{\What}_t)_{0\leq t \leq T}$ the right-continuous $P$-augmented natural filtration of $\What$ and by $\bG^{\What}=(\cG^{\What}_t)_{0\leq t \leq T}$ the progressive enlargement of $\bF^{\What}$ with respect to $\tau$, meaning that $\cG^{\What}_t=\bigcap_{s>t}\bigl(\cF^{\What}_s\vee\sigma(\tau\wedge s)\bigr)$ for all $t\in[0,T]$. We can now prove a useful lemma which describes the structure of the filtration $\bG^X$, showing that it coincides with the progressive enlargement with respect to $\tau$ of the filtration generated by the drifted Brownian motion $\What$.

\begin{lemma}	\label{filtrations}
Suppose that Conditions \ref{B} and \ref{C} hold. Then $\bG^X=\bG^{\What}$.
\end{lemma}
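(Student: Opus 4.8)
The plan is to prove the two filtrations are equal by establishing the inclusions $\bG^{\What}\subseteq\bG^X$ and $\bG^X\subseteq\bG^{\What}$ separately, working at the level of the underlying (pre-completion, pre-intersection) filtrations and then passing to the right-continuous $P$-augmentations enlarged by $\sigma(\tau\wedge s)$. The key observation driving both inclusions is the pair of reciprocal integral representations: on the one hand, $\What_t=\int_0^t V_u^{-1}\,\ud X_u$ by \eqref{def-Yhat}, which expresses $\What$ as a stochastic integral against $X$ with a $\bG^X$-predictable, locally bounded integrand $1/V$; on the other hand, since $V$ never vanishes, $X_t=\int_0^t V_u\,\ud\What_u$, expressing $X$ as a stochastic integral against $\What$. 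Because the integrand $V$ (hence $1/V$) is, by Condition \ref{B}-(b) and continuity of $X$, left-continuous and adapted to $\bF^X$ enlarged only by the values $\tau\wedge s$ — indeed $V_t=\indik_{\{t\le\tau\}}\sigma^1(t,X_t)+\indik_{\{t>\tau\}}\sigma^2(t,X_t)$ depends only on $X_t$ and on whether $\tau\ge t$ — each stochastic integral can be taken within the appropriate filtration, and the resulting process is adapted to that same filtration.

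First I would show $\bG^{\What}\subseteq\bG^X$. Since $\What=\int V^{-1}\,\ud X$ with $1/V$ being $\bG^X$-predictable and locally bounded, $\What$ is a $\bG^X$-adapted continuous process, so $\bF^{\What}\subseteq\bG^X$ and hence (as $\bG^X$ already makes $\tau$ a stopping time and is right-continuous and complete) $\bG^{\What}=\bigcap_{s>t}(\cF^{\What}_s\vee\sigma(\tau\wedge s))\subseteq\bG^X$. For the reverse inclusion $\bG^X\subseteq\bG^{\What}$, the point is to reconstruct $X$ from $\What$ and $\tau$: one writes $X_t=\int_0^t V_u\,\ud\What_u$, where now $V_u=\indik_{\{u\le\tau\}}\sigma^1(u,X_u)+\indik_{\{u>\tau\}}\sigma^2(u,X_u)$ must be seen to be $\bG^{\What}$-predictable. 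This requires an inductive/approximation argument: one solves the (pathwise, since the integral is driven by a continuous semimartingale and $\sigma^i$ are Lipschitz) equation $X_t=\int_0^t\bigl(\indik_{\{u\le\tau\}}\sigma^1(u,X_u)+\indik_{\{u>\tau\}}\sigma^2(u,X_u)\bigr)\,\ud\What_u$ within $\bG^{\What}$ — using Picard iteration, each iterate is $\bG^{\What}$-adapted because $\What$, $\tau\wedge\cdot$ and the $\sigma^i$ evaluated along a $\bG^{\What}$-adapted process stay $\bG^{\What}$-adapted, and the limit inherits adaptedness. Uniqueness of the solution to this SDE (Lipschitz coefficients, continuous driver) forces this $\bG^{\What}$-adapted solution to coincide with the original $X$, so $\bF^X\subseteq\bG^{\What}$ and thus $\bG^X\subseteq\bG^{\What}$.

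I would expect the main obstacle to be the rigorous justification that $X$ can be recovered as a \emph{strong} solution of a stochastic differential equation driven by $\What$ and ``parametrized'' by $\tau$, entirely within the filtration $\bG^{\What}$. The subtlety is twofold: (i) $\What$ is only a $\bG^X$-semimartingale (a drifted $\bG^X$-Brownian motion via \eqref{def-Yhat}), so one must check that it remains a semimartingale in $\bG^{\What}$ and that the stochastic integral $\int V\,\ud\What$ computed in $\bG^{\What}$ agrees with the one computed in $\bG^X$ — this uses the stability of stochastic integrals under restriction of filtration in the spirit of Stricker's theorem (as already invoked in the proof of Lemma \ref{candec-G^X}); and (ii) one must handle the indicator $\indik_{\{u\le\tau\}}$ carefully, noting that it is $\bG^{\What}$-predictable because $\tau$ is a $\bG^{\What}$-stopping time by construction (the progressive enlargement makes $\{\tau\wedge s\}$ known), so the coefficient of the SDE is a genuine $\bG^{\What}$-predictable random field satisfying the random-Lipschitz condition of \citeN{Protter05}, and Theorem V.6 there applies verbatim. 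Once this is set up, the existence-uniqueness machinery of Proposition \ref{solution}, now invoked on $(\Omega,\cA,\bG^{\What},P)$, does the rest.

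A cleaner alternative, which I would use to shorten the argument, is to avoid re-solving the SDE altogether and instead argue directly that $\cF^X_t=\cF^{\What}_t$ modulo $\sigma(\tau\wedge t)$ by exhibiting $X$ and $\What$ as pathwise transforms of each other given $\tau$: on the event $\{\tau=v\}$ the relation $\ud X_t=\bigl(\sigma^1(t,X_t)\indik_{\{t\le v\}}+\sigma^2(t,X_t)\indik_{\{t>v\}}\bigr)\ud\What_t$ is a classical ordinary/stochastic equation whose solution map $\What\mapsto X$ and its inverse $X\mapsto\What$ (namely $\ud\What_t=\ud X_t/\bigl(\sigma^1(t,X_t)\indik_{\{t\le v\}}+\sigma^2(t,X_t)\indik_{\{t>v\}}\bigr)$, well-defined since the $\sigma^i$ are strictly positive and locally bounded away from $0$ along the continuous path $X$) are both causal and measurable. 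Integrating over the law of $\tau$ gives $\cF^{\What}_t\vee\sigma(\tau\wedge t)=\cF^X_t\vee\sigma(\tau\wedge t)$ for each $t$, and intersecting over $s>t$ yields $\bG^{\What}=\bG^X$. Either way, the heart of the matter is the invertibility of the drift-free scaling $X\leftrightarrow\What$ granted by strict positivity of the volatilities in Condition \ref{B}.
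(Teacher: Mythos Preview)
Your proposal is correct and follows essentially the same route as the paper: the inclusion $\bG^{\What}\subseteq\bG^X$ is immediate from \eqref{def-Yhat}, and for the reverse inclusion the paper, like you, writes $X_t=\int_0^t f(u,X_u)\,\ud\What_u$ with $f$ the random Lipschitz function of \eqref{rnd-fcts} and runs a Picard iteration (invoking Theorem~V.8 of \citeN{Protter05}) inside $\bG^{\What}$ to conclude that $X$ is $\bG^{\What}$-adapted. Your explicit mention of Stricker's theorem to justify that $\What$ is a $\bG^{\What}$-semimartingale (needed for the iterates to be well-defined in $\bG^{\What}$) is a point the paper leaves implicit; your alternative ``pathwise on $\{\tau=v\}$'' argument is not used in the paper and would require more care to make rigorous, but it is not needed.
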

\begin{proof}
Clearly, the process $\What$ defined in \eqref{def-Yhat} is $\bG^X$-adapted and $\tau$ is a $\bG^X$-stopping time. This implies that $\bG^{\What}\subseteq\bG^X$. To prove the converse inclusion, let us first note that the process $X$ can be represented as follows, where the random function $f$ is defined as in \eqref{rnd-fcts}:
$$
X_t = \int_0^t\!f(u,X_u)\,\ud\What_u,
\qquad\text{for all }t\in[0,T].
$$
Let us define the process $X^0=(X^0_t)_{0\leq t\leq T}$ by $X^0_t:=0$ for all $t\in[0,T]$ and define inductively, for every $k\in\N$, the process $X^k=(X^k_t)_{0\leq t \leq T}$ as:
$$
X^{k+1}_t := \int_0^t\!f(u,X^k_u)\,\ud\What_u,
\qquad \text{for all }t\in[0,T].
$$
By construction, for every $k\in\N$, the process $X^k$ is adapted to the filtration $\bG^{\What}$. Furthermore, since the function $f$ is random Lipschitz, Theorem V.8 of \citeN{Protter05} implies that the process $X^k$ converges to $X$ uniformly on compacts in probability and, up to a subsequence, the convergence takes place $P$-a.s. uniformly on compacts. This implies that $X$ is adapted to the filtration $\bG^{\What}$, thus showing that $\bG^X\subseteq\bG^{\What}$.
\end{proof}

We are now in position to prove a martingale representation result for the filtration $\bG^X$. It is important to note that we do not make any assumption on $\tau$ nor on the underlying filtration $\bF$ (in particular, we do not assume that $\bF=\bF^{W^1}\vee\bF^{W^2}$). We denote by $A^{\bG^X}$ the $\bG^X$-predictable compensator of $\tau$ and by $M^{\bG^X}:=\indik_{\{\tau\leq\cdot\}}-A^{\bG^X}$ the associated $\bG^X$-martingale.

\begin{proposition}	\label{MRP}
Suppose that Conditions \ref{B} and \ref{C} hold and assume in addition that NA1 holds in the filtration $\bG^X$. Then every $\bG^X$-local martingale $L=(L_t)_{0\leq t \leq T}$ admits a representation of the form:
\be	\label{MRP-0}
L_t = L_0+\int_0^t\!\varphi_u\,\ud B_u+\int_0^t\!\psi_u\,\ud M^{\bG^X}_u,
\qquad\text{for all }t\in[0,T],
\ee
for some $\bG^X$-predictable processes $\varphi=(\varphi_t)_{0\leq t \leq T}$ and $\psi=(\psi_t)_{0\leq t \leq T}$ with $\int_0^T\!\varphi_t^2\,\ud t<\infty$ $P$-a.s. and $\int_0^T\!|\psi_t||\ud A^{\bG^X}_t|<\infty$ $P$-a.s.
\end{proposition}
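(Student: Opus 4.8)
The plan is to transfer a known martingale representation result from the drifted-Brownian-motion setting to our setting, exploiting Lemma \ref{filtrations}, which identifies $\bG^X$ with the progressive enlargement $\bG^{\What}$ of the natural filtration of the drifted Brownian motion $\What$. The starting point is the observation that, under NA1 in $\bG^X$, Proposition \ref{arb-G^X}(a) guarantees $\int_0^T(\mu_t/V_t)^2\,\ud t<\infty$ $P$-a.s., so that the process $\int_0^{\cdot}(\mu_u/V_u)\,\ud u$ is of finite variation and Girsanov's theorem can be applied locally. Concretely, I would introduce a localizing sequence of $\bG^X$-stopping times $(\sigma_n)$ along which $\cE(-\int(\mu/V)\,\ud B)$ is a true martingale, define for each $n$ an equivalent probability $Q_n$ on $\cG^X_{\sigma_n}$ with this density, and note that under $Q_n$ the stopped process $\What^{\sigma_n}$ is a Brownian motion, while $\tau$ is still a $\bG^X$-stopping time. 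Since $\What$ is (a version of) the canonical Brownian motion generating $\bF^{\What}$, the pair $(\bF^{\What}, \tau)$ is, under $Q_n$, exactly the setting in which the classical martingale representation theorem for the progressive enlargement of a Brownian filtration applies.

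The key input I would invoke is the martingale representation theorem for the progressive enlargement of a Brownian filtration by a random time: every local martingale in $\bG^{\What}$ (equivalently $\bG^X$), relative to the Brownian motion $\What$, decomposes as a stochastic integral with respect to $\What$ plus a stochastic integral with respect to the compensated jump martingale $M^{\bG^X}$. This is available without any hypothesis on $\tau$ — it holds in full generality for progressive enlargements of a filtration enjoying the representation property (see e.g. the references on enlargement of filtrations cited in the paper, such as \citeN{Jeulin80} and \citeN{JeanblancSong12}). Applying this under each $Q_n$ to the stopped $\bG^X$-local martingale $L^{\sigma_n}$ gives a representation of the form $L^{\sigma_n}_t = L_0 + \int_0^t\varphi^{(n)}_u\,\ud\What_u + \int_0^t\psi^{(n)}_u\,\ud M^{\bG^X,Q_n}_u$. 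Then I would pass back to $P$: since $\What_t = \int_0^t(\mu_u/V_u)\,\ud u + B_t$, the drift term $\int_0^t\varphi^{(n)}_u(\mu_u/V_u)\,\ud u$ is absorbed into the finite-variation part, and for $L$ to be a $P$-local martingale this drift must cancel against the finite-variation part of the $Q_n$-compensator of $\tau$; matching finite-variation parts shows the representation persists with respect to $B$ and the $P$-compensated martingale $M^{\bG^X}$. Finally a standard patching/uniqueness argument across the $\sigma_n$ (the integrands agree on overlapping stochastic intervals by uniqueness of the decomposition of a semimartingale) produces globally defined $\bG^X$-predictable $\varphi,\psi$ satisfying \eqref{MRP-0}, with the stated integrability coming from $[L^{\sigma_n}] < \infty$ and $\int_0^T|\psi_t||\ud A^{\bG^X}_t|<\infty$.

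The main obstacle I anticipate is the careful bookkeeping in the change-of-measure step: the density $\cE(-\int(\mu/V)\,\ud B)$ is only a local martingale (its genuine martingale property is exactly what NA1 does \emph{not} guarantee — only NFLVR does), so the argument must stay at the level of local equivalence on the $\sigma_n$, and one must check that the compensator $A^{\bG^X}$ transforms correctly under $Q_n$ (it changes, but only through an absolutely continuous reweighting that does not affect predictability or the finite-variation structure) and that the integrands $\varphi^{(n)},\psi^{(n)}$ can be consistently glued. An alternative, cleaner route would avoid the change of measure altogether: one knows a priori (again from the general enlargement-of-filtration representation theorem, applied directly in $\bG^X$ relative to the $\bG^X$-Brownian motion $B$ from Lemma \ref{candec-G^X} and the fact that $\bG^X = \bG^{\What}$ is generated by $B$ up to the drift, which is $\bG^X$-predictable) that $\bG^X$ has the representation property relative to the two orthogonal martingales $B$ and $M^{\bG^X}$; this follows because $\bF^{\What}$ is generated by $\What$, whose continuous martingale part is $B$, so $\bF^{\What}$ itself has the $B$-representation property, and progressive enlargement by $\tau$ adds exactly the one extra martingale $M^{\bG^X}$. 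I would present the second route as the main argument and relegate the change-of-measure heuristic to a remark.
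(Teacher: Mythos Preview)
Your first route is essentially the paper's proof: localize $\cE(-\int(\mu/V)\,\ud B)$ along $\bG^X$-stopping times $\tau_n$, define $Q^n$ on $\cG^X_{T\wedge\tau_n}$ by this density, observe that under $Q^n$ the stopped process $\What^{\tau_n}$ is a Brownian motion in both $\bF^{\What,n}$ and $\bG^{X,n}$ (hence immersion holds), apply Kusuoka's representation theorem combined with Lemma~\ref{filtrations} to obtain the $(Q^n,\bG^{X,n})$-representation of $L^{\tau_n}$ with respect to $\What^{\tau_n}$ and $(M^{\bG^X})^{\tau_n}$, transform back to $P$ via Girsanov, note that the resulting finite-variation terms must vanish since $L^{\tau_n}$, $\int\varphi^n\,\ud B^{\tau_n}$ and $\int\psi^n\,\ud(M^{\bG^X})^{\tau_n}$ are all $(P,\bG^X)$-local martingales, and patch across $n$. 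The paper keeps this change-of-measure route as the \emph{main} argument, not a remark.

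Your proposed ``alternative, cleaner route'' has a genuine gap, and you should not promote it to the main argument. You claim that $\bF^{\What}$ has the $B$-representation property under $P$ because ``$\bF^{\What}$ is generated by $\What$, whose continuous martingale part is $B$''. But $B$ is the $\bG^X$-martingale part of $\What$, not its $\bF^{\What}$-martingale part: the drift $\mu/V$ in \eqref{def-Yhat} contains the indicator $\indik_{\lsi0,\tau\rsi}$ and the $\bG^X$-predictable projections $^p\theta^i$, so it is $\bG^X$-predictable but in general \emph{not} $\bF^{\What}$-predictable, and $B$ need not even be $\bF^{\What}$-adapted. Hence there is no direct representation property available in $\bF^{\What}$ under $P$ to feed into a progressive-enlargement theorem. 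The entire purpose of the measure change is to make $\What$ itself the Brownian motion, so that $\bF^{\What}$ becomes (under $Q^n$) a genuine Brownian filtration in which the standard representation holds and immersion into $\bG^{X}$ is automatic. Without that step, the argument collapses: you would be assuming the conclusion. Keep the localization/Girsanov route as your proof.
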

\begin{proof}
As in Proposition \ref{arb-G^X}, NA1 holds if and only if $\int_0^T\!(\mu_t/V_t)^2\,\ud t<\infty$ $P$-a.s. Hence, we can define the strictly positive continuous $\bG^X$-local martingale $\Zhat:=\cE(-\int(\mu/V)\,\ud B)$. Let $\{\tau_n\}_{n\in\N}$ be a localizing sequence for $\Zhat$, meaning that $\Zhat^{\tau_n}$ is a uniformly integrable $\bG^X$-martingale, for all $n\in\N$. For every $n\in\N$, define the filtration $\bG^{X,n}:=(\cG^X_{t\wedge\tau_n})_{0\leq t \leq T}$ and, analogously, $\bF^{\What,n}:=(\cF^{\What}_{t\wedge\tau_n})_{0\leq t \leq T}$ and $\bG^{\What,n}:=(\cG^{\What}_{t\wedge\tau_n})_{0\leq t \leq T}$. For every $n\in\N$, let the probability measure $Q^n$ be defined on $\cG^X_{T\wedge\tau_n}$ by $dQ^n=\Zhat_{T\wedge\tau_n}dP$. Girsanov's theorem implies that $(M^{\bG^X})^{\tau_n}$ is a $(Q^n,\bG^{X,n})$-martingale, since $[\Zhat,M^{\bG^X}]^{\tau_n}=0$, and also that the stopped process $\What^{\tau_n}$ is also a continuous $(Q^n,\bG^{X,n})$-local martingale, for all $n\in\N$. Indeed, for all $t\in[0,T]$:
$$
\What^{\tau_n}_t
= B^{\tau_n}_t+\int_0^{t\wedge\tau_n}\!\frac{\mu_u}{V_u}\,\ud u
= B^{\tau_n}_t-\int_0^{t\wedge\tau_n}\!\frac{1}{\Zhat^{\tau_n}_u}\,\ud [\Zhat,B]_u^{\tau_n}.
$$
In particular, since $[\What^{\tau_n}]_t=[B]_t^{\tau_n}=\tau_n\wedge t$, for all $t\in[0,T]$, the process $\What^{\tau_n}$ is a stopped $(Q^n,\bF^{\What,n})$-Brownian motion. As a consequence, every $(Q^n,\bF^{\What,n})$-local martingale can be represented as a stochastic integral of $\What^{\tau_n}$.
Since $\What^{\tau_n}$ is also a stopped $(Q^n,\bG^{X,n})$-Brownian motion, all $(Q^n,\bF^{\What,n})$-local martingales are also $(Q^n,\bG^{X,n})$-local martingales. Hence, Theorem 2.3 of \citeN{Kusuoka99} together with Lemma \ref{filtrations} implies that any $(Q^n,\bG^{X,n})$-local martingale $\hat{L}=(\hat{L}_t)_{0\leq t \leq T}$ can be represented as:
\be	\label{MRP-1}
\hat{L}_t = \hat{L}_0+\int_0^t\!\varphi_u\,\ud\What^{\tau_n}_u+\int_0^t\!\psi_u\,\ud (M^{\bG^X})^{\tau_n}_u,
\qquad\text{for all }t\in[0,T],
\ee
where $\varphi=(\varphi_t)_{0\leq t \leq T}$ and $\psi=(\psi_t)_{0\leq t \leq T}$ are two $\bG^{X,n}$-predictable processes such that $\int_0^T\!\varphi^2_t\,\ud t<\infty$ $P$-a.s. and $\int_0^T\!|\psi_t||\ud A^{\bG^X}_t\!|<\infty$ $P$-a.s.
Let $L\in\Mloc(P,\bG^X)$. By Girsanov's theorem, the difference $L^{\tau_n}-\int\!\frac{1}{\Zhat^{\tau_n}}\,\ud[\Zhat,L]^{\tau_n}$ is a $(Q^n,\bG^{X,n})$-local martingale, for every $n\in\N$. Hence, for all $n\in\N$, by \eqref{MRP-1}, there exist two $\bG^{X,n}$-predictable processes $\varphi^n$ and $\psi^n$ such that, for all $t\in[0,T]$:
\be	\label{MRP-2}	\begin{aligned}
L^{\tau_n}_t &= L_0 + \int_0^t\!\varphi^n_u\,\ud\What^{\tau_n}_u
+\int_0^t\!\psi^n_u\,\ud (M^{\bG^X})^{\tau_n}_u
+\int_0^t\!\frac{1}{\Zhat^{\tau_n}_u}\,\ud[\Zhat,L]^{\tau_n}_u	\\
&= L_0 + \int_0^t\!\varphi^n_u\,\ud B^{\tau_n}_u
+\int_0^t\!\psi^n_u\,\ud (M^{\bG^X})^{\tau_n}_u
+\int_0^t\!\frac{1}{\Zhat^{\tau_n}_u}\,\ud[\Zhat,L]^{\tau_n}_u
+\int_0^{t\wedge\tau_n}\!\!\varphi^n_u\,\frac{\mu_u}{V_u}\,\ud u.
\end{aligned} \ee
Since the processes $L^{\tau_n}$, $\int\varphi^n\,\ud B^{\tau_n}$ and $\int\psi^n\,\ud (M^{\bG^X})^{\tau_n}$ are all $(P,\bG^X)$-local martingales, the $\bG^X$-predictable finite variation terms in \eqref{MRP-2} must vanish and 
the representation \eqref{MRP-0} follows by letting the $\bG^X$-predictable processes $\varphi=(\varphi_t)_{0\leq t \leq T}$ and $\psi=(\psi_t)_{0\leq t \leq T}$ be defined as:
$$
\varphi=\sum_{n=1}^{\infty}\indik_{\lsir\tau_{n-1},\tau_n\rsi}\varphi^n
\qquad\text{and}\qquad
\psi=\sum_{n=1}^{\infty}\indik_{\lsir\tau_{n-1},\tau_n\rsi}\psi^n.
$$
\end{proof}

In particular, Proposition \ref{MRP} allows us to obtain an explicit description of the family of all $\bG^X$-martingale deflators. The following corollary is an immediate consequence of Proposition \ref{arb-G^X} and Proposition \ref{MRP}, noting that $[B,M^{\bG^X}]=0$.

\begin{corollary}	\label{NFLVR-G^X}
Suppose that Conditions \ref{B} and \ref{C} hold. Then NFLVR holds for the model \eqref{S}-\eqref{model-2} in the filtration $\bG^X\!$ if and only if NA1 holds and $E\bigl[\cE(\!-\!\int(\mu/V)\ud B+\!\int\psi\,\ud M^{\bG^X}\!)_T\!\bigr]\!\!=\!\!1$ for some $\bG^X$-predictable process $\psi$ such that $\psi\Delta M^{\bG^X}>-1$ and $\int_0^T\!|\psi_t||\ud A^{\bG^X}_t|<\infty$ $P$-a.s.
\end{corollary}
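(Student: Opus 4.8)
The plan is to derive Corollary \ref{NFLVR-G^X} by combining the no-arbitrage characterization of Proposition \ref{arb-G^X}(b) with the martingale representation result of Proposition \ref{MRP}, using the latter to pin down the exact form of the local martingale $N$ appearing in Proposition \ref{arb-G^X}(b). First I would recall that, by Proposition \ref{arb-G^X}(b), NFLVR holds in $\bG^X$ if and only if NA1 holds and there exists $N\in\tMloc(\bG^X)$ with $N_0=0$, $\Delta N>-1$ $P$-a.s., $[N,B]=0$ and $E\bigl[\cE(-\int(\mu/V)\,\ud B+N)_T\bigr]=1$. Since NA1 is assumed, Proposition \ref{MRP} applies and tells us that any such $N$, being a $\bG^X$-local martingale with $N_0=0$, can be written as $N_t=\int_0^t\varphi_u\,\ud B_u+\int_0^t\psi_u\,\ud M^{\bG^X}_u$ for suitable $\bG^X$-predictable processes $\varphi$ and $\psi$ with the stated integrability.

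Next I would impose the constraint $[N,B]=0$. Since $[B,M^{\bG^X}]=0$ (as noted in the statement, because $B$ is continuous and $M^{\bG^X}$ is of finite variation off a thin set — more precisely $M^{\bG^X}$ is purely discontinuous with a continuous compensator, so its covariation with the continuous martingale $B$ vanishes), we compute $[N,B]_t=\int_0^t\varphi_u\,\ud[B]_u=\int_0^t\varphi_u\,\ud u$. Hence $[N,B]=0$ forces $\varphi=0$ ($\ud t\otimes\ud P$-a.e.), so that necessarily $N=\int\psi\,\ud M^{\bG^X}$. Conversely, for any $\bG^X$-predictable $\psi$ with $\int_0^T|\psi_t|\,|\ud A^{\bG^X}_t|<\infty$ $P$-a.s., the process $N=\int\psi\,\ud M^{\bG^X}$ lies in $\tMloc(\bG^X)$ with $N_0=0$ and $[N,B]=0$, and the requirement $\Delta N>-1$ $P$-a.s. translates exactly into $\psi\,\Delta M^{\bG^X}>-1$ $P$-a.s. (the only jump of $M^{\bG^X}$ being at $\tau$, of size $1-\Delta A^{\bG^X}_\tau\le 1$, so that $\Delta N=\psi\,\Delta M^{\bG^X}$).

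Putting these two observations together, the condition $E\bigl[\cE(-\int(\mu/V)\,\ud B+N)_T\bigr]=1$ for some admissible $N$ is equivalent to $E\bigl[\cE(-\int(\mu/V)\,\ud B+\int\psi\,\ud M^{\bG^X})_T\bigr]=1$ for some $\bG^X$-predictable $\psi$ satisfying $\psi\,\Delta M^{\bG^X}>-1$ and $\int_0^T|\psi_t|\,|\ud A^{\bG^X}_t|<\infty$ $P$-a.s., which is precisely the assertion of the corollary. I do not anticipate any serious obstacle here: the only point requiring a little care is the justification that the general local martingale $N$ in Proposition \ref{arb-G^X}(b) may be taken of the special form furnished by Proposition \ref{MRP}, i.e. that the orthogonality $[N,B]=0$ kills the $\ud B$-component; but this is immediate from the explicit computation of $[N,B]$ above together with the vanishing of $[B,M^{\bG^X}]$. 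The remaining identifications (the jump condition, the integrability condition) are routine, so the proof is indeed an immediate consequence of the two quoted results, as stated.
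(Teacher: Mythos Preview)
Your proposal is correct and follows exactly the route indicated by the paper, which simply states that the corollary is an immediate consequence of Proposition~\ref{arb-G^X} and Proposition~\ref{MRP} together with $[B,M^{\bG^X}]=0$. You have merely spelled out the details of this immediate consequence: applying the martingale representation to $N$, using $[N,B]=0$ to kill the $\ud B$-component, and translating the jump and integrability constraints.
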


\subsubsection{Stability of no-arbitrage properties with respect to filtration shrinkage}

At this point, one may wonder whether the absence of arbitrage in $\bG$ already implies the absence of arbitrage in the smaller filtration $\bG^X$. 
Intuitively, the answer to such a question is expected to be affirmative, because any outcome of a $\bG^X$-trading strategy should also be realized as the outcome of a $\bG$-trading strategy, since $\bG^X\subset\bG$.
However, one has to prove that stochastic integrals defined in $\bG^X$ can also be viewed as stochastic integrals in the larger filtration $\bG$ (a counterexample can be found in \citeN{Jeulin80}, Theorem 3.23).

\begin{proposition}	\label{stability}
Suppose that Conditions \ref{B} and \ref{C} hold. Then the following assertions hold for the model \eqref{S}-\eqref{model-2}:
\begin{itemize}
\item[(a)]
NA1 in the filtration $\bG$ implies NA1 in the filtration $\bG^X$;
\item[(b)]
NFLVR in the filtration $\bG$ implies NFLVR in the filtration $\bG^X$.
\end{itemize}
\end{proposition}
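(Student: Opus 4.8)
The plan is to reduce both assertions to the characterizations obtained in Propositions \ref{arb-G} and \ref{arb-G^X} and to the observation that the market price of risk in $\bG^X$ is the $\bG^X$-predictable projection of the market price of risk in $\bG$. For part (a): by Proposition \ref{arb-G}, NA1 in $\bG$ is equivalent to $\int_0^T\bar\theta_t^2\,\ud t<\infty$ $P$-a.s., and by Proposition \ref{arb-G^X}, NA1 in $\bG^X$ is equivalent to $\int_0^T({}^p\bar\theta_t)^2\,\ud t<\infty$ $P$-a.s. Since ${}^p\bar\theta$ is the $\bG^X$-predictable projection of $\bar\theta$, a standard conditional-Jensen argument (Stricker--Yor, or simply that the predictable projection of a nonnegative process is nonnegative and its ``$L^2$ over $\ud t\otimes\ud P$'' norm can only decrease locally) shows that finiteness of $\int_0^T\bar\theta_t^2\,\ud t$ $P$-a.s. passes to $\int_0^T({}^p\bar\theta_t)^2\,\ud t$ $P$-a.s.; concretely one localizes along the $\bG^X$-stopping times $\tau_n$ from Lemma \ref{candec-G^X} on which everything is integrable, applies Jensen's inequality for predictable projections on each stochastic interval, and lets $n\to\infty$. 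Alternatively, and perhaps more transparently, one can use the deflator characterization directly: if $L\in\Mloc(\bG)$ is a $\bG$-local martingale deflator for $S$, then its $\bG^X$-optional projection ${}^o L$ is a $\bG^X$-local martingale with $({}^oL)_0=1$ and $({}^oL)_T>0$ (the strict positivity passing through the optional projection since $S$ is $\bG^X$-adapted and hence $S\,{}^oL$ is the $\bG^X$-optional projection of $SL$, a $\bG$-local martingale), which is exactly a $\bG^X$-local martingale deflator.

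For part (b): by Corollary \ref{NFLVR-G^X} and part (ii) of Definition \ref{def-arb}, NFLVR in $\bG^X$ holds as soon as NA1 holds in $\bG^X$ (already implied by part (a)) and there is a true $\bG^X$-martingale deflator. Given a $\bG$-martingale deflator $L$, i.e. $L\in\Mloc(\bG)$ with $L_0=1$, $L_T>0$ $P$-a.s., $E[L_T]=1$ and $SL\in\Mloc(\bG)$, its $\bG^X$-optional projection $\ell:={}^oL$ is a nonnegative $\bG^X$-supermartingale with $\ell_0=1$; the key point is that $E[\ell_T]=E[L_T]=1$ forces $\ell$ to be a true $\bG^X$-martingale (a supermartingale with constant expectation is a martingale), and as above $S\ell$ is the $\bG^X$-optional projection of the $\bG$-local martingale $SL$, hence a $\bG^X$-local martingale. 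Strict positivity $\ell_T>0$ $P$-a.s. follows since $\{\ell_T=0\}\in\cG^X_T\subseteq\cG_T$ and $E[L_T\indik_{\{\ell_T=0\}}]=E[\ell_T\indik_{\{\ell_T=0\}}]=0$ with $L_T>0$ forces $P(\ell_T=0)=0$. Thus $\ell$ is a $\bG^X$-martingale deflator and NFLVR holds in $\bG^X$.

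The main obstacle is the integrability bookkeeping: one must justify that taking the $\bG^X$-optional projection commutes with the stochastic integral structure well enough that $S\,{}^oL$ is again a \emph{local} martingale (not merely a local supermartingale) — this is where one invokes that $S$ is $\bG^X$-adapted so that, by the optional projection theorem (He--Wang--Yan, Section V.1 or Jacod \cite{Jacod79}), ${}^o(SL)=S\,{}^oL$, combined with the fact that the optional projection of a $\bG$-local martingale onto a subfiltration is a $\bG^X$-local martingale (again via a localizing sequence of $\bG^X$-stopping times, using the tower property on each stopped piece). The counterexample of \cite{Jeulin80}, Theorem 3.23, mentioned before the statement, warns that the naive ``a $\bG^X$-strategy is a $\bG$-strategy'' argument can fail at the level of stochastic integrals, but it does not obstruct the deflator-side argument, since deflators are genuine local martingales and projections of local martingales behave well — it is precisely for this reason that I would phrase the whole proof on the dual (deflator/ELMM) side rather than on the primal (trading strategy) side.
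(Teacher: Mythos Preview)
Your approach is dual (deflator-side); the paper's is primal (strategy-side). The paper shows directly that $L(S,\bG^X)\subseteq L(S,\bG)$: for $h\in L(S,\bG^X)$ one has $\int_0^T(h_tS_tV_t)^2\,\ud t<\infty$ $P$-a.s., and combining this via Cauchy--Schwarz with $\int_0^T(\mutilde_t/V_t)^2\,\ud t<\infty$ (which is NA1 in $\bG$, by Proposition~\ref{arb-G}) yields $\int_0^T|h_tS_t\mutilde_t|\,\ud t<\infty$, hence $h\in L(S,\bG)$. Once this inclusion is established, any $\bG^X$-arbitrage of the first kind (resp.\ free lunch with vanishing risk) is automatically one in $\bG$, which gives both (a) and (b) at once.

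Your argument, as written, has a genuine gap. The assertion that ``the optional projection of a $\bG$-local martingale onto $\bG^X$ is a $\bG^X$-local martingale'' is \emph{false} in general: the $\bG$-localizing sequence need not consist of $\bG^X$-stopping times, and one cannot simply ``use the tower property on each stopped piece'' (this is precisely the phenomenon analysed by F\"ollmer and Protter in their 2011 paper on local martingales and filtration shrinkage). You invoke this step both for ${}^oL$ in (a) and for ${}^o(SL)=S\,{}^oL$ in (b), so neither part goes through as stated. Your Jensen route for (a) is likewise incomplete: $P$-a.s.\ finiteness of $\int_0^T\bar\theta_t^2\,\ud t$ does not by itself imply $P$-a.s.\ finiteness of $\int_0^T{}^p(\bar\theta_t^2)\,\ud t$ --- the dual predictable projection of an a.s.\ finite but not locally integrable increasing process can explode --- and the $\tau_n$ of Lemma~\ref{candec-G^X} bound $X$ and $V$ but say nothing about $\bar\theta$. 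Part (b) \emph{can} be salvaged on the dual side, but not via the reasoning you give: under the ELMM $Q$ with $\ud Q=L_T\,\ud P$, the process $S$ is a \emph{continuous} $\bG$-local martingale which is $\bG^X$-adapted, so the $\bG^X$-stopping times $\inf\{t:S_t\geq n\}$ localize it in $\bG^X$ under $Q$, and the restriction of $Q$ to $\cG^X_T$ is then an ELMM in $\bG^X$. The crucial ingredient is the continuity and $\bG^X$-adaptedness of $S$, not a general fact about projections of local martingales.
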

\begin{proof}
If NA1 holds in $\bG$, then Proposition \ref{arb-G} implies that $\int_0^T\!(\mutilde_t/V_t)^2\,\ud t<\infty$ $P$-a.s. We now show that this implies that $L(S,\bG^X)\subseteq L(S,\bG)$. In view of the discussion preceding the proposition, this will suffice to prove the claim.
Let $h\in L(S,\bG^X)$. Since $S$ is continuous and due to equation \eqref{candec-G^X-1}, this implies that $hS\in L^2_{\text{loc}}\bigl(\int\!V\ud B,\bG^X\bigr)\subseteq L^2_{\text{loc}}\bigl(\int\!V\ud\Wtilde,\bG\bigr)$, i.e., $\int_0^T(h_tS_tV_t)^2\ud t<\infty$ $P$-a.s. Due to the Cauchy-Schwarz inequality:
$$
\int_0^T\!\left|h_tS_t\mutilde_t\right|\ud t
= \int_0^T\Bigl|h_tS_tV_t\frac{\mutilde_t}{V_t}\Bigr|\ud t
\leq \Biggl(\int_0^T\!(h_tS_tV_t)^2\ud t\Biggr)^{\!\!1/2}
\Biggl(\int_0^T\!\left(\frac{\mutilde_t}{V_t}\right)^{\!2}\!\ud t\Biggr)^{\!\!1/2}
\!\!<\infty \qquad \text{$P$-a.s.}
$$
We have thus shown that $hS\in L\bigl(\int_0^{\cdot}\mutilde_u\,\ud u,\bG\bigr)\cap L^2_{\text{loc}}\bigl(\int\!V\ud\Wtilde,\bG\bigr)=L(X,\bG)$, using equation \eqref{candec-G}. Hence, we can conclude that $h\in L(S,\bG)$.
\end{proof}

Conversely, starting from a smaller filtration that satisfies NA1/NFLVR and passing to a larger filtration, it may well happen that arbitrage possibilities are introduced. As a simple example, consider the case where $X=W^1$, so that NFLVR (and, hence, NA1 as well) trivially holds in $\bF^X=\bF^{W^1}$. If $\tau$ is an \emph{honest time} (see \citeN{Jeulin80}, Chapter V) that avoids all $\bF^X$-stopping times, then both NFLVR and NA1 will fail to hold in the enlarged filtrations $\bG=\bG^X$, as shown in \citeN{FontanaJeanblancSong12}.

\subsection{Special cases}	\label{appl}

In this section we analyze two special cases of the general setting described so far, which are important in view of financial applications. 

\subsubsection{Immersion property between $\bF$ and $\bG$}	\label{immersion}

Let us suppose that the filtrations $\bF$ and $\bG$ satisfy the \emph{immersion property} (or \emph{$(\cH)$-hypothesis}, see \citeN{BremaudYor78}) with respect to the random time $\tau$, meaning that all $\bF$-martingales are also $\bG$-martingales. This situation is rather interesting in view of the fact that many random time models considered in financial and insurance mathematics satisfy this property. For instance, many popular credit risk models assume that $\tau$ is a \emph{doubly stochastic} random time (see e.g. \citeN{BieleckiRutkowski02}, Section 8.2). 
In this case, the immersion property holds between $\bF$ and $\bG$ and $\tau$ is a random change point that occurs in an unpredictable way.

The immersion property considerably simplifies the analysis of the model \eqref{S}-\eqref{model-2}. 
Indeed, Condition \ref{C} trivially holds with $\theta^i\equiv0$ for $i=1,2$, and Proposition \ref{arb-G} immediately implies that the model \eqref{S}-\eqref{model-2} satisfies NA1 in $\bG$. However, we cannot a priori exclude the existence of free lunches with vanishing risk.
Furthermore, if $\bF=\bF^{W^1}\vee\bF^{W^2}$, Theorem 2.3 of \citeN{Kusuoka99} provides the martingale representation \eqref{mrp-1} in $\bG$. Hence, part (b) of Proposition \ref{arb-G} together with \eqref{N} yields a complete description of all $\bG$-martingale deflators.

Since $\tau$ is a $\bG^X$-stopping time, it is easy to deduce from \eqref{candec-G}-\eqref{candec-G-3} and Lemma \ref{candec-G^X} that the canonical decomposition of $X$ in the filtration $\bG^X$ coincides with its canonical decomposition in $\bG$. Furthermore, given that $\bar{\theta}\equiv 0$ (and hence $^p\bar{\theta}\equiv 0$), Proposition \ref{arb-G^X} implies that NA1 holds for the model \eqref{S}-\eqref{model-2} considered in the filtration $\bG^X$.

\subsubsection{Stopping times with respect to the filtration $\bF$}	\label{example-stopping}

Let us now consider the case where $\tau$ is a stopping time with respect to $\bF$. For instance, $\tau$ could be defined as the first passage time of one of the two Brownian motions $W^1$ and $W^2$ at some given level. This is the case where $\tau$ is a random change point which is \emph{endogenous} to the model, in the sense that its occurrence is determined by the same stochastic processes which drive the dynamics of $S$.

If $\tau$ is an $\bF$-stopping time, it is evident that $\bG=\bF$. Hence, Condition \ref{C} is trivially satisfied with $\theta^i\equiv0$ for $i=1,2$. In this case, as discussed in Section \ref{immersion}, Proposition \ref{arb-G} implies that NA1 holds in $\bF=\bG$. However, we cannot exclude a priori the existence of free lunches with vanishing risk. Also, when considering the model \eqref{S}-\eqref{model-2} in the filtration $\bG^X$, we are in a situation analogous to that discussed at the end of Section \ref{immersion}.

\section{The price process filtration $\bF^X$}	\label{bF^S}

In this section, we study the model \eqref{S}-\eqref{model-2} with respect to its own filtration $\bF^X$, which is the smallest among all filtrations introduced in Section 2.

\subsection{Canonical decomposition and no-arbitrage properties in $\bF^X$}

In general, the random time $\tau$ is not necessarily an $\bF^X$-stopping time.
Nevertheless, the process $V=(V_t)_{0\leq t \leq T}$ introduced in \eqref{candec-G-2} is $\bF^X$-predictable, as shown in the following lemma.

\begin{lemma}	\label{V-pred}
Suppose that Conditions \ref{B} and \ref{C} hold. Then the process $V=(V_t)_{0\leq t \leq T}$ introduced in \eqref{candec-G-2} is $\bF^X$-predictable.
\end{lemma}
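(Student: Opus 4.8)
The plan is to show that the random coefficient process $V$, defined by $V_t = \indik_{\{t\leq\tau\}}\sigma^1(t,X_t)+\indik_{\{t>\tau\}}\sigma^2(t,X_t)$, can be recovered from the quadratic variation of $X$, which is an $\bF^X$-adapted (indeed $\bF^X$-predictable, after taking a suitable version) object. The key identity is that, from the canonical decomposition \eqref{candec-G^X-1} in $\bG^X$ (and equally from \eqref{candec-G} in $\bG$), the quadratic variation of $X$ is $[X]_t = \int_0^t V_u^2\,\ud u$ for all $t\in[0,T]$, $P$-a.s. Since $[X]$ depends only on the paths of $X$, it is $\bF^X$-adapted; more precisely, as a continuous increasing process it is $\bF^X$-predictable, and so is its time-derivative wherever the latter exists. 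Because $V$ is strictly positive and, by Condition \ref{B}-(b) together with the continuity of $X$, the map $t\mapsto V_t$ is left-continuous with right limits (it is continuous except possibly at $\tau$), the process $V^2$ is a genuine density of $[X]$ with respect to Lebesgue measure, and one can write $V_t^2 = \limsup_{h\downarrow 0} h^{-1}\bigl([X]_t-[X]_{(t-h)\vee 0}\bigr)$ for all $t$, $P$-a.s. This exhibits $V^2$, hence $V = \sqrt{V^2}$, as a limit of $\bF^X$-predictable processes, so $V$ admits an $\bF^X$-predictable version. Since the statement is about the process $V$ introduced in \eqref{candec-G-2} itself being $\bF^X$-predictable, and that process is already left-continuous, one checks that the left-continuous process $V$ coincides with the constructed $\bF^X$-predictable version up to indistinguishability, which settles the claim.

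Concretely, first I would record that $X$ is a continuous $\bG^X$-semimartingale with $[X]_t=\int_0^t V_u^2\,\ud u$, using Lemma \ref{candec-G^X} and the fact that $B$ is a $\bG^X$-Brownian motion orthogonal to the finite-variation part. Next I would observe that the quadratic variation $[X]$, being a functional of the path of $X$ alone (it is the ucp limit of the realized variation along shrinking partitions), defines an $\bF^X$-adapted continuous increasing process, and hence an $\bF^X$-predictable one. Then, for Lebesgue-a.e.\ $t$, the derivative $\frac{\ud}{\ud t}[X]_t$ exists and equals $V_t^2$; by left-continuity of $t\mapsto V_t$ (away from $\tau$, where the left limit is $\sigma^1(\tau,X_\tau)$, matching the left-continuous convention) this derivative identity extends to every $t$ once we use the left limit $\lim_{h\downarrow0}h^{-1}([X]_t-[X]_{t-h})$. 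Finally I would conclude that $V=\sqrt{\frac{\ud}{\ud t^-}[X]}$ is $\bF^X$-predictable as the composition of a continuous function with an $\bF^X$-predictable process obtained as a pointwise limit of difference quotients of the $\bF^X$-predictable process $[X]$.

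The main obstacle is the passage from "a.e.-$t$ density" to "every-$t$ left-continuous version": one has to make sure that the $\bF^X$-predictable process built from the difference quotients of $[X]$ is genuinely indistinguishable from the left-continuous process $V$ of \eqref{candec-G-2}, and not merely equal $\ud t\otimes\ud P$-a.e. This is where Condition \ref{B}-(b) and the continuity of $X$ are essential: they guarantee that $t\mapsto\sigma^i(t,X_t)$ is continuous, so that $V$ has left limits everywhere and its only possible discontinuity is a jump at $\tau$ from $\sigma^1(\tau,X_\tau)$ to $\sigma^2(\tau,X_\tau)$, which is exactly the size that the left-limit of the difference quotient of $[X]$ picks up; away from $\tau$ the two agree by Lebesgue differentiation of a continuous density. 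One should also note the harmless subtlety that at $t=\tau$ the left-continuous convention gives $V_\tau=\sigma^1(\tau,X_\tau)$, consistent with $\indik_{\lsi0,\tau\rsi}$ being left-continuous, so no special treatment of the single point $\tau$ is needed.
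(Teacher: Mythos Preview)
Your proposal is correct and follows essentially the same approach as the paper: recover $V^2$ as the left derivative of $[X]_t=\int_0^t V_u^2\,\ud u$, use that this derivative is $\cF^X_t$-measurable as a left limit of difference quotients of the $\bF^X$-adapted process $[X]$, and then conclude $\bF^X$-predictability from the left-continuity of $V$ guaranteed by Condition~\ref{B}-(b). The paper's argument is simply a terser version of what you wrote, without the explicit discussion of the ``a.e.-$t$ versus every-$t$'' passage.
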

\begin{proof}
For almost every $\omega\in\Omega$ and any $t\in [0,T]$, the quadratic variation $[X]_t=\int_0^tV_u^2\,\ud u$ is
differentiable with respect to $t$ and the derivative is
$$
\frac{\partial}{\partial t}[X]_t(\omega) = V_t^2(\omega) =
\indik_{\{t\leq\tau(\omega)\}}\bigl(\sigma^1\bigl(t,X_t(\omega)\bigr)\bigr)^2
+\indik_{\{t>\tau(\omega)\}}\bigl(\sigma^2\bigl(t,X_t(\omega)\bigr)\bigr)^2.
$$
For all $t\in(0,T]$, the derivative $\frac{\partial}{\partial t}[X]_t$ is $\cF^X_t$-measurable, because it equals the left derivative
$\lim_{\epsilon\searrow 0}\bigl([X]_t-[X]_{t-\epsilon}\bigr)/\epsilon$, while for $t=0$ we have $V^2_0=\bigl(\sigma^1(0,0)\bigr)^2$. This implies that the process $V=(V_t)_{0\leq t \leq T}$ is $\bF^X$-adapted. Being left-continuous, due to part (b) of Condition \ref{B}, it is also $\bF^X$-predictable.
\end{proof}

The next lemma gives the canonical decomposition of $X$ in its own filtration $\bF^X$.

\begin{lemma}	\label{candec-F^X}
Suppose that Conditions \ref{B} and \ref{C} hold. Then the process $X$ admits the following canonical decomposition with respect to the filtration $\bF^X$:
\be	\label{candec-F^X-1}
X_t = \int_0^t\bar{\mu}_u\,\ud u+\int_0^tV_u\,\ud \bar{B}_u,
\qquad \text{for all }t\in[0,T],
\ee
where the $\bF^X$-predictable process $\bar{\mu}=(\bar{\mu}_t)_{0\leq t \leq T}$ is defined as follows, for all $t\in[0,T]$:
\be	\label{candec-F^X-2}	\begin{aligned}
\bar{\mu}_t &:= \,^p(\indik_{\lsi0,\tau\rsi})_t\,\mu^1(t,X_t)+
\,^p\!\left(\indik_{\lsi0,\tau\rsi}\theta^1\right)_t\,\sigma^1(t,X_t)	\\
&\quad
+\,^p(\indik_{\lsir\tau,T\rsi})_t\,\mu^2(t,X_t)
+\,^p\Bigl(\indik_{\lsir\tau,T\rsi}\bigl(\rho\,\theta^1+\sqrt{1-\rho^2}\,\theta^2\bigr)\Bigr)_{\!t}\,\sigma^2(t,X_t),
\end{aligned} \ee
with $^p$ denoting the $\bF^X$-predictable projection and where the process $\bar{B}=(\bar{B}_t)_{0\leq t \leq T}$ is an $\bF^X$-Brownian motion and $V=(V_t)_{0\leq t \leq T}$ is as in \eqref{candec-G-2}.
\end{lemma}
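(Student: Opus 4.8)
The plan is to obtain the canonical decomposition of $X$ in $\bF^X$ by projecting the $\bG$-decomposition \eqref{candec-G}--\eqref{candec-G-3} onto the smaller filtration $\bF^X$, exactly in the spirit of the proof of Lemma \ref{candec-G^X}, but now using $\bF^X\subset\bG$ rather than $\bG^X\subset\bG$. First I would invoke Stricker's theorem (\citeN{Stricker77}) to conclude that the continuous $\bG$-semimartingale $X$ is also an $\bF^X$-semimartingale, and write its canonical decomposition $X=\bar A+\bar M$ with $\bar A$ continuous $\bF^X$-predictable of finite variation, $\bar A_0=0$, and $\bar M\in\Mloc(\bF^X)$, $\bar M_0=0$. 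The martingale part is easy to identify: by Lemma \ref{V-pred} the process $V$ is $\bF^X$-predictable and strictly positive, and since $[X]_t=\int_0^tV_u^2\,\ud u$ the quadratic variation of $X$ is the same in every filtration, so $[\bar M]_t=\int_0^tV_u^2\,\ud u$; hence $\bar B:=\int V^{-1}\,\ud\bar M$ is a continuous $\bF^X$-local martingale with $[\bar B]_t=t$, and by L\'evy's characterization $\bar B$ is an $\bF^X$-Brownian motion, giving the $\int_0^tV_u\,\ud\bar B_u$ term in \eqref{candec-F^X-1}.

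The finite variation part requires the localization argument. For $n\in\N$ set $\tau_n:=\inf\{t\in[0,T]:|X_t|\geq n\}\wedge T$; these are $\bF^X$-stopping times with $\tau_n\nearrow T$ $P$-a.s. By Remarks \ref{growth}-1 and the definition \eqref{candec-G-2} of $V$ one has $\int_0^{\tau_n}V_t^2\,\ud t\leq\bar K(1+n^2)T$ $P$-a.s., so $\{\tau_n\}$ is simultaneously an $\bF^X$-localizing sequence for $\bar M$ and a $\bG$-localizing sequence for $\int V\,\ud\Wtilde$. Proposition 9.24 of \citeN{Jacod79} (the same tool used in Lemma \ref{candec-G^X}) then identifies $\bar A$ as the dual $\bF^X$-predictable projection of $\int_0^{\cdot}\mutilde_u\,\ud u$, and by 1.40 of \citeN{Jacod79} this equals $\int_0^{\cdot}\,^p\mutilde_u\,\ud u$ with $^p\mutilde$ the $\bF^X$-predictable projection of $\mutilde$. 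The final step is to compute $^p\mutilde$ explicitly: expanding $\mutilde$ from \eqref{candec-G-1} as
$$
\mutilde_t=\indik_{\lsi0,\tau\rsi_t}\mu^1(t,X_t)+\indik_{\lsi0,\tau\rsi_t}\theta^1_t\sigma^1(t,X_t)+\indik_{\lsir\tau,T\rsi_t}\mu^2(t,X_t)+\indik_{\lsir\tau,T\rsi_t}\bigl(\rho\theta^1_t+\sqrt{1-\rho^2}\theta^2_t\bigr)\sigma^2(t,X_t),
$$
one uses that $\mu^i(t,X_t)$ and $\sigma^i(t,X_t)$ are themselves $\bF^X$-adapted and, being continuous (Condition \ref{B}), $\bF^X$-optional, so they can be pulled out of the predictable projection; what remains is to project $\indik_{\lsi0,\tau\rsi}$, $\indik_{\lsi0,\tau\rsi}\theta^1$, $\indik_{\lsir\tau,T\rsi}$ and $\indik_{\lsir\tau,T\rsi}(\rho\theta^1+\sqrt{1-\rho^2}\theta^2)$ separately, yielding precisely formula \eqref{candec-F^X-2}.

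The main obstacle I expect is the justification that the $\bF^X$-optional (indeed continuous adapted) factors $\mu^i(t,X_t),\sigma^i(t,X_t)$ may be factored out of the $\bF^X$-predictable projection when they multiply a merely integrable (not necessarily bounded) predictable-projection candidate such as $\indik_{\lsi0,\tau\rsi}\theta^1$ — this is where the integrability footnote attached to \eqref{candec-G} (namely $\int_0^T|\theta^i_u|\,\ud u<\infty$ $P$-a.s. together with continuity of $X$, which makes $\int_0^t\theta^i_u\sigma^i(u,X_u)\,\ud u$ well-defined) is used, and one should phrase the pull-out via the standard property of predictable projections for products with optional processes after a further localization on the continuous, hence locally bounded, factors $\sigma^i(\cdot,X_\cdot)$. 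Everything else is routine once the localizing sequence $\{\tau_n\}$ is in place, and I would keep the exposition parallel to Lemma \ref{candec-G^X} to minimize repetition.
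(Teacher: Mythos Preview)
Your proposal is correct and follows exactly the approach the paper takes: its proof is the single sentence ``analogous to the proof of Lemma \ref{candec-G^X}, noting that the stopping times $\tau_n$ defined therein are also $\bF^X$-stopping times and the process $V$ is $\bF^X$-predictable by Lemma \ref{V-pred}'', and you have simply unpacked this. One small sharpening: to factor $\mu^i(\cdot,X_\cdot)$ and $\sigma^i(\cdot,X_\cdot)$ out of the $\bF^X$-predictable projection you need them to be $\bF^X$-\emph{predictable}, not merely optional, and Condition \ref{B}(b) gives joint continuity only for $\sigma^i$; however both factors are $\bF^X$-predictable anyway, since $X$ is continuous and $\bF^X$-adapted (hence predictable) and $\mu^i,\sigma^i$ are Borel, so no continuity of $\mu^i$ is required.
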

\begin{proof}
The proof is analogous to the proof of Lemma \ref{candec-G^X} noting that the stopping times $\tau_n$ defined therein are also $\bF^X$-stopping times and the process $V$ is $\bF^X$-predictable by Lemma \ref{V-pred}.
\end{proof}

The next proposition answers the question of whether the model \eqref{S}-\eqref{model-2}, considered now with respect to its own filtration $\bF^X$, allows for arbitrage profits. The proof is similar to that of Proposition \ref{arb-G} and, hence, omitted.

\begin{proposition}	\label{arb-F^X}
Suppose that Conditions \ref{B} and \ref{C} hold. Then the following assertions hold for the model \eqref{S}-\eqref{model-2} considered with respect to the filtration $\bF^X$:
\begin{itemize}
\item[(a)]
NA1 holds if and only if $\int_0^T\!(\bar{\mu}_t/V_t)^2\,\ud t$ $<\infty$ $P$-a.s.
\item[(b)]
NFLVR holds if and only if NA1 holds and there exists $N=(N_t)_{0\leq t \leq T}\in\tMloc(\bF^X)$ with $N_0=0$, $\Delta N>-1$ $P$-a.s., $[N,\bar{B}]=0$, such that $E\bigl[\cE\bigl(-\int(\bar{\mu}/V)\ud\bar{B}+N\bigr)_T\bigr]=1$.
\end{itemize}
\end{proposition}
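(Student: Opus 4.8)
The plan is to mimic the proof of Proposition \ref{arb-G}, using the canonical decomposition of $X$ in $\bF^X$ provided by Lemma \ref{candec-F^X} as the starting point. For part (a), the key observation is that, by Lemma \ref{candec-F^X}, the process $S=S_0\cE(X)$ has $\bF^X$-canonical decomposition $S=M+B$ with $\ud M_t=S_tV_t\,\ud\bar B_t$ and $\ud B_t=S_t\bar\mu_t\,\ud t$, so the $\bF^X$-market price of risk process is $\bar\mu/V$. Invoking part (i) of Definition \ref{def-arb} together with Theorem 4 of \citeN{Kardaras10}, NA1 in $\bF^X$ is equivalent to $\int_0^T(\bar\mu_t/V_t)^2\,\ud t<\infty$ $P$-a.s., which is exactly the asserted condition. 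Here $V$ is strictly positive and locally bounded away from zero (using $\xi:=\min_{t\in[0,T]}\{\sigma^1(t,X_t)\wedge\sigma^2(t,X_t)\}>0$ $P$-a.s., valid since $\sigma^i$ is jointly continuous by Condition \ref{B}-(b) and $X$ is continuous), so $\bar\mu/V$ is a well-defined $\bF^X$-predictable process and the quotient causes no integrability pathology away from the question of whether the integral is finite. Unlike in Proposition \ref{arb-G}, I would not attempt to re-express the condition in terms of a projected $\bar\theta$: the drift $\bar\mu$ in \eqref{candec-F^X-2} mixes the predictable projections of $\indik_{\lsi0,\tau\rsi}$, $\indik_{\lsi0,\tau\rsi}\theta^1$, etc., in a way that does not factor through a single process, so the statement is deliberately left in the form $\int_0^T(\bar\mu_t/V_t)^2\,\ud t<\infty$.

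For part (b), I would follow exactly the argument for part (b) of Proposition \ref{arb-G}. Assuming NA1 holds in $\bF^X$, by part (a) the $\bF^X$-local martingale $\Zhat:=\cE(-\int(\bar\mu/V)\,\ud\bar B)$ is well-defined, strictly positive and continuous, and it is an $\bF^X$-local martingale deflator for $S$ (one checks $S\Zhat\in\Mloc(\bF^X)$ by Itô's product rule, the finite-variation part cancelling by construction). Then one appeals to part (ii) of Definition \ref{def-arb} together with Lemma 4.3.15 of \citeN{FontanaRunggaldier12}: NFLVR holds if and only if there is a strictly positive $\bF^X$-martingale deflator, and every $\bF^X$-local martingale deflator $L$ can be written as $L=\Zhat\,\cE(N)$ for some $N\in\tMloc(\bF^X)$ with $N_0=0$, $\Delta N>-1$ $P$-a.s. and $[N,\bar B]=0$ (the orthogonality $[N,\bar B]=0$ reflecting that the $\bar B$-driven part of the deflator is already fixed to be $-\int(\bar\mu/V)\,\ud\bar B$ by the requirement $S L\in\Mloc(\bF^X)$). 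Requiring $L$ to be a true martingale is the normalization $E[\cE(-\int(\bar\mu/V)\,\ud\bar B+N)_T]=1$, which is the stated condition.

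The main obstacle — though it is a mild one, already anticipated by the hypothesis of Lemma \ref{candec-F^X} — is purely in making sure $\bar\mu/V$ is honestly a well-defined $\bF^X$-predictable process: the predictable projections appearing in \eqref{candec-F^X-2} are taken relative to $\bF^X$ (onto which $\tau$ need not be a stopping time), and one needs Condition \ref{C} to guarantee $\int_0^T|\theta^i_u|\,\ud u<\infty$ $P$-a.s. so that $\,^p(\indik_{\lsi0,\tau\rsi}\theta^1)$ and $\,^p(\indik_{\lsir\tau,T\rsi}(\rho\theta^1+\sqrt{1-\rho^2}\theta^2))$ exist, and Condition \ref{B}-(b) together with continuity of $X$ to control the multiplication by $\sigma^i(t,X_t)$. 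Once these measurability and finiteness points are in place, both (a) and (b) are immediate transcriptions of the arguments already given for Proposition \ref{arb-G}, which is precisely why the authors omit the details.
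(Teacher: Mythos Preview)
Your proposal is correct and matches the paper's intended approach: the authors explicitly omit the proof, stating only that it is similar to that of Proposition \ref{arb-G}, which is precisely the route you take via Lemma \ref{candec-F^X} and the references to \citeN{Kardaras10} and \citeN{FontanaRunggaldier12}. Your additional observation that the NA1 condition is deliberately left as $\int_0^T(\bar\mu_t/V_t)^2\,\ud t<\infty$ rather than rephrased via a projected $\bar\theta$ is also on point.
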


\subsection{The $\bF^X$-martingale representation property}

This section will provide the martingale representation property with respect to
the price filtration $\bF^X$. We shall see that the representation formula and
thus the market completeness or incompleteness
are determined by the relationship of the two volatility functions $\sigma^1$ and
$\sigma^2$. In particular, we provide the representation theorems respectively under Condition
III-(a) and (b) below.

\begin{condition}   \label{E}
(a) Distinct volatility functions
$$\sigma^1(t,x)\neq \sigma^2(t,x) \text{, for any } (t,x)\in[0,T]\times\R.$$
(b) Identical volatility functions
$$\sigma^1(t,x)= \sigma^2(t,x)=:\sigma(t,x) \text{, for all } (t,x)\in[0,T]\times\R.$$
\end{condition}

\subsubsection{Distinct volatility functions}
Let us first analyze the case where $\sigma^1$ and $\sigma^2$ differ everywhere.
Under Condition \ref{E}-(a), the result below explicitly characterizes the filtration $\bF^X$.

\begin{proposition} \label{stopping}
Suppose that Conditions \ref{B}, \ref{C} and \ref{E}-(a) hold. Then $\tau$ is a stopping time for the filtration $\bF^X$ and, consequently, the filtrations $\bF^X$ and $\bG^X$ coincide.
\end{proposition}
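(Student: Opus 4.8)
The plan is to reconstruct $\tau$ from the path of $X$ (equivalently $S$) by exploiting the fact that the two volatility regimes are everywhere distinct, and then conclude from Lemma \ref{V-pred} or a direct argument that the pair $(X,\indik_{\{\tau\leq\cdot\}})$ is $\bF^X$-adapted. Recall from Lemma \ref{V-pred} that the process $V=(V_t)_{0\leq t\leq T}$ with $V_t=\indik_{\{t\leq\tau\}}\sigma^1(t,X_t)+\indik_{\{t>\tau\}}\sigma^2(t,X_t)$ is $\bF^X$-adapted, being obtained as the $\cF^X_t$-measurable derivative of the quadratic variation $[X]_t=\int_0^t V_u^2\,\ud u$. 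Under Condition \ref{E}-(a) we have $\sigma^1(t,X_t(\omega))\neq\sigma^2(t,X_t(\omega))$ for every $t$ and $\omega$, so the indicator $\indik_{\{t>\tau(\omega)\}}$ can be recovered pointwise from $V_t(\omega)$ and $X_t(\omega)$: indeed $\indik_{\{t>\tau\}}=\indik_{\{V_t=\sigma^2(t,X_t)\}}$, and since both $V$ and $X$ are $\bF^X$-adapted (and $\sigma^2$ is Borel), the process $(\indik_{\{t>\tau\}})_{0\leq t\leq T}$ is $\bF^X$-adapted.

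From adaptedness of $(\indik_{\{\tau<t\}})_t$ one concludes that $\{\tau<t\}\in\cF^X_t$ for all $t$, hence $\{\tau\leq t\}=\bigcap_{n}\{\tau<t+1/n\}$, and by right-continuity of $\bF^X$ (it satisfies the usual conditions) $\{\tau\leq t\}\in\bigcap_{s>t}\cF^X_s=\cF^X_t$. Therefore $\tau$ is an $\bF^X$-stopping time. Once this is established, the equality $\bG^X=\bF^X$ is immediate: since $\tau$ is already an $\bF^X$-stopping time, $\sigma(\tau\wedge s)\subseteq\cF^X_s$, so $\cG^X_t=\bigcap_{s>t}(\cF^X_s\vee\sigma(\tau\wedge s))=\bigcap_{s>t}\cF^X_s=\cF^X_t$; the reverse inclusion $\cF^X_t\subseteq\cG^X_t$ is trivial.

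The one delicate point is the measurable-selection step: I want to pass from the pathwise identity $\indik_{\{t>\tau(\omega)\}}=\indik_{\{V_t(\omega)=\sigma^2(t,X_t(\omega))\}}$ to genuine $\bF^X$-adaptedness of the process $t\mapsto\indik_{\{t>\tau\}}$. This requires only that $(t,\omega)\mapsto V_t(\omega)$ and $(t,\omega)\mapsto X_t(\omega)$ be jointly measurable (which holds, $X$ being a continuous semimartingale and $V$ a left-continuous adapted process) and that $(t,x)\mapsto\sigma^2(t,x)-\sigma^1(t,x)$ be Borel and nowhere zero, so that the set $\{(t,\omega): V_t(\omega)=\sigma^2(t,X_t(\omega))\}=\{(t,\omega):\,^p(\indik_{\lsir\tau,T\rsi})\text{-type event}\}$ is measurable; fixing $t$, the section is $\cF^X_t$-measurable because $V_t$ is $\cF^X_t$-measurable and $X_t$ is $\cF^X_t$-measurable. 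I expect this to be the main obstacle, but it is a routine verification rather than a deep difficulty; the essential content of the proposition is the simple observation that distinctness of $\sigma^1,\sigma^2$ makes the regime indicator a deterministic function of the observable quantities $([X],X)$.
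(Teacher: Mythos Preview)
Your proposal is correct and follows essentially the same route as the paper: both arguments use Lemma \ref{V-pred} to extract from $[X]$ the $\cF^X_t$-measurable instantaneous variance $V_t^2$, observe that under Condition \ref{E}-(a) the event $\{\tau<t\}$ (equivalently $\{\tau\geq t\}$) is determined by whether $V_t$ equals $\sigma^2(t,X_t)$ or $\sigma^1(t,X_t)$, and then invoke right-continuity of $\bF^X$ to pass from $\{\tau<t\}\in\cF^X_t$ to $\{\tau\leq t\}\in\cF^X_t$. Your ``delicate point'' about measurability is not really a measurable-selection issue: for fixed $t$ you only need $\cF^X_t$-measurability of the set $\{V_t=\sigma^2(t,X_t)\}$, which is immediate since $V_t$ and $X_t$ are $\cF^X_t$-measurable and $\sigma^2$ is Borel.
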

\begin{proof}
Condition \ref{E}-(a) and the proof of Lemma \ref{V-pred} together imply that, for all $t\in[0,T]$:
\begin{align}
\label{stopping-1}
\left\{\omega\in\Omega:\tau(\omega)\geq t\right\} & =
\Bigl\{\omega\in\Omega:\frac{\partial}{\partial
t}[X]_t(\omega)=\bigl(\sigma^1\bigl(t,X_t(\omega)\bigr)\bigr)^2\Bigr\}\in\cF^X_t.
\end{align}
Together with the right continuity of the filtration $\bF^X$, this implies that $\{\tau\leq t\}\in\cF^X_t$ for all $t\in[0,T]$, meaning that $\tau$ is an $\bF^X$-stopping time.
\end{proof}

Due to Proposition \ref{stopping}, as long as Condition \ref{E}-(a) holds, all results obtained in Section \ref{bG^S} for the progressively enlarged filtration $\bG^X$ are also true for the filtration $\bF^X$, thus providing a complete description of the filtration $\bF^X$ generated by the process $X$. In particular, Lemma \ref{filtrations} gives an explicit description of the filtration $\bF^X=\bG^X$ as the progressive enlargement of the filtration $\bF^{\What}$ generated by the drifted $\bF^X$-Brownian motion $\What$. Furthermore, any $\bF^X$-local martingale admits the representation obtained in Proposition \ref{MRP}.

\begin{remark}
\label{stopping-2}
 Replacing Condition \ref{E}-(a) with its weaker almost sure version
$$
\sigma^1(t,X_t(\omega))\neq \sigma^2(t,X_t(\omega)) \qquad P\text{-a.s.}, \ \ \text{for any } t \in [0, T],
$$
the result of Proposition \ref{stopping} still remains valid noting that the set
$$
\left\{\omega\in\Omega:\tau(\omega) <t, \sigma^1\bigl(t,X_t(\omega)\bigl) = \sigma^2\bigl(t,X_t(\omega)\bigl)\right\},
$$
appearing in addition on the left-hand side of \eqref{stopping-1}, is a $P$-null set, and hence is in $\cF^X_t$ due to completeness of the filtration $\bF^X$.
\end{remark}

\subsubsection{Identical volatility functions}	\label{ident-vol}

Let us now analyze the case of Condition \ref{E}-(b), where the two
volatility functions $\sigma^1$ and $\sigma^2$ coincide everywhere. Similarly to Remark \ref{stopping-2}, this condition can be relaxed to hold only $P$-a.s. for every $t \in [0, T]$, which by continuity of $X$ and $\sigma^1$ and $\sigma^2$ implies that the processes $(\sigma^1(t, X_t))_{t \in [0, T]}$ and $(\sigma^2(t, X_t))_{t \in [0, T]}$ are indistinguishable. 
Lemma \ref{candec-F^X} gives the following canonical
decomposition of $X$ in its own filtration $\bF^X$:
\be
\label{candec-special} X_t = \int_0^t\!\bar{\mu}_u\,\ud
u+\int_0^t\!\sigma(u,X_u)\,\ud\bar{B}_u,
\qquad \text{for all $t\in[0,T]$,}
\ee
where the process $\bar{\mu}=(\bar{\mu}_t)_{0\leq t
\leq T}$ is as in \eqref{candec-F^X-2}. Since the continuous process
$\sigma(\cdot,X_{\cdot})$ is $\bF^X$-adapted and never attains zero, it is
$\bF^X$-predictable and bounded away from zero. Hence, we can define
a drifted $\bF^X$-Brownian motion $\bar{Y}=(\bar{Y}_t)_{0\leq t
\leq T}$ as follows, for all $t\in[0,T]$:
\be	\label{def-Ybar}
\bar{Y}_t:=\int_0^t\!\frac{1}{\sigma(u,X_u)}\,\ud
X_u=\int_0^t\!\frac{\bar{\mu}_u}{\sigma(u,X_u)}\,\ud u+\bar{B}_t.
\ee
Denote by
$\bF^{\bar{Y}}=(\cF^{\bar{Y}}_t)_{0\leq t \leq T}$ the
right-continuous $P$-augmented natural filtration $\bar{Y}$. We
can easily prove the next lemma, which shows that $\bF^X$ coincides with the filtration $\bF^{\bar{Y}}$ generated by the drifted Brownian motion $\bar{Y}$ (see also Section 3 of \citeN{PhamQuenez01} for related results).

\begin{lemma}   \label{filtrations special}
Suppose that Conditions \ref{B}, \ref{C} and \ref{E}-(b) hold. Then $\bF^X=\bF^{\bar{Y}}$.
\end{lemma}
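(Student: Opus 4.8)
The plan is to mimic the proof of Lemma \ref{filtrations}, exploiting the fact that under Condition \ref{E}-(b) the coefficient $f(\omega,t,x)$ of \eqref{rnd-fcts} reduces to $\sigma(t,x)$, which no longer depends on $\omega$ through $\tau$. First I would establish the easy inclusion $\bF^{\bar Y}\subseteq\bF^X$: by \eqref{def-Ybar} the process $\bar Y$ is obtained from $X$ by integrating the $\bF^X$-predictable (indeed $\bF^X$-adapted and continuous) process $1/\sigma(\cdot,X_\cdot)$ against $\ud X$, so $\bar Y$ is $\bF^X$-adapted, hence $\cF^{\bar Y}_t\subseteq\cF^X_t$ for all $t\in[0,T]$, and right-continuity plus $P$-completeness of $\bF^X$ preserves the inclusion after augmentation.

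For the converse inclusion $\bF^X\subseteq\bF^{\bar Y}$, the point is that $X$ solves the SDE
$$
X_t=\int_0^t\sigma(u,X_u)\,\ud\bar Y_u,\qquad t\in[0,T],
$$
which follows from \eqref{candec-special} and \eqref{def-Ybar}. Since $\sigma$ is jointly continuous and globally Lipschitz in $x$ by Condition \ref{B}, I would set up the usual Picard iteration: put $X^0\equiv0$ and $X^{k+1}_t:=\int_0^t\sigma(u,X^k_u)\,\ud\bar Y_u$; each $X^k$ is $\bF^{\bar Y}$-adapted by induction, and by Theorem V.8 of \citeN{Protter05} (applicable because $\sigma(\cdot,\cdot)$ is in particular random Lipschitz) the iterates $X^k$ converge to $X$ uniformly on compacts in probability, with a subsequence converging $P$-a.s. uniformly on compacts. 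Therefore $X$ is $\bF^{\bar Y}$-adapted, which together with right-continuity and $P$-completeness gives $\cF^X_t\subseteq\cF^{\bar Y}_t$ for all $t\in[0,T]$, completing the proof.

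I do not anticipate a serious obstacle here: this is essentially a carbon copy of the argument used for Lemma \ref{filtrations}, simplified by the absence of the indicator functions $\indik_{\{t\leq\tau\}}$, $\indik_{\{t>\tau\}}$ in the diffusion coefficient. The only point requiring a little care is verifying that $1/\sigma(\cdot,X_\cdot)$ is $\bF^X$-predictable and locally bounded so that the stochastic integral defining $\bar Y$ makes sense and $\bar Y$ is indeed $\bF^X$-adapted; but this has already been recorded in the paragraph preceding the lemma, where it is noted that $\sigma(\cdot,X_\cdot)$ is continuous, $\bF^X$-adapted and bounded away from zero. One should also remark that, in contrast with the general case, here $\bar B$ (and hence $\bar Y$) is a genuine $\bF^X$-Brownian motion by Lemma \ref{candec-F^X}, so $\bF^{\bar Y}$ is well-defined as the augmented natural filtration of a drifted Brownian motion.
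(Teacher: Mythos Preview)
Your proposal is correct and follows exactly the paper's own proof: the easy inclusion $\bF^{\bar Y}\subseteq\bF^X$ from $\bar Y$ being $\bF^X$-adapted, and the converse via $X_t=\int_0^t\sigma(u,X_u)\,\ud\bar Y_u$ together with the Picard iteration argument of Lemma \ref{filtrations} (the paper simply says ``the same arguments used in the proof of Lemma \ref{filtrations}'', noting that $\sigma$ is deterministic).
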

\begin{proof}
Clearly, the process $\bar{Y}$ defined in \eqref{def-Ybar} is $\bF^X$-adapted and, hence, we have $\bF^{\bar{Y}}\subseteq\bF^X$. To prove the converse inclusion, it suffices to note that the process $X$ can be represented as $X_t = \int_0^t\sigma(u,X_u)\,\ud\bar{Y}_u$ for all $t\in[0,T]$. The same arguments used in the proof of Lemma \ref{filtrations} allow then to show that $\bF^X\subseteq\bF^{\bar{Y}}$, noting that the function $\sigma$ is deterministic.
\end{proof}

As in Section \ref{bG^S}, we can now prove a martingale representation result for the filtration $\bF^X$ in the special case where the two volatility functions $\sigma^1$ and $\sigma^2$ coincide.
We only give an outline of the proof, the arguments being similar to those used for proving Proposition \ref{MRP}. 

\begin{proposition}	\label{MRP special}
Suppose that Conditions \ref{B}, \ref{C} and \ref{E}-(b) hold and assume in addition that NA1 holds in the filtration $\bF^X$. Then every $\bF^X$-local martingale $L=(L_t)_{0\leq t \leq T}$ admits a representation of the form:
$$
L_t = L_0 + \int_0^t\!\varphi_u\,\ud\bar{B}_u,
\qquad\text{ for all }t\in[0,T],
$$
for some $\bF^X$-predictable process $\varphi=(\varphi_t)_{0\leq t \leq T}$ with $\int_0^T\!\varphi^2_t\,\ud t<\infty$ $P$-a.s. 
\end{proposition}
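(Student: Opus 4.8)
\textbf{Proof strategy for Proposition \ref{MRP special}.}

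The plan is to mimic the argument of Proposition \ref{MRP}, but with a crucial simplification: since $\sigma^1=\sigma^2=\sigma$ is deterministic, the volatility $(\sigma(t,X_t))_{t\in[0,T]}$ carries no information about $\tau$, and by Lemma \ref{filtrations special} we have $\bF^X=\bF^{\bar Y}$, the natural filtration of the \emph{drifted} Brownian motion $\bar Y$. The key point that makes the compensated-jump term $\int\psi\,\ud M^{\bG^X}$ disappear compared to \eqref{MRP-0} is precisely that $\bF^X$ is (the augmentation of) a Brownian filtration, not a progressive enlargement: here $\tau$ contributes nothing beyond what $X$ already reveals, so there is no extra jump martingale to represent against.

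First I would use NA1 in $\bF^X$ together with Proposition \ref{arb-F^X}(a) to deduce $\int_0^T(\bar\mu_t/\sigma(t,X_t))^2\,\ud t<\infty$ $P$-a.s., which allows defining the strictly positive continuous $\bF^X$-local martingale $\Zhat:=\cE\bigl(-\int(\bar\mu/\sigma(\cdot,X_\cdot))\,\ud\bar B\bigr)$. Let $\{\tau_n\}_{n\in\N}$ be a localizing sequence making $\Zhat^{\tau_n}$ a uniformly integrable $\bF^X$-martingale, and set $\bF^{X,n}:=(\cF^X_{t\wedge\tau_n})_{0\leq t\leq T}$ and $Q^n$ on $\cF^X_{T\wedge\tau_n}$ by $dQ^n=\Zhat_{T\wedge\tau_n}\,dP$. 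By Girsanov's theorem the stopped process $\bar Y^{\tau_n}$ is a stopped $(Q^n,\bF^{X,n})$-Brownian motion, since $\bar Y^{\tau_n}_t=\bar B^{\tau_n}_t+\int_0^{t\wedge\tau_n}(\bar\mu_u/\sigma(u,X_u))\,\ud u=\bar B^{\tau_n}_t-\int_0^{t\wedge\tau_n}(1/\Zhat^{\tau_n}_u)\,\ud[\Zhat,\bar B]^{\tau_n}_u$ and $[\bar Y^{\tau_n}]_t=t\wedge\tau_n$. In view of Lemma \ref{filtrations special}, $\bF^{X,n}$ is the stopped natural filtration of the stopped $(Q^n,\bF^{X,n})$-Brownian motion $\bar Y^{\tau_n}$, so the classical Brownian martingale representation theorem gives, for any $(Q^n,\bF^{X,n})$-local martingale $\hat L$, a representation $\hat L_t=\hat L_0+\int_0^t\varphi_u\,\ud\bar Y^{\tau_n}_u$ with $\varphi$ being $\bF^{X,n}$-predictable and $\int_0^T\varphi_t^2\,\ud t<\infty$ $P$-a.s.

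Then, given $L\in\Mloc(P,\bF^X)$, Girsanov's theorem shows $L^{\tau_n}-\int(1/\Zhat^{\tau_n})\,\ud[\Zhat,L]^{\tau_n}$ is a $(Q^n,\bF^{X,n})$-local martingale, so by the above there is an $\bF^{X,n}$-predictable $\varphi^n$ with
$$
L^{\tau_n}_t = L_0+\int_0^t\varphi^n_u\,\ud\bar B^{\tau_n}_u+\int_0^t\frac{1}{\Zhat^{\tau_n}_u}\,\ud[\Zhat,L]^{\tau_n}_u+\int_0^{t\wedge\tau_n}\varphi^n_u\,\frac{\bar\mu_u}{\sigma(u,X_u)}\,\ud u.
$$
Since $L^{\tau_n}$ and $\int\varphi^n\,\ud\bar B^{\tau_n}$ are $(P,\bF^X)$-local martingales, the $\bF^X$-predictable finite-variation terms must cancel, and one patches $\varphi:=\sum_{n=1}^\infty\indik_{\lsir\tau_{n-1},\tau_n\rsi}\varphi^n$, with $\tau_0:=0$, to obtain $L_t=L_0+\int_0^t\varphi_u\,\ud\bar B_u$ on $[0,T]$. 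The main obstacle, as in Proposition \ref{MRP}, is the localization bookkeeping: one must verify that the local martingale $\Zhat$ is genuinely strictly positive and continuous so that the change of measure is legitimate on each $[\![0,\tau_n]\!]$, that the stopped process $\bar Y^{\tau_n}$ really generates $\bF^{X,n}$ (this is where Lemma \ref{filtrations special} and the determinism of $\sigma$ are essential), and that the patched process $\varphi$ is well-defined and $S$-integrable with the claimed a.s. square-integrability. The disappearance of the jump term relative to \eqref{MRP-0} is automatic once one observes that under Condition \ref{E}-(b) the filtration $\bF^X$ contains no information about $\tau$ other than through $X$ itself, so no $M^{\bG^X}$-type martingale enters the representation.
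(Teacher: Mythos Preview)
Your proposal is correct and follows essentially the same route as the paper: the paper's proof is a brief outline which defines $\Zhat:=\cE(-\int(\bar\mu/\sigma(\cdot,X))\,\ud\bar B)$ using NA1, localizes and changes measure to $Q^n$, invokes Lemma \ref{filtrations special} together with Girsanov's theorem to obtain the Brownian martingale representation for $(Q^n,\bF^X)$-local martingales with respect to $\bar Y^{\tau_n}$, and then refers to the last part of the proof of Proposition \ref{MRP} for the Girsanov-back-and-patch argument that you have spelled out in detail.
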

\begin{proof}  
If NA1 holds, then $\Zhat:=\cE(-\int\!\frac{\bar{\mu}}{\sigma(\cdot,X)}\,\ud\bar{B})\in\Mloc(\bF^X)$, with an $\bF^X$-localizing sequence $\{\tau_n\}_{n\in\N}$. For each $n\in\N$, let $dQ^n:=\Zhat_{T\wedge\tau_n}dP$. Girsanov's theorem together with Lemma \ref{filtrations special} then implies that every $(Q^n,\bF^X)$-local martingale (stopped at $\tau_n$) can be represented as a stochastic integral of $\bar{Y}^{\tau_n}$. Using again Girsanov's theorem, the proposition can then be proved by arguing as in the last part of the proof of Proposition \ref{MRP}.
\end{proof}

\begin{remark}
By relying on Proposition \ref{MRP special}, we can prove that the financial market where the asset $S$ is traded with respect to the information contained in its own filtration $\bF^X$ is \emph{complete}, in the sense that for any bounded $\cF^X_T$-measurable non-negative random variable $H$ there exists a couple $(v^H,h^H)\in[0,\infty)\times L(S,\bF^X)$ such that $H=v^H+\int_0^T\!h^H_t\,\ud S_t$ $P$-a.s. 

Note that this result only requires the NA1 condition for the model \eqref{S}-\eqref{model-2} in the filtration $\bF^X$ and does not depend on the validity of NFLVR.
\end{remark}

\section{The initially enlarged filtrations $\bG^{(\tau)}$ and $\bG^{X, (\tau)}$}

In the remainder of the paper, we turn our attention to the model \eqref{S}-\eqref{model-2} considered in the initially enlarged filtrations  $\bG^{(\tau)}$ and $\bG^{X, (\tau)}$. Recall that $\bG^{(\tau)} = (\cG^{(\tau)}_t)_{0 \leq t \leq T}$ is defined by $\cG^{(\tau)}_t:=\bigcap_{s>t}\bigl(\cF_s\vee\sigma(\tau)\bigr)$ and $\bG^{X, (\tau)} = (\cG^{X, (\tau)}_t)_{0 \leq t \leq T}$ by $\cG^{X, (\tau)}_t:=\bigcap_{s>t}\bigl(\cF^X_s\vee\sigma(\tau)\bigr)$, for all $t\in[0,T]$.

\subsection{The initially enlarged filtration $\bG^{(\tau)}$} \label{G^tau}

We begin the analysis of the model \eqref{S}-\eqref{model-2} by assuming that the model  is well-defined in the enlarged filtration $\bG^{(\tau)}$. More precisely, similarly to Condition \ref{C}, the condition imposed below ensures that the driving Brownian motions $W^1$ and $W^2$ remain semimartingales with respect to the filtration $\bG^{(\tau)}$.

\begin{condition} \label{F}
There exist two $\bG^{(\tau)}$-predictable processes $\theta^{1, (\tau)}=(\theta^{1, (\tau)}_t)_{0\leq t \leq T}$ and $\theta^{2, (\tau)}=(\theta^{2, (\tau)}_t)_{0\leq t \leq T}$ and two $\bG^{(\tau)}$-Brownian motions $W^{1,(\tau)}=(W^{1,(\tau)}_t)_{0\leq t  \leq T}$ and $W^{2,(\tau)}=(W^{2,(\tau)}_t)_{0\leq t \leq T}$ such that for all $t\in[0,T]$:
\begin{equation}
\label{Gtau}
W^i_t = W^{i,(\tau)}_t+\int_0^t\!\theta^{i, (\tau)}_u\,\ud u,
\qquad \text{for }i=1,2.
\end{equation}
\end{condition}

\begin{remarks} \label{independence-tau-BM}
\textbf{1)}
We emphasize that here the superscript $^{(\tau)}$ is used to denote processes that are adapted to the filtration $\bG^{(\tau)}$ and should not be confused with the superscript $^\tau$ used earlier for processes stopped at time $\tau$. For example, $Y^{(\tau)} = (Y^{(\tau)}_t)_{0\leq t \leq T}$ is a process adapted to the filtration $\bG^{(\tau)}$ and $Y^{\tau}$ denotes the process $Y$ stopped at $\tau$, i.e. $Y^{\tau} = (Y_{t \wedge \tau})_{0\leq t \leq T}$.

\textbf{2)}
Note that the $\bG^{(\tau)}$-Brownian motions $W^{i,(\tau)}$, for $i=1, 2$, are independent of the random time $\tau$, due to the independence of Brownian increments together with $\sigma(\tau)\subseteq\cG^{(\tau)}_0$.
\end{remarks}

Condition \ref{F}, as well as Condition \ref{C}, are satisfied under the classical \emph{density hypothesis} due to \citeN{Jacod85}, which is typically used in the literature when dealing with initial enlargements of filtrations and which assumes that the (regular) $\cF_t$-conditional law of $\tau$ admits a density with respect to the unconditional law of $\tau$, for all $t\in[0,T]$.
In particular, if $\tau$ is independent of $\bF$, the density hypothesis is trivially satisfied (with the constant density 1). In the latter case, condition \eqref{Gtau} obviously holds with $\theta^{i, (\tau)}\equiv 0$, for $i=1, 2$.

Under Condition \ref{F}, the process $X=(X_t)_{0\leq t  \leq T}$ admits the following canonical decomposition with respect to the filtration $\bG^{(\tau)}$ (compare with \eqref{candec-G}):
 \be	\label{candec-Gtau}
X_t = \int_0^t\!\tilde{\mu}^{(\tau)}_u\,\ud u+\int_0^t\!V_u\,\ud W^{(\tau)}_u,
\qquad \text{for all }t\in[0,T],
\ee
where $\tilde{\mu}^{(\tau)}=(\tilde{\mu}^{(\tau)}_t)_{0\leq t \leq T}$ and $W^{(\tau)}=(W^{(\tau)}_t)_{0\leq t \leq T}$ are defined as follows, for all $t\in[0,T]$:
\begin{gather}
\tilde{\mu}^{(\tau)}_t :=
\indik_{\{t\leq\tau\}}\left(\mu^1(t,X_t)+\sigma^1(t,X_t)\,\theta^{1, (\tau)}_t\right) \nonumber \\
+\indik_{\{t>\tau\}}\left(\mu^2(t,X_t)+\sigma^2(t,X_t)\bigl(\rho\,\theta^{1,(\tau)}_t+\sqrt{1-\rho^2}\,\theta^{2,(\tau)}_t\bigr)\right)	 \label{candec-Gtau-1} \\
W^{(\tau)}_t := 
W^{1,(\tau)}_{t\wedge\tau}+\rho\,\bigl(W^{1,(\tau)}_{t\vee\tau}-W^{1,(\tau)}_{\tau}\bigr)
+\sqrt{1-\rho^2}\,\bigl(W^{2,(\tau)}_{t\vee\tau}-W^{2,(\tau)}_{\tau}\bigr)
\label{candec-Gtau-3}
\end{gather}
and $V=(V_t)_{0\leq t \leq T}$ is defined in \eqref{candec-G-2}. By computing its quadratic variation one can easily verify that the process $W^{(\tau)}$ is a $\bG^{(\tau)}$-Brownian motion. 
Now we are ready to establish the no-arbitrage properties of the model \eqref{S}-\eqref{model-2} considered in the initially enlarged filtration $\bG^{(\tau)}$. The proof of the proposition below relies on the same arguments as the proof of Proposition \ref{arb-G}.
We define the $\bG^{(\tau)}$-predictable process $\theta^{(\tau)}=(\theta^{(\tau)}_t)_{0\leq t \leq T}$, for all $t\in[0,T]$:
\be	\label{thetatau}
\theta^{(\tau)} := \indik_{\lsi0,\tau\rsi}\theta^{1, (\tau)}
+\indik_{\lsir\tau,T\rsi}\bigl(\rho\,\theta^{1, (\tau)}+\sqrt{1-\rho^2}\,\theta^{2, (\tau)}\bigr).
\ee

\begin{proposition}	\label{arb-Gtau}
Suppose that Conditions \ref{B} and \ref{F} hold. Then the following assertions hold for the model \eqref{S}-\eqref{model-2} considered with respect to the filtration $\bG^{\tau}$:
\begin{itemize}
\item[(a)]
NA1 holds if and only if  $\int_0^T\!(\theta^{(\tau)}_t)^2\,\ud t<\infty$ $P$-a.s., with the latter condition being equivalent to $\int_0^T\!(\tilde{\mu}^{(\tau)}_t/V_t)^2\,\ud t<\infty$ $P$-a.s.;
\item[(b)]
NFLVR holds if and only if NA1 holds and there exists $N^{(\tau)}=(N^{(\tau)}_t)_{0\leq t \leq T}\in\tMloc(\bG^{(\tau)})$ with $N^{(\tau)}_0=0$, $\Delta N^{(\tau)} > -1$ $P$-a.s. and $[N^{(\tau)}, W^{(\tau)}]=0$ satisfying $E[\mathcal{E}(-\int(\tilde{\mu}^{(\tau)}/V)\,\ud W^{(\tau)}+N^{(\tau)})_T]=1$.
\end{itemize}
\end{proposition}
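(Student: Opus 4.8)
The plan is to mirror, step by step, the proof of Proposition~\ref{arb-G}, now carried out in the initially enlarged filtration $\bG^{(\tau)}$ rather than in the progressively enlarged filtration $\bG$. First I would record that, under Condition~\ref{F}, the process $X$ is a $\bG^{(\tau)}$-semimartingale with canonical decomposition \eqref{candec-Gtau}; since $\sigma^i>0$, the process $V$ of \eqref{candec-G-2} is $P$-a.s. strictly positive, so $\tilde{\mu}^{(\tau)}/V$ is a well-defined $\bG^{(\tau)}$-predictable process and $W^{(\tau)}$ a $\bG^{(\tau)}$-Brownian motion. By part~(i) of Definition~\ref{def-arb} together with Theorem~4 of \citeN{Kardaras10}, NA1 in $\bG^{(\tau)}$ is then equivalent to $\int_0^T\!(\tilde{\mu}^{(\tau)}_t/V_t)^2\,\ud t<\infty$ $P$-a.s., which by \eqref{candec-Gtau-1} and \eqref{candec-G-2} splits as
$$
\int_0^{T\wedge\tau}\!\Bigl(\frac{\mu^1(t,X_t)}{\sigma^1(t,X_t)}+\theta^{1,(\tau)}_t\Bigr)^{\!2}\ud t +\int_{\tau}^{T\vee\tau}\!\Bigl(\frac{\mu^2(t,X_t)}{\sigma^2(t,X_t)}+\rho\,\theta^{1,(\tau)}_t+\sqrt{1-\rho^2}\,\theta^{2,(\tau)}_t\Bigr)^{\!2}\ud t<\infty\quad P\text{-a.s.}
$$

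To trade this for the condition $\int_0^T\!(\theta^{(\tau)}_t)^2\,\ud t<\infty$ $P$-a.s., with $\theta^{(\tau)}$ as in \eqref{thetatau}, I would reproduce verbatim the estimates used in the proof of Proposition~\ref{arb-G}: with $\xi:=\min_{t\in[0,T]}\{\sigma^1(t,X_t)\wedge\sigma^2(t,X_t)\}$, which is $P$-a.s. strictly positive by Condition~\ref{B}-(b) and the continuity of $X$, the elementary inequalities $(a+b)^2\leq2a^2+2b^2$ and $(a+b)^2\geq b^2/2-a^2$ combined with the growth bounds of Remarks~\ref{growth}-1 yield the two-sided estimate
$$
\frac12\int_0^T\!(\theta^{(\tau)}_t)^2\,\ud t-\frac{\bar{K}T}{\xi^2}\Bigl(1+\max_{t\in[0,T]}X_t^2\Bigr) \;\leq\;\int_0^T\!\Bigl(\frac{\tilde{\mu}^{(\tau)}_t}{V_t}\Bigr)^{\!2}\ud t \;\leq\;\frac{2\bar{K}T}{\xi^2}\Bigl(1+\max_{t\in[0,T]}X_t^2\Bigr)+2\int_0^T\!(\theta^{(\tau)}_t)^2\,\ud t ,
$$
and, $X$ being continuous so that $\max_{t\in[0,T]}X_t^2<\infty$ $P$-a.s., these bounds give the claimed equivalence, hence part~(a). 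For part~(b) I would invoke part~(ii) of Definition~\ref{def-arb} together with Lemma~4.3.15 of \citeN{FontanaRunggaldier12}, applied to the $\bG^{(\tau)}$-semimartingale $X$ with decomposition \eqref{candec-Gtau} and to the $\bG^{(\tau)}$-Brownian motion $W^{(\tau)}$: granted NA1, NFLVR is equivalent to the existence of a $\bG^{(\tau)}$-local martingale $N^{(\tau)}$ with $N^{(\tau)}_0=0$, $\Delta N^{(\tau)}>-1$ $P$-a.s. and $[N^{(\tau)},W^{(\tau)}]=0$ rendering $\cE(-\int(\tilde{\mu}^{(\tau)}/V)\,\ud W^{(\tau)}+N^{(\tau)})$ a uniformly integrable martingale, i.e. $E[\cE(-\int(\tilde{\mu}^{(\tau)}/V)\,\ud W^{(\tau)}+N^{(\tau)})_T]=1$.

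The one point that really needs care — and which is dispatched in the paragraph preceding the statement — is the verification that \eqref{candec-Gtau} is indeed the $\bG^{(\tau)}$-canonical decomposition of $X$; note that Condition~\ref{C} plays no role here, so this must be redone from scratch in $\bG^{(\tau)}$. Using \eqref{Gtau} from Condition~\ref{F}, one checks that $X=\int\tilde{\mu}^{(\tau)}\,\ud u+\int V\,\ud W^{(\tau)}$ and that the process $W^{(\tau)}$ of \eqref{candec-Gtau-3} — obtained by gluing $W^{1,(\tau)}$ stopped at $\tau$ to the post-$\tau$ increments of $W^{1,(\tau)}$ and $W^{2,(\tau)}$ weighted by $\rho$ and $\sqrt{1-\rho^2}$ — is a $\bG^{(\tau)}$-Brownian motion; the latter follows from L\'evy's characterization once one computes $[W^{(\tau)}]_t=t$, the cross terms vanishing because $W^{1,(\tau)}$ and $W^{2,(\tau)}$ are independent and the covariation between the pre-$\tau$ and post-$\tau$ increments of a Brownian motion is zero. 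Everything else in the argument is then a word-for-word transcription of the proof of Proposition~\ref{arb-G}.
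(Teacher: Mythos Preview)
Your proposal is correct and follows exactly the approach indicated by the paper, which simply states that the proof ``relies on the same arguments as the proof of Proposition~\ref{arb-G}'' and gives no further details. You have spelled out precisely those arguments, including the two-sided estimate via the elementary inequalities and the growth bounds, and you have also correctly flagged that the $\bG^{(\tau)}$-canonical decomposition \eqref{candec-Gtau} (established in the paragraph preceding the proposition under Condition~\ref{F}, not Condition~\ref{C}) is the only ingredient that needs to be verified anew.
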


\begin{remark}[\emph{The $\bG^{(\tau)}$-martingale representation property}]
Suppose that $\bF=\bF^{W^1} \vee \bF^{W^2}$ and that the density hypothesis holds with a $P$-a.s. strictly positive density. 
Then, Proposition 5.3(i) in \citeN{CallegaroJeanblancZargari11} implies that 
the pair $(W^{1,(\tau)},W^{2,(\tau)})$ enjoys the martingale representation property in the filtration $\bG^{(\tau)}$.
Hence, we can obtain an explicit representation of
a $\bG^{(\tau)}$-local martingale $N^{(\tau)}=(N^{(\tau)}_t)_{0\leq t \leq T}$ appearing in Proposition \ref{arb-Gtau}(b):
\begin{eqnarray}	
\label{Ltau}
N^{(\tau)}_t & = & \int_0^{t\wedge\tau}\!\varphi^{2, (\tau)}_u\,\ud W^{2, (\tau)}_u
+\indik_{\{\rho=0\}}\!\!\int_{\tau}^{t\vee\tau}\!\varphi^{1, (\tau)}_u\,\ud W^{1, (\tau)}_u \\
\nonumber && \quad +\indik_{\{\rho\neq0\}}\!\Biggl(\int_{\tau}^{t\vee\tau}\!\varphi^{3, (\tau)}_u\,\ud W^{2, (\tau)}_u
-\frac{\sqrt{1-\rho^2}}{\rho}\int_{\tau}^{t\vee\tau}\!\varphi^{3, (\tau)}_u\,\ud W^{1, (\tau)}_u\Biggr),
\end{eqnarray}
where $\varphi^{j,(\tau)}= (\varphi^{j,(\tau)}_t)_{0\leq t\leq T}$ is a $\bG^{(\tau)}$-predictable process satisfying the integrability condition $\int_0^T\!(\varphi^{j,(\tau)}_t)^2 \,\dt < \infty$ $P$-a.s., for $j=1,2,3$. Together with Proposition \ref{arb-Gtau}, this gives a full characterization of the set of all ELMMs for the model \eqref{S}-\eqref{model-2} in the filtration $\bG^{(\tau)}$.
\end{remark}

\subsection{The initially enlarged filtration $\bG^{X, (\tau)}$}

In this section, we study the model \eqref{S}-\eqref{model-2} in the filtration $\bG^{X, (\tau)}$. Here we might be concerned with a non-standard initial enlargement since we are enlarging initially the filtration $\bF^X$ with respect to which the random time $\tau$ can already be a stopping time (see for instance Proposition \ref{stopping}). Thus, the density hypothesis cannot be imposed. In this case a method developed in \citeN[Chapter 12]{Yor97} can be applied. An example of such an initial enlargement is treated on page 53 of \citeN{Jeulin80} and in \citeN{JeanblancLeniec13}.

Let us begin by stating the canonical decomposition of $X$ with respect to the filtration $\bG^{X, (\tau)}$. It is obtained by projecting the canonical decomposition with respect to $\bG^{(\tau)}$ onto $\bG^{X, (\tau)}$ and the proof is analogous to that of Lemma \ref{candec-G^X}.

\begin{lemma}
\label{candec-F^X,tau}
Suppose that Conditions \ref{B} and \ref{F} hold. Then the process $X$ admits the following canonical decomposition with respect to the filtration $\bG^{X, (\tau)}$:
\be	\label{candec-F^X,tau-1}
X_t = \int_0^t\!\mu^{(\tau)}_u\,\ud u+\int_0^t\!V_u\,\ud B^{(\tau)}_u,
\qquad \text{for all }t\in[0,T],
\ee
where the $\bG^{X, (\tau)}$-predictable process $\mu^{(\tau)}=(\mu^{(\tau)}_t)_{0\leq t \leq T}$ is defined as follows, for all $t\in[0,T]$:
\begin{eqnarray}	
\label{candec-F^X,tau-2}
\mu^{(\tau)}_t & := & \indik_{\{t\leq\tau\}}\left(\mu^1(t,X_t)+\sigma^1(t,X_t)\,\,^p\theta^{1, (\tau)}_t\right) \\
\nonumber && \quad +\indik_{\{t>\tau\}}\left(\mu^2(t,X_t)
+\sigma^2(t,X_t)\bigl(\rho\,\,^p\theta^{1, (\tau)}_t+\sqrt{1-\rho^2}\,\,^p\theta^{2, (\tau)}_t\bigr)\right),
\end{eqnarray}
with $^p\theta^{i, (\tau)}$ denoting the $\bG^{X, (\tau)}$-predictable projection of $\theta^{i, (\tau)}$, for $i=1,2$, and where the process $B^{(\tau)}=(B^{(\tau)}_t)_{0\leq t \leq T}$ is a $\bG^{X, (\tau)}$-Brownian motion (independent of $\tau$, since $\sigma(\tau)\subseteq\cG^{X,(\tau)}_0$) and $V=(V_t)_{0\leq t \leq T}$ is defined as in \eqref{candec-G-2}.
\end{lemma}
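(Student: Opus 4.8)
The plan is to mimic the proof of Lemma~\ref{candec-G^X}, transferring the argument from the pair $(\bG,\bG^X)$ to the pair $(\bG^{(\tau)},\bG^{X,(\tau)})$. First I would note that $X$ is a continuous $\bG^{(\tau)}$-semimartingale with canonical decomposition \eqref{candec-Gtau}, and since $\bG^{X,(\tau)}\subseteq\bG^{(\tau)}$, the theorem of \citeN{Stricker77} ensures that $X$ remains a $\bG^{X,(\tau)}$-semimartingale. Write its $\bG^{X,(\tau)}$-canonical decomposition as $X=A^{(\tau)}+M^{(\tau)}$, with $A^{(\tau)}$ a continuous $\bG^{X,(\tau)}$-predictable finite-variation process and $M^{(\tau)}\in\Mloc(\bG^{X,(\tau)})$, both starting at $0$.

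Next I would handle the integrability/localization exactly as before. Define $\tau_n:=\inf\{t\in[0,T]:|X_t|\geq n\}\wedge T$, which is a $\bG^{X,(\tau)}$-stopping time since $X$ is $\bG^{X,(\tau)}$-adapted, and $\tau_n\nearrow T$ $P$-a.s. by continuity of $X$. Using Remarks~\ref{growth}-1, $\int_0^{\tau_n\wedge T}V_t^2\,\ud t\leq\bar K(1+n^2)T$ $P$-a.s., so $\{\tau_n\}$ localizes both $M^{(\tau)}$ (via $[M^{(\tau)}]=\int V^2\,\ud t$, once we know $V$ is the martingale multiplier, which follows from equality of quadratic variations) and the $\bG^{(\tau)}$-local martingale $\int V\,\ud W^{(\tau)}$. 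Proposition 9.24 of \citeN{Jacod79}, applied to \eqref{candec-Gtau}, then identifies $A^{(\tau)}$ with the dual $\bG^{X,(\tau)}$-predictable projection of $\int_0^\cdot\tilde\mu^{(\tau)}_t\,\ud t$, and by 1.40 of \citeN{Jacod79} this equals $\int_0^\cdot{}^p\tilde\mu^{(\tau)}_t\,\ud t$. Since $\tau$ is a $\bG^{X,(\tau)}$-stopping time (indeed $\sigma(\tau)\subseteq\cG^{X,(\tau)}_0$), the indicators $\indik_{\lsi0,\tau\rsi}$ and $\indik_{\lsir\tau,T\rsi}$ are $\bG^{X,(\tau)}$-predictable, so the predictable projection commutes with them and with the $\bF^X\subseteq\bG^{X,(\tau)}$-adapted factors $\mu^i(t,X_t),\sigma^i(t,X_t)$; this yields formula \eqref{candec-F^X,tau-2} for $\mu^{(\tau)}$, with ${}^p\theta^{i,(\tau)}$ the $\bG^{X,(\tau)}$-predictable projection of $\theta^{i,(\tau)}$. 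Finally, $V$ from \eqref{candec-G-2} is $\bG^{X,(\tau)}$-predictable and never vanishes, so $B^{(\tau)}:=\int V^{-1}\,\ud M^{(\tau)}$ is a well-defined continuous $\bG^{X,(\tau)}$-local martingale with $[B^{(\tau)}]_t=\int_0^t V^{-2}_u\,\ud[M^{(\tau)}]_u=t$, hence a $\bG^{X,(\tau)}$-Brownian motion by L\'evy's characterization; its independence of $\tau$ is immediate because $\sigma(\tau)\subseteq\cG^{X,(\tau)}_0$.

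The only step needing slightly more care than in Lemma~\ref{candec-G^X} — and the main point where one must be attentive — is verifying that the martingale part of the $\bG^{X,(\tau)}$-decomposition has quadratic variation $\int V^2\,\ud t$, i.e. that the diffusion coefficient is \emph{unchanged} under the projection. This is where part (b) of Condition~\ref{B} and the identification of $V$ via $\partial_t[X]_t$ (as in the proof of Lemma~\ref{V-pred}) come in: the quadratic variation $[X]$ is filtration-independent, so $[M^{(\tau)}]=[X]=\int V^2\,\ud t$ automatically, and $V$ is $\bG^{X,(\tau)}$-predictable. Once this is in place the rest is a routine transcription. I therefore expect no genuine obstacle; the proof is essentially that of Lemma~\ref{candec-G^X} with $\theta^i,\bG,\bG^X,\Wtilde$ replaced throughout by $\theta^{i,(\tau)},\bG^{(\tau)},\bG^{X,(\tau)},W^{(\tau)}$, which is exactly why the statement says ``the proof is analogous to that of Lemma~\ref{candec-G^X}.''
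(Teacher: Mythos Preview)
Your proposal is correct and is precisely the approach the paper intends: the paper's own proof consists of the single sentence ``It is obtained by projecting the canonical decomposition with respect to $\bG^{(\tau)}$ onto $\bG^{X,(\tau)}$ and the proof is analogous to that of Lemma~\ref{candec-G^X},'' and you have faithfully carried out that transcription, including the appropriate observation that $[X]$ (and hence $V$) is filtration-independent.
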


Let us now check if the model \eqref{S}-\eqref{model-2}, considered with respect to the filtration $\bG^{X, (\tau)}$, admits arbitrages of the first kind. We denote by $^p\theta^{(\tau)}$ the $\bG^{X, (\tau)}$-predictable projection of the process $\theta^{(\tau)}$ introduced in \eqref{thetatau}:
 \be	
 \label{predictable-thetatau}
^p\theta^{(\tau)} := \indik_{\lsi0,\tau\rsi}\, ^p\theta^{1, (\tau)}
+\indik_{\lsir\tau,T\rsi}\bigl(\rho\,\, ^p\theta^{1, (\tau)}+\sqrt{1-\rho^2}\,\, ^p\theta^{2, (\tau)}\bigr).
\ee
The next proposition characterizes the validity of NA1 with respect to the filtration $\bG^{X,(\tau)}$. The proof is analogous to that of Proposition \ref{arb-G}(a) and, hence, omitted.

\begin{proposition}	\label{arb-F^X, tau}
Suppose that Conditions \ref{B} and \ref{F} hold. Then the model \eqref{S}-\eqref{model-2} considered in the filtration $\bG^{X, (\tau)}$ satisfies NA1 if and only if $\int_0^T\!(^p\theta^{(\tau)}_t)^2\,\ud t<\infty$ $P$-a.s., with the latter condition being equivalent to $\int_0^T\!(\mu^{(\tau)}_t/V_t)^2\,\ud t<\infty$ $P$-a.s.
\end{proposition}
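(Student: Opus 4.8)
The plan is to mirror the proof of Proposition \ref{arb-G}(a) almost verbatim, transposing every step from the filtration $\bG$ to the initially enlarged filtration $\bG^{X,(\tau)}$. First I would invoke part (i) of Definition \ref{def-arb} together with Theorem 4 of \citeN{Kardaras10}: since by Lemma \ref{candec-F^X,tau} the canonical decomposition of $X$ in $\bG^{X,(\tau)}$ is $X_t = \int_0^t\mu^{(\tau)}_u\,\ud u + \int_0^t V_u\,\ud B^{(\tau)}_u$ with $B^{(\tau)}$ a $\bG^{X,(\tau)}$-Brownian motion and $V$ never vanishing (and $\bG^{X,(\tau)}$-predictable, $\tau$ being a $\bG^{X,(\tau)}$-stopping time), the market price of risk process is $\mu^{(\tau)}/V$, which is well-defined under Conditions \ref{B} and \ref{F}. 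Hence NA1 in $\bG^{X,(\tau)}$ is equivalent to $\int_0^T(\mu^{(\tau)}_t/V_t)^2\,\ud t<\infty$ $P$-a.s.

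Next I would establish the equivalence of this integrability condition with $\int_0^T({}^p\theta^{(\tau)}_t)^2\,\ud t<\infty$ $P$-a.s. Splitting at $\tau$ and using $V_t=\sigma^1(t,X_t)$ on $\lsi0,\tau\rsi$ and $V_t=\sigma^2(t,X_t)$ on $\lsir\tau,T\rsi$, one writes
$$
\int_0^T\!\left(\frac{\mu^{(\tau)}_t}{V_t}\right)^{\!2}\!\ud t
= \int_0^{T\wedge\tau}\!\!\left(\frac{\mu^1(t,X_t)}{\sigma^1(t,X_t)}+{}^p\theta^{1,(\tau)}_t\right)^{\!2}\!\ud t
+\int_{\tau}^{T\vee\tau}\!\!\left(\frac{\mu^2(t,X_t)}{\sigma^2(t,X_t)}+\rho\,{}^p\theta^{1,(\tau)}_t+\sqrt{1-\rho^2}\,{}^p\theta^{2,(\tau)}_t\right)^{\!2}\!\ud t .
$$
By Condition \ref{B}-(b) and continuity of $X$, the quantity $\xi:=\min_{t\in[0,T]}\{\sigma^1(t,X_t)\wedge\sigma^2(t,X_t)\}$ is $P$-a.s.\ strictly positive, so that the drift-to-volatility ratios $\mu^i(t,X_t)/\sigma^i(t,X_t)$ are $P$-a.s.\ bounded by $\bar K^{1/2}\xi^{-1}(1+\max_{t\in[0,T]}X_t^2)^{1/2}$ via Remarks \ref{growth}-1, and in particular square-integrable over $[0,T]$. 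Then the elementary inequalities $(a+b)^2\leq 2a^2+2b^2$ and $(a+b)^2\geq b^2/2-a^2$, applied pointwise with $a$ the bounded ratio term and $b={}^p\theta^{(\tau)}_t$ (noting ${}^p\theta^{(\tau)}$ as in \eqref{predictable-thetatau} agrees with ${}^p\theta^{1,(\tau)}$ on $\lsi0,\tau\rsi$ and with $\rho\,{}^p\theta^{1,(\tau)}+\sqrt{1-\rho^2}\,{}^p\theta^{2,(\tau)}$ on $\lsir\tau,T\rsi$), sandwich $\int_0^T(\mu^{(\tau)}_t/V_t)^2\,\ud t$ between $\tfrac12\int_0^T({}^p\theta^{(\tau)}_t)^2\,\ud t$ minus a $P$-a.s.\ finite term and $2\int_0^T({}^p\theta^{(\tau)}_t)^2\,\ud t$ plus a $P$-a.s.\ finite term, which gives the claimed equivalence.

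The only genuinely new point compared with Proposition \ref{arb-G} is that one must make sure the cross terms produced by expanding the square split cleanly along $\lsi0,\tau\rsi$ and $\lsir\tau,T\rsi$; this is immediate because these two predictable sets are disjoint and their union is the whole of $\lsi0,T\rsi$, so the computation reduces to two independent one-regime estimates exactly as in the $\bG$ case. I expect no real obstacle: the argument is robust to the choice of filtration as long as $V$ is predictable and bounded away from zero and the market price of risk is $\mu^{(\tau)}/V$, all of which is guaranteed by Lemma \ref{candec-F^X,tau} and Conditions \ref{B}–\ref{F}. Accordingly the proof is omitted, being a line-by-line adaptation of that of Proposition \ref{arb-G}(a).
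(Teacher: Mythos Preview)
Your proposal is correct and follows exactly the approach the paper intends: the paper itself omits the proof, stating that it ``is analogous to that of Proposition \ref{arb-G}(a)'', and your line-by-line transposition to the filtration $\bG^{X,(\tau)}$ via Lemma \ref{candec-F^X,tau}, the Kardaras characterization, and the same sandwiching inequalities reproduces that analogy faithfully.
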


Similarly as in Sections \ref{MRP-G^X} and \ref{ident-vol}, 
let us define the $\bG^{X, (\tau)}$-adapted continuous process $\widehat{Y}^{(\tau)} = (\widehat{Y}^{(\tau)}_t)_{0\leq t\leq T}$ as follows, for all $t\in[0,T]$:
$$
\widehat{Y}^{(\tau)}_t := \int_0^t\frac{\mu^{(\tau)}_u}{V_u}\,\du +  B^{(\tau)}_t,
$$
where $\mu^{(\tau)}$ and $B^{(\tau)}$ are as in Lemma \ref{candec-F^X,tau}. Let us denote by $\bF^{\What^{(\tau)}}=(\cF^{\What^{(\tau)}}_t)_{0\leq t \leq T}$ the right-continuous $P$-augmented natural filtration of $\What^{(\tau)}$. By relying on the same arguments of Lemma \ref{filtrations}, we can characterize the filtration $\bG^{X, (\tau)}$ as an initial enlargement of the filtration $\bF^{\What^{(\tau)}}$ with respect to $\tau$.

\begin{lemma}
\label{filtrations-tau}
Suppose that Conditions \ref{B} and \ref{F} hold. Then $\bG^{X, (\tau)} = \bG^{\widehat{Y}^{(\tau)}\!,(\tau)}$, where  $\bG^{\widehat{Y}^{(\tau)}\!,(\tau)}$ is the initial enlargement of $ \bF^{\widehat{Y}^{(\tau)}}$ with respect to $\tau$, i.e., $ \cG^{\widehat{Y}^{(\tau)}\!,(\tau)}_t := \bigcap_{s>t}\bigl( \cF^{\widehat{Y}^{(\tau)}}_s \vee \sigma(\tau)\bigr)$, for all $t\in[0,T]$.
\end{lemma}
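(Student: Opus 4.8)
The plan is to follow the strategy of Lemma \ref{filtrations}, adapting it from the progressive to the initial enlargement, and to prove the two inclusions $\bG^{\widehat{Y}^{(\tau)}\!,(\tau)}\subseteq\bG^{X,(\tau)}$ and $\bG^{X,(\tau)}\subseteq\bG^{\widehat{Y}^{(\tau)}\!,(\tau)}$ separately.

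For the first inclusion, I would simply note that $\widehat{Y}^{(\tau)}$ is $\bG^{X,(\tau)}$-adapted by construction, being defined through the $\bG^{X,(\tau)}$-adapted processes $\mu^{(\tau)}$, $V$ and $B^{(\tau)}$ of Lemma \ref{candec-F^X,tau}. Hence $\cF^{\widehat{Y}^{(\tau)}}_s\subseteq\cG^{X,(\tau)}_s$ for every $s$, and since in addition $\sigma(\tau)\subseteq\cG^{X,(\tau)}_0$, taking the intersection over $s>t$ and using the right-continuity of $\bG^{X,(\tau)}$ gives $\cG^{\widehat{Y}^{(\tau)}\!,(\tau)}_t\subseteq\cG^{X,(\tau)}_t$ for all $t\in[0,T]$.

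For the converse inclusion it suffices to show that the process $X$ is $\bG^{\widehat{Y}^{(\tau)}\!,(\tau)}$-adapted. Combining the canonical decomposition \eqref{candec-F^X,tau-1} with the definition of $\widehat{Y}^{(\tau)}$ and the identity $V_u = f(u,X_u)$, where $f$ is the random function of \eqref{rnd-fcts}, one obtains the representation $X_t = \int_0^t f(u,X_u)\,\ud\widehat{Y}^{(\tau)}_u$ for all $t\in[0,T]$. Since $\sigma(\tau)\subseteq\cG^{\widehat{Y}^{(\tau)}\!,(\tau)}_0$, the random time $\tau$ is (trivially) a $\bG^{\widehat{Y}^{(\tau)}\!,(\tau)}$-stopping time, so that $\indik_{\lsi0,\tau\rsi}$ and $\indik_{\lsir\tau,T\rsi}$ are $\bG^{\widehat{Y}^{(\tau)}\!,(\tau)}$-predictable and $f$ is random Lipschitz (in the sense of \citeN{Protter05}, page 256) with respect to this filtration; moreover, by the first inclusion together with \citeN{Stricker77}, the semimartingale $\widehat{Y}^{(\tau)}$ remains a semimartingale in $\bG^{\widehat{Y}^{(\tau)}\!,(\tau)}$. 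I would then run the Picard scheme exactly as in the proof of Lemma \ref{filtrations}: set $X^0\equiv 0$, define inductively $X^{k+1}_t := \int_0^t f(u,X^k_u)\,\ud\widehat{Y}^{(\tau)}_u$ — each $X^k$ being $\bG^{\widehat{Y}^{(\tau)}\!,(\tau)}$-adapted by construction — and invoke Theorem V.8 of \citeN{Protter05} to conclude that $X^k$ converges to the unique solution $X$ uniformly on compacts in probability, and $P$-a.s. uniformly on compacts along a subsequence. This yields the $\bG^{\widehat{Y}^{(\tau)}\!,(\tau)}$-adaptedness of $X$, hence $\cF^X_s\subseteq\cF^{\widehat{Y}^{(\tau)}}_s\vee\sigma(\tau)$ for every $s$ and thus $\bG^{X,(\tau)}\subseteq\bG^{\widehat{Y}^{(\tau)}\!,(\tau)}$.

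The step I expect to require the most care — as already in Lemma \ref{filtrations} — is the verification that the Picard iteration can legitimately be carried out inside the smaller filtration $\bG^{\widehat{Y}^{(\tau)}\!,(\tau)}$: one must check that $\widehat{Y}^{(\tau)}$ is a semimartingale there, that the integrands $f(u,X^k_u)$ are $\bG^{\widehat{Y}^{(\tau)}\!,(\tau)}$-predictable, and that the resulting stochastic integrals coincide with those computed in $\bG^{X,(\tau)}$. All of this follows from the first inclusion (via Stricker's theorem and the fact that $\tau$ is a stopping time for the initially enlarged filtration), so that no genuinely new difficulty arises and the argument reduces to a repetition of the proof of Lemma \ref{filtrations} with the obvious notational changes.
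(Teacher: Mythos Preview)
Your proposal is correct and follows exactly the approach the paper indicates: the paper does not spell out a proof but simply states that the result follows ``by relying on the same arguments of Lemma~\ref{filtrations}'', which is precisely the two-inclusion argument with the Picard iteration (Theorem~V.8 of \citeN{Protter05}) that you reproduce. Your explicit verification that $\widehat{Y}^{(\tau)}$ remains a semimartingale in the smaller filtration via \citeN{Stricker77} and that $f$ is random Lipschitz there is a useful clarification that the paper leaves implicit.
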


By relying on Lemma \ref{filtrations-tau} we are able to study the filtration  $\bG^{X, (\tau)}$ using standard techniques for initial enlargements of filtrations.
We begin by proving a martingale representation property for the filtration $\bG^{X, (\tau)}$. The proof relies on a localization procedure similar to the one in the proof of Proposition \ref{MRP}. To obtain the martingale representation property for the initial enlargements constructed in the proof one uses the fact that any filtration initially enlarged with an independent random time is immersed in that enlargement. The independence of $\tau$ follows similarly as in part 2 of Remarks \ref{independence-tau-BM}.

\begin{proposition}
\label{MRP-F^X,tau}
Suppose that Conditions \ref{B} and \ref{F} hold and assume that NA1 holds in the filtration $\bG^{X, (\tau)}$. Then every $\bG^{X, (\tau)}$-local martingale $L^{(\tau)}=(L^{(\tau)}_t)_{0\leq t \leq T}$ admits a representation of the form:
\be	\label{MRP-F^X,tau-0}
L^{(\tau)}_t = L^{(\tau)}_0+\int_0^t\!\varphi^{(\tau)}_u\,\ud B^{(\tau)}_u,
\qquad\text{for all }t\in[0,T],
\ee
for some $\bG^{X, (\tau)}$-predictable process $\varphi^{(\tau)}=(\varphi^{(\tau)}_t)_{0\leq t \leq T}$ with $\int_0^T\!(\varphi^{(\tau)}_t)^2\,\ud t<\infty$ $P$-a.s.
\end{proposition}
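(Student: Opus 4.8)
The plan is to follow the localization scheme used in the proof of Proposition \ref{MRP}, with the role of Kusuoka's progressive-enlargement theorem now played by the martingale representation theorem for an initial enlargement by an \emph{independent} random time. Since NA1 holds in $\bG^{X,(\tau)}$, Proposition \ref{arb-F^X, tau} gives $\int_0^T(\mu^{(\tau)}_t/V_t)^2\,\ud t<\infty$ $P$-a.s., so that $\Zhat^{(\tau)}:=\cE\bigl(-\int(\mu^{(\tau)}/V)\,\ud B^{(\tau)}\bigr)$ is a strictly positive continuous $\bG^{X,(\tau)}$-local martingale. Let $\{\tau_n\}_{n\in\N}$ be a localizing sequence for $\Zhat^{(\tau)}$, define the measure $Q^n$ on $\cG^{X,(\tau)}_{T\wedge\tau_n}$ by $\ud Q^n:=\Zhat^{(\tau)}_{T\wedge\tau_n}\,\ud P$, and set $\bG^{X,(\tau),n}:=(\cG^{X,(\tau)}_{t\wedge\tau_n})_{0\leq t\leq T}$ and $\bF^{\What^{(\tau)},n}:=(\cF^{\What^{(\tau)}}_{t\wedge\tau_n})_{0\leq t\leq T}$. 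Exactly as in the proof of Proposition \ref{MRP}, Girsanov's theorem together with $[\What^{(\tau)}]_t=[B^{(\tau)}]_t=t$ shows that $(\What^{(\tau)})^{\tau_n}$ is a $(Q^n,\bG^{X,(\tau),n})$-Brownian motion, in particular a stopped $(Q^n,\bF^{\What^{(\tau)},n})$-Brownian motion, so that $(Q^n,\bF^{\What^{(\tau)},n})$ has the Brownian martingale representation property with respect to $(\What^{(\tau)})^{\tau_n}$.

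The key step is to promote this representation from $\bF^{\What^{(\tau)},n}$ to $\bG^{X,(\tau),n}$. Because $(\What^{(\tau)})^{\tau_n}$ is a $\bG^{X,(\tau),n}$-Brownian motion under $Q^n$ and $\sigma(\tau)\subseteq\cG^{X,(\tau)}_0$ is contained in the time-$0$ $\sigma$-field of $\bG^{X,(\tau),n}$, the process $(\What^{(\tau)})^{\tau_n}$ is $Q^n$-independent of $\tau$ — this is the independence statement referred to in part 2 of Remarks \ref{independence-tau-BM}. Hence $\bF^{\What^{(\tau)},n}$ is $Q^n$-independent of $\tau$, so it is immersed in its initial enlargement by $\tau$ and, moreover, that enlargement inherits the Brownian martingale representation property with respect to $(\What^{(\tau)})^{\tau_n}$; this is Proposition 5.3(i) of \citeN{CallegaroJeanblancZargari11} applied with the constant conditional density $1$. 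By Lemma \ref{filtrations-tau} (after stopping at $\tau_n$) this initial enlargement is precisely $\bG^{X,(\tau),n}$, so every $(Q^n,\bG^{X,(\tau),n})$-local martingale can be written as $\int\varphi\,\ud(\What^{(\tau)})^{\tau_n}$ for some $\bG^{X,(\tau),n}$-predictable $\varphi$ with $\int_0^T\varphi^2_t\,\ud t<\infty$ $P$-a.s.

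Finally, given $L^{(\tau)}\in\Mloc(P,\bG^{X,(\tau)})$, Girsanov's theorem makes $(L^{(\tau)})^{\tau_n}-\int\frac{1}{(\Zhat^{(\tau)})^{\tau_n}}\,\ud[\Zhat^{(\tau)},L^{(\tau)}]^{\tau_n}$ a $(Q^n,\bG^{X,(\tau),n})$-local martingale; representing it via $(\What^{(\tau)})^{\tau_n}$, substituting $(\What^{(\tau)})^{\tau_n}=(B^{(\tau)})^{\tau_n}+\int_0^{\cdot\wedge\tau_n}(\mu^{(\tau)}_u/V_u)\,\ud u$, and observing that the resulting $\bG^{X,(\tau)}$-predictable continuous finite-variation term, being also a $(P,\bG^{X,(\tau)})$-local martingale null at the origin, must vanish, one obtains $(L^{(\tau)})^{\tau_n}=L^{(\tau)}_0+\int\varphi^{(\tau),n}\,\ud(B^{(\tau)})^{\tau_n}$. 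Pasting as in the proof of Proposition \ref{MRP}, the $\bG^{X,(\tau)}$-predictable process $\varphi^{(\tau)}:=\sum_{n\geq1}\indik_{\lsir\tau_{n-1},\tau_n\rsi}\varphi^{(\tau),n}$ then yields \eqref{MRP-F^X,tau-0}; no compensated-jump term appears because $\tau$ is $\cG^{X,(\tau)}_0$-measurable, so $\indik_{\{\tau\leq\cdot\}}$ has trivial $\bG^{X,(\tau)}$-martingale part.

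The main obstacle is the promotion step: one must verify that, after the Girsanov change of measure, $\bG^{X,(\tau),n}$ is indeed the initial enlargement of the Brownian filtration $\bF^{\What^{(\tau)},n}$ by a random time that is independent under $Q^n$, so that the Brownian representation property survives the enlargement. Once this is in place (the independence being obtained, exactly as in part 2 of Remarks \ref{independence-tau-BM}, from the Brownian-motion property of $(\What^{(\tau)})^{\tau_n}$ relative to a filtration whose time-$0$ $\sigma$-field contains $\sigma(\tau)$), the remaining localization and pasting are identical to those in the proof of Proposition \ref{MRP}.
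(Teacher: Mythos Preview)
Your proposal is correct and follows essentially the same approach as the paper: localize via the NA1-induced deflator $\Zhat^{(\tau)}$, change to $Q^n$ by Girsanov so that $(\What^{(\tau)})^{\tau_n}$ becomes a stopped Brownian motion in the stopped filtration, use the $Q^n$-independence of $\tau$ (obtained exactly as in part~2 of Remarks~\ref{independence-tau-BM}) together with Lemma~\ref{filtrations-tau} to identify $\bG^{X,(\tau),n}$ as the initial enlargement of a Brownian filtration by an independent random time, deduce the martingale representation property there, and then paste as in Proposition~\ref{MRP}. Your citation of Proposition~5.3(i) of \citeN{CallegaroJeanblancZargari11} with constant density $1$ makes explicit what the paper states more tersely as ``any filtration initially enlarged with an independent random time is immersed in that enlargement''; your observation that no compensated-jump term can appear because $\sigma(\tau)\subseteq\cG^{X,(\tau)}_0$ is also a helpful clarification.
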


Lemma \ref{candec-F^X,tau} and Proposition \ref{MRP-F^X,tau} allow us to characterize the validity of NFLVR in $\bG^{X,(\tau)}$. Similarly to Proposition \ref{arb-G} and Proposition \ref{stability}, we obtain:

\begin{proposition}
\label{NFLVR-F^X-tau}
Suppose that Conditions \ref{B} and \ref{F} hold. Then:
\begin{itemize} 
\item[(a)] NFLVR holds in the filtration $\bG^{X, (\tau)}$ if and only if NA1 holds and $E[\cE(-\int(\mu^{(\tau)}/V)\,\ud B^{(\tau)})_T]=1$.
\item[(b)] If the model \eqref{S}-\eqref{model-2} satisfies NA1 (resp. NFLVR) in the filtration $\bG^{(\tau)}$, then NA1 (resp. NFLVR) holds in the filtration $\bG^{X, (\tau)}$ as well.
\end{itemize}
\end{proposition}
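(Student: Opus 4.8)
The plan is to derive part (a) from the martingale representation theorem of Proposition~\ref{MRP-F^X,tau} combined with the deflator characterisation of NFLVR in Definition~\ref{def-arb}(ii), and to obtain part (b) by transcribing verbatim the filtration-shrinkage argument of Proposition~\ref{stability} to the present initially enlarged setting.

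For part (a), I would first recall that, by Proposition~\ref{arb-F^X, tau}, NA1 in $\bG^{X,(\tau)}$ is equivalent to $\int_0^T(\mu^{(\tau)}_t/V_t)^2\,\ud t<\infty$ $P$-a.s., so that, under NA1, the stochastic exponential $\Zhat^{(\tau)}:=\cE\bigl(-\int(\mu^{(\tau)}/V)\,\ud B^{(\tau)}\bigr)$ is a well-defined, strictly positive, continuous $\bG^{X,(\tau)}$-local martingale with $\Zhat^{(\tau)}_0=1$. An integration by parts based on the canonical decomposition~\eqref{candec-F^X,tau-1} (using $[B^{(\tau)}]_t=t$) shows that $S\Zhat^{(\tau)}$ has vanishing drift and is therefore a $\bG^{X,(\tau)}$-local martingale; hence $\Zhat^{(\tau)}$ is always a $\bG^{X,(\tau)}$-local martingale deflator, and if moreover $E[\Zhat^{(\tau)}_T]=1$ it is a true martingale, thus a $\bG^{X,(\tau)}$-martingale deflator, so NFLVR holds. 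Conversely, if NFLVR holds then NA1 holds (so $\Zhat^{(\tau)}$ is defined) and a $\bG^{X,(\tau)}$-martingale deflator $L$ exists; by Proposition~\ref{MRP-F^X,tau} every $\bG^{X,(\tau)}$-local martingale is a continuous stochastic integral with respect to $B^{(\tau)}$, so $L$ is continuous and, being strictly positive with $L_0=1$, admits the multiplicative form $L=\cE\bigl(\int\tilde\varphi^{(\tau)}\,\ud B^{(\tau)}\bigr)$ for some $\bG^{X,(\tau)}$-predictable $\tilde\varphi^{(\tau)}$. Imposing that the drift of $SL$ vanishes forces $\tilde\varphi^{(\tau)}=-\mu^{(\tau)}/V$, whence $L=\Zhat^{(\tau)}$; since $L$ is a true martingale with $E[L_T]=1$, this gives $E[\Zhat^{(\tau)}_T]=1$, completing the equivalence.

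For part (b), I would follow the proof of Proposition~\ref{stability} with $\bG$ replaced by $\bG^{(\tau)}$ and $\bG^X$ by $\bG^{X,(\tau)}$. Assuming NA1 in $\bG^{(\tau)}$, Proposition~\ref{arb-Gtau} gives $\int_0^T(\tilde{\mu}^{(\tau)}_t/V_t)^2\,\ud t<\infty$ $P$-a.s., and the crux is to show $L(S,\bG^{X,(\tau)})\subseteq L(S,\bG^{(\tau)})$. Given $h\in L(S,\bG^{X,(\tau)})$, continuity of $S=S_0\cE(X)$ together with~\eqref{candec-F^X,tau-1} yields $\int_0^T(h_tS_tV_t)^2\,\ud t<\infty$ $P$-a.s., and the Cauchy--Schwarz inequality gives $\int_0^T|h_tS_t\tilde{\mu}^{(\tau)}_t|\,\ud t\leq\bigl(\int_0^T(h_tS_tV_t)^2\,\ud t\bigr)^{1/2}\bigl(\int_0^T(\tilde{\mu}^{(\tau)}_t/V_t)^2\,\ud t\bigr)^{1/2}<\infty$ $P$-a.s.; since the two decompositions~\eqref{candec-Gtau} and~\eqref{candec-F^X,tau-1} share the same diffusion coefficient $V$, this shows $hS\in L(X,\bG^{(\tau)})$ by~\eqref{candec-Gtau}, i.e.\ $h\in L(S,\bG^{(\tau)})$. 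Consequently every outcome of a $\bG^{X,(\tau)}$-trading strategy is also the outcome of a $\bG^{(\tau)}$-trading strategy, so the non-existence of arbitrages of the first kind (resp.\ of free lunches with vanishing risk) in $\bG^{(\tau)}$ passes down to $\bG^{X,(\tau)}$.

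I expect the main obstacle to be the uniqueness step in part (a): one has to know that every $\bG^{X,(\tau)}$-local martingale deflator is \emph{continuous}, which is precisely what Proposition~\ref{MRP-F^X,tau} supplies — the crucial feature being that $\sigma(\tau)\subseteq\cG^{X,(\tau)}_0$, so that no compensated-jump term (unlike the $M^{\bG^X}$ term of Corollary~\ref{NFLVR-G^X}) survives in the representation — and then read off the identity $L=\Zhat^{(\tau)}$ from the drift of $SL$. Part~(b) is essentially routine once this inclusion of integrands is established, because the localisation estimates behind $L(S,\bG^{X,(\tau)})\subseteq L(S,\bG^{(\tau)})$ carry over without change from Proposition~\ref{stability}, thanks to $\bG^{X,(\tau)}\subseteq\bG^{(\tau)}$ and the common coefficient $V$.
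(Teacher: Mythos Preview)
Your proposal is correct and follows essentially the same approach as the paper, which merely states that the result is obtained ``similarly to Proposition~\ref{arb-G} and Proposition~\ref{stability}'' after invoking Lemma~\ref{candec-F^X,tau} and Proposition~\ref{MRP-F^X,tau}. Your explicit use of the martingale representation in $\bG^{X,(\tau)}$ to force $N^{(\tau)}\equiv 0$ (and hence the uniqueness $L=\Zhat^{(\tau)}$) in part~(a), and the verbatim transcription of the Cauchy--Schwarz inclusion $L(S,\bG^{X,(\tau)})\subseteq L(S,\bG^{(\tau)})$ in part~(b), are precisely the steps the paper has in mind.
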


\begin{remark}
As a consequence of Proposition \ref{MRP-F^X,tau}, the financial market $(S, \bG^{X, (\tau)} )$ is complete (up to a random initial endowment). More precisely, we have: any bounded $\cG^{X, (\tau)}_T$-measu\-rable non-negative random variable $H$ admits the representation $H= f^H(\tau) + \int_0^T\!h_t^H \,\ud S_t$ $P$-a.s., for some Borel-measurable function $f^H:\R_+\rightarrow\R_+$ and for $h^H\in L(S,\bG^{X,(\tau)})$.
\end{remark}

\begin{remark}[\emph{Connection between the Brownian motions $B$ and $B^{(\tau)}$}]
Making use of the canonical decompositions of the process $X$ in the progressively enlarged filtration $\bG^X$ and the initially enlarged filtration $\bG^{X,(\tau)}$ given in Lemma \ref{candec-G^X} and Lemma \ref{candec-F^X,tau}, respectively, we obtain a link between the $\bG^{X}$-Brownian motion $B$ and  the $\bG^{X, (\tau)}$-Brownian motion $B^{(\tau)}$:
\begin{equation}
\label{connection-BM-G^X-G^X,tau}
B_t = B^{(\tau)}_t + \int_0^t\!\frac{1}{V_s} \left( ^p \tilde{\mu}^{(\tau)}_s - \, ^p \tilde{\mu}_s \right) \ds, \qquad \text{for all } t \in [0, T].
\end{equation}
Here, $^p \tilde{\mu}^{(\tau)}$ is the $\bG^{X, (\tau)}$-predictable projection of the process $\tilde{\mu}^{(\tau)}$  defined in \eqref{candec-Gtau-1} and $^p \tilde{\mu}_s$ is the $\bG^{X}$-predictable projection of the process $\tilde{\mu}$ defined in \eqref{candec-G-1}. Note that the process $\frac{1}{V} \left( ^p \tilde{\mu}^{(\tau)} - \, ^p \tilde{\mu}\right)$  is $\bG^{X, (\tau)}$-predictable since $V$ and $^p \tilde{\mu}^{(\tau)}$ are $\bG^{X, (\tau)}$-predictable and $^p \tilde{\mu}$ is $\bG^{X}$-predictable (recall that $\bG^{X} \subset \bG^{X, (\tau)}$).
\end{remark}

\section{Conclusion and further developments}

In this paper we have studied a class of asset price models with a change point, imposing only minimal assumptions on the random time and on the driving Brownian motions. We characterize the model by its properties of martingale representation, completeness or incompleteness and two notions of arbitrage. The analysis of the model is undertaken in all the filtrations that naturally arise in this framework.

The model can be generalized to incorporate multiple change points by enlargement of filtrations with a sequence of random times and to incorporate discontinuous semimartingales as the driving processes by techniques for jump processes. If a martingale representation property holds in a filtration generated by the two underlying semimartingales (this is true e.g. for L\'evy processes or marked point processes), then martingale representation theorems can also be obtained for various filtrations related to the model. Similarly, our results can also be generalized to the case where the coefficients $\mu$ and $\sigma$ in \eqref{model-2} are only $\bF$-progressively measurable processes. The extension of other results from this paper, in particular the characterization of the no-arbitrage properties, depends on the specific class of semimartingales which are used as driving processes.

Besides the obvious application to the modeling of financial asset prices, the results of the present paper can also be of interest in view of credit risk, interest rate modeling and energy markets, where sudden changes in the dynamics of the underlying processes are naturally observed.
A further line of research concerns stochastic control problems under incomplete information, either in the manner of utility maximization like in, for example, \citeN{BjorkDavisLanden10}, \citeN{Lakner95}, 
and \citeN{PhamQuenez01},  or extending a method lately introduced by \citeN{KaratzasLi11}.

\bibliographystyle{chicago}
\bibliography{references_ChangePoint}

\end{document}